\documentclass[reprint,aps,pra,floatfix,superscriptaddress]{revtex4-2}
\usepackage{amsmath,amssymb,bm}
\usepackage{graphicx}
\usepackage{subcaption}
\usepackage{booktabs}
\usepackage{multirow}
\usepackage{siunitx}
\usepackage{hyperref}
\usepackage{microtype}
\usepackage{amsthm}
\usepackage[section]{placeins}
\usepackage{bibunits}
\usepackage{xcolor}
\usepackage[version=4]{mhchem}
\defaultbibliographystyle{unsrt}
\newcommand{\R}{\mathbb{R}}

\newtheorem{assumption}{Assumption}
\newtheorem{lemma}{Lemma}
\newtheorem{proposition}{Proposition}
\newtheorem{theorem}{Theorem}
\newtheorem{corollary}{Corollary}
\newtheorem{remark}{Remark}
\graphicspath{{./}}

\begin{document}
\begin{bibunit}

\title{Photonic Exponential Approximation via Cascaded TFLN Microring Resonators toward Softmax}
\author{Hyoseok Park}
\affiliation{Department of Physics, Chungnam National University, Daejeon 34134, Republic of Korea}
\author{Yeonsang Park}
\email{yeonsang.park@cnu.ac.kr}
\thanks{Corresponding author}
\affiliation{Department of Physics, Chungnam National University, Daejeon 34134, Republic of Korea}
\date{\today}
\begin{abstract}
The rapid growth of large-scale AI models has intensified energy consumption and data-movement challenges in modern datacenters.
Photonic accelerators offer a promising path by executing the linear matrix multiplications of transformer inference at high throughput and low energy.
However, the softmax attention layer---which requires element-wise exponentiation followed by normalization---still relies on electronic post-processing, creating an electro-optic conversion bottleneck that negates much of the potential photonic advantage.

We present a cascaded micro-ring resonator (MRR) architecture that synthesizes the per-channel exponential function required by softmax, $e^{x_n-\max(x)}$, over a finite interval with tunable worst-case relative error.
A control signal detunes each ring via an electro-optic mechanism;
a weak probe at fixed frequency experiences Lorentzian transmission, and cascading $N$ identical stages yields a multiplicative transfer function whose logarithm is approximately linear.

We derive mapping rules, depth-scaling estimates, and a minimax fitting formulation, and validate the framework with three-dimensional FDTD simulations of X-cut thin-film lithium niobate (TFLN) add-drop micro-ring resonators.
Direct multi-ring FDTD validation extends to a five-ring cascade and confirms agreement with theory primarily over the upper operating range; deeper cascades and higher quality factors are assessed analytically.
The cascade implements the per-channel exponential block---the key missing nonlinearity for photonic softmax.
We further present a WDM-parallel chip architecture with closed-loop PI feedback that completes the full softmax---exponentiation, summation, and normalization---on a single photonic chip without per-channel normalization circuitry.
\end{abstract}

\maketitle

\section{Introduction}
Transformer inference is often limited by power and memory traffic, motivating optical accelerators that exploit parallel propagation and multiplexing~\cite{vaswani2017attention,dao2022flashattention,shen2017deep,feldmann2021parallel,miscuglio2020photonic,li2023photonictensorcore}.
Recent perspective articles also discuss data-center power consumption as one motivation for optical computing~\cite{savage2025light,wang2024siliconroadmap}.
While linear operators are comparatively amenable to photonic implementation~\cite{shen2017deep,feldmann2021parallel,harris2018linear}, the softmax function used in attention layers requires an exponential mapping together with global normalization---both difficult to realize in passive photonic circuits, where transmission is fundamentally bounded by unity.
Parallel digital-hardware studies treat the exponential/softmax stage as a bottleneck and propose dedicated approximations~\cite{softermax2021,du2023softmaxhw,sole2023,base2softmax2022,tcasii_teas2023,tvlsi_softmax2022,dash2025softonic,zhan2024oe_photonicsoftmax,photonix2025_transformerchip}.
Many integrated-photonic classifier demonstrations still rely on electronic post-processing for the final nonlinear readout~\cite{hu2025lnsoftmax_electronic}; the resulting electro-optic conversion overhead can negate the throughput and energy benefits of the photonic front-end.
Notably, the SOFTONIC architecture~\cite{dash2025softonic} explicitly argues that ``the inability of MRRs and MZMs to handle SMA's exponential and division functions'' necessitates alternative approaches based on microdisk modulators and polynomial approximation, achieving 89.7\% accuracy with a third-degree Chebyshev polynomial.
Here we challenge this premise: we show that a passive Lorentzian cascade of microring resonators can be tuned so that its logarithm is approximately linear over a finite interval, enabling exponential-function synthesis with sub-2\% worst-case error---an order of magnitude more accurate than SOFTONIC's polynomial approach---while remaining compatible with integrated microring platforms~\cite{bogaerts2012silicon,heebner2008optical,zhang2023microringonn,li2022starlight,jang2022mrractivation}.
We term this cascade block an \emph{approximate exponential function} (AEF) unit.
We further propose a WDM-parallel architecture with a single PI feedback loop that realizes the complete softmax function---including summation and normalization---without per-channel electronic processing.

We extend the theoretical framework with three-dimensional FDTD simulations of a single X-cut TFLN add-drop micro-ring resonator. The simulated device parameters---quality factor, free spectral range, and electro-optic sensitivity---calibrate the cascade design parameters, bridging analytical fitting and physically realizable hardware.
Two operating regimes emerge from this calibration: an FDTD-characterized regime with moderate drop-port depth ($D_{\max}\approx 0.36$), where the analytic error stays below ${\sim}5\%$ for $N\le 7$ but the power budget limits practical cascades to $N\le 5$; and a projected high-$Q$ regime ($D_{\max}\ge 0.95$), enabling deeper cascades ($N\le 30$) with sub-percent error.
Cascade performance is predicted analytically and validated by a five-ring cascade 3D~FDTD simulation (Sec.~\ref{sec:device}).

The paper is organized as follows: Section~II presents the mapping, transfer model, and depth-design rules; Section~III provides numerical fits and validation; Section~IV describes the single-ring TFLN device design and FDTD validation; Section~V assesses physical feasibility including voltage requirements, insertion loss, and energy efficiency; Section~VI discusses implementation scope, platform comparisons, and limits; and Section~VII concludes.

\section{Model and Design Framework}

\begin{figure*}[t]
\centering
\includegraphics[width=\textwidth]{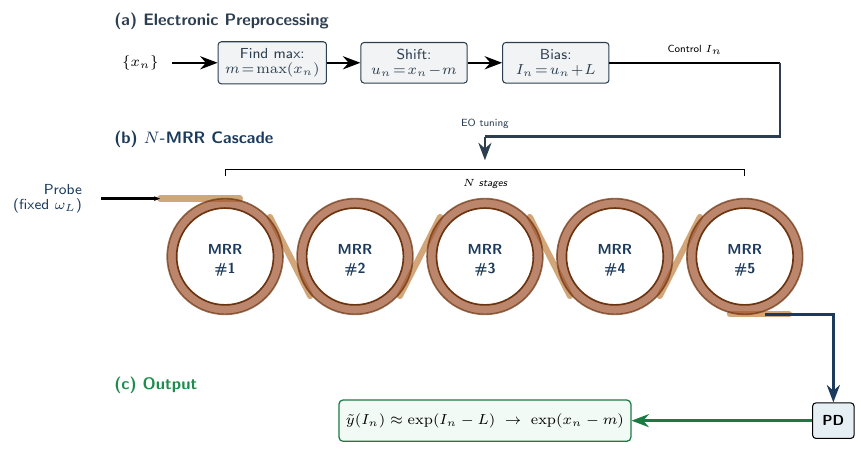}
\caption{Overview of the control--probe add-drop cascade $N$-MRR exponential block. (a) Electronic preprocessing maps an arbitrary input sequence $\{x_n\}$ to a nonnegative control signal via $m=\max_n x_n$, $u_n=x_n-m$, and $I_n=u_n+L$ with $L=m-\min_n x_n$. (b) The control signal $I_n$ induces resonance shifts in a cascade of $N$ rings, while a weak fixed-frequency probe propagates through the serial add-drop cascade (the drop output of each ring feeds the next stage), experiencing multiplicative transmission. (c) After photodetection, the block implements $y(I_n)\approx \exp(I_n-L)\approx \exp(x_n-m)$, i.e., the normalized exponential used in softmax.}
\label{fig:system}
\end{figure*}
\noindent\textbf{Target mapping.}
Let $x=(x_1,\dots,x_K)\in\R^K$ be an arbitrary real-valued sequence (or vector).
Directly generating $\exp(x_n)$ as a \emph{passive} optical transmission is impossible in general because $\exp(x)$ grows beyond unity while a passive transmission satisfies $0<T\le 1$~\cite{saleh2007fundamentals}.
However, for softmax,

\begin{equation}
\mathrm{softmax}(x)_n=\frac{e^{x_n}}{\sum_j e^{x_j}},
\end{equation}
a common shift cancels:

\begin{equation}
\frac{e^{x_n+c}}{\sum_j e^{x_j+c}}=\frac{e^{x_n}}{\sum_j e^{x_j}}
\qquad(\forall c\in\R).
\label{eq:shiftinv}
\end{equation}
Thus it suffices to generate

\begin{equation}
e^{x_n-m},\qquad m\equiv \max_j x_j,
\label{eq:need}
\end{equation}
since the global factor $e^m$ cancels.

To ensure a nonnegative control-signal amplitude, define

\begin{equation}
u_n\equiv x_n-m\le 0,\qquad
L\equiv -\min_n u_n = m-\min_n x_n\ge 0,
\label{eq:defL}
\end{equation}
and map each scalar to a nonnegative control-signal amplitude

\begin{equation}
I_n\equiv u_n+L \in [0,L].
\label{eq:Ik}
\end{equation}
Then

\begin{equation}
e^{x_n-m}=e^{u_n}=e^{I_n-L}.
\label{eq:targetNormalized}
\end{equation}
Hence the optical design task is to realize, for $I\in[0,L]$,

\begin{equation}
f(I)=e^{I-L}\in[e^{-L},1].
\label{eq:target}
\end{equation}

\noindent\textbf{Control--probe transfer.}
Consider a weak probe at fixed angular frequency $\omega_L$.
For the $k$th ring, let $\omega_{0,k}$ denote its resonance frequency and $\Gamma>0$ its loaded half-width at half maximum (HWHM).
Define the detuning

\begin{equation}
\Delta\omega_k \equiv \omega_L-\omega_{0,k}.
\end{equation}
Near resonance, the normalized Lorentzian transmission is modeled as~\cite{bogaerts2012silicon,heebner2008optical}

\begin{equation}
T_k(\Delta\omega_k)=\frac{1}{1+\left(\frac{\Delta\omega_k}{\Gamma}\right)^2}.
\label{eq:Lorentz}
\end{equation}
In a control--probe architecture, a nonnegative control-signal amplitude $I\ge 0$ shifts the ring resonance. Here $I$ denotes a generic control amplitude: for optical-pump operation it maps to optical intensity, while for EO operation it maps to electrical control level (e.g., voltage).
Across many physical mechanisms (optical pump via Kerr/XPM, EO drive via Pockels effect, thermal, carrier tuning), the shift can be linearized on a working range~\cite{almeida2004alloptical,xu2005micrometre,bogaerts2012silicon,padmaraju2014thermal,hu2023carrierlinearity,resonancetrim2021}:

\begin{equation}
\omega_{0,k}(I)=\omega_{0,k}^{(0)}+\eta I,
\label{eq:shift}
\end{equation}
where $\omega_{0,k}^{(0)}$ is the cold-cavity resonance and $\eta$ is the control-to-resonance sensitivity.
In practice, the control channel can be optical or electrical (optical pump, EO/Pockels drive, thermal, or carrier tuning); a quantitative EO feasibility example is given in the Discussion.
With $\Delta\omega_{0,k}\equiv \omega_L-\omega_{0,k}^{(0)}$, the control-dependent detuning becomes

\begin{equation}
\Delta\omega_k(I)=\Delta\omega_{0,k}-\eta I.
\end{equation}
Define dimensionless parameters

\begin{equation}
a_k\equiv \frac{\Delta\omega_{0,k}}{\Gamma},
\qquad
b\equiv -\frac{\eta}{\Gamma}.
\label{eq:abdef}
\end{equation}
Then Eq.~(\ref{eq:Lorentz}) yields the control-to-probe transfer of a single ring,

\begin{equation}
T_k(I)=\frac{1}{1+(a_k+bI)^2}.
\label{eq:Tk}
\end{equation}
Physical meaning: $a_k$ is a \emph{static detuning} in linewidth units (set by heater/carrier tuning/fabrication),
and $|b|$ is the \emph{normalized sensitivity magnitude} (linewidths of resonance shift per unit control-signal amplitude); the sign convention is absorbed into the detuning expression.
For ``same-material/same-geometry'' rings, $b$ is often common, while $a_k$ can be tuned per ring.

\noindent\emph{Sign convention.} Simultaneously flipping $(a_k,b)\mapsto(-a_k,-b)$ leaves $T_k(I)$ unchanged, so we may take $b>0$ without loss of generality.


Let $N$ rings be cascaded in a serial add-drop topology: $T_k(I)$ denotes
the add-to-drop transmission of ring~$k$, and the drop output of ring~$k$
feeds the add (input bus) port of ring~$k\!+\!1$.
Assuming the probe is sufficiently weak so the control channel dominates the resonance shift, the normalized probe output is the product

\begin{equation}
y(I)\equiv \frac{P_\mathrm{out}^{(\mathrm{probe})}(I)}{P_\mathrm{in}^{(\mathrm{probe})}}
= \prod_{k=1}^{N} T_k(I)
=
\prod_{k=1}^{N}\frac{1}{1+(a_k+bI)^2}.
\label{eq:y}
\end{equation}
To focus on the \emph{shape} of the approximation, we allow a global scale factor $C>0$:

\begin{equation}
\tilde y(I)\equiv C\,y(I).
\label{eq:scale}
\end{equation}
In softmax, $p_n = Ce^{I_n-L}/\sum_j Ce^{I_j-L}$, so $C$ cancels between numerator and denominator and is physically inessential; nevertheless it is convenient for error analysis.
For a fixed $(N,b,\{a_k\})$, the optimal $C$ for the minimax log-error in Eq.~(\ref{eq:Einf}) can be written in closed form.
Let $g(I)\equiv \ln y(I)-(I-L)$ on $[0,L]$.
Then the minimax-optimal shift is $\ln C^\star = -( \max_I g(I) + \min_I g(I))/2$, yielding
$E_\infty = (\max_I g(I) - \min_I g(I))/2$.

Taking logarithms,

\begin{equation}
\ln \tilde y(I)=\ln C -\sum_{k=1}^{N}\ln\!\big(1+(a_k+bI)^2\big).
\label{eq:logy}
\end{equation}
The target $\ln f(I)=I-L$ is linear; hence exponential approximation is equivalent to the \emph{log-linearization} goal

\begin{equation}
\ln \tilde y(I)\approx I-L
\qquad \text{uniformly on } I\in[0,L].
\label{eq:logfit}
\end{equation}

\noindent\textbf{Error metric.}
Define the worst-case \emph{log-error} on $[0,L]$:

\begin{equation}
E_\infty \equiv \sup_{I\in[0,L]}\Big|\ln \tilde y(I)-(I-L)\Big|.
\label{eq:Einf}
\end{equation}
If $E_\infty\le \varepsilon_{\log}$, then for all $I\in[0,L]$,

\begin{equation}
e^{-\varepsilon_{\log}}\le \frac{\tilde y(I)}{f(I)} \le e^{\varepsilon_{\log}}
\Rightarrow
\left|\frac{\tilde y(I)}{f(I)}-1\right|\le e^{\varepsilon_{\log}}-1.
\label{eq:rel_from_log}
\end{equation}
Thus achieving a prescribed worst-case relative error $\varepsilon$ is guaranteed by

\begin{equation}
E_\infty \le \varepsilon_{\log}\equiv \ln(1+\varepsilon)\approx \varepsilon.
\label{eq:epslog}
\end{equation}
\noindent\textbf{Depth scaling.}
We derive depth-related constraints and design rules for a prescribed approximation tolerance.

\noindent\textbf{Necessary slope condition.}
Differentiate Eq.~(\ref{eq:logy}):

\begin{equation}
\frac{d}{dI}\ln y(I)=
-\sum_{k=1}^N
\frac{2b(a_k+bI)}{1+(a_k+bI)^2}.
\end{equation}
Since $|2u/(1+u^2)|\le 1$ for all real $u$,

\begin{equation}
\left|\frac{d}{dI}\ln y(I)\right|\le N|b|.
\label{eq:slopebound}
\end{equation}
The target $\ln f(I)=I-L$ has constant slope $+1$, so a necessary condition to track it is

\begin{equation}
N|b| \gtrsim 1.
\label{eq:Nb}
\end{equation}
\textbf{Near-optimal parameterization.}
The full design problem can be written as a minimax fit in the log domain~\cite{cheney1966approximation}:

\begin{equation}
\begin{gathered}
\min_{a_1,\dots,a_N,\ \ln C}\;\sup_{I\in[0,L]}\,|r(I)|,\\
r(I)\equiv \ln C-\sum_{k=1}^{N}\ln\!\big(1+(a_k+bI)^2\big)-(I-L).
\end{gathered}
\label{eq:minimax}
\end{equation}
This objective is permutation-invariant in the $a_k$'s (ring index $k$).
In practice (and in numerical experiments reported below), the optimizer frequently collapses to a permutation-symmetric solution

\begin{equation}
a_1=\cdots=a_N\equiv a,
\label{eq:identical}
\end{equation}
reducing the design to two parameters $(a,b)$ (plus $C$).
With Eq.~(\ref{eq:identical}),

\begin{equation}
\tilde y(I)=C\,y(I)=C\left[\frac{1}{1+(a+bI)^2}\right]^N.
\end{equation}
A robust initialization is obtained by placing the midpoint of the interval on the Lorentzian half-maximum flank and matching the slope:

\begin{equation}
a+b\frac{L}{2}\approx -1,
\qquad
N b \approx 1.
\label{eq:flankrules}
\end{equation}
These two equations already yield a good design; a small (two-parameter) refinement can then enforce the desired worst-case tolerance.

\textbf{Local expansion and depth scaling.}
A Taylor expansion of the log-domain residual around the flank-centered point $I_0=L/2$ (with $a+bI_0=-1$ and $Nb=1$) shows that the quadratic term vanishes identically, leaving a leading cubic residual $r(\delta)\sim \delta^3/(6N^2)$.
Over $I\in[0,L]$, this implies $E_\infty\sim L^3/N^2$, so that achieving a prescribed tolerance $\varepsilon_{\log}$ requires $N\propto L^{3/2}/\sqrt{\varepsilon_{\log}}$, which explains the scaling used in Eq.~(\ref{eq:Nestimate}).
The full derivation is provided in Supplementary Sec.~S0; an intuitive local-expansion summary appears in Sec.~S1.

\textbf{Practical engineering estimate.}
Given $L$ and a target worst-case relative error $\varepsilon$, define $\varepsilon_{\log}=\ln(1+\varepsilon)$.
A heuristic engineering estimate (not a rigorous bound) that matched our percent-level numerical designs is
\begin{equation}
\boxed{
N \ \approx\
\left\lceil
\max\left(\frac{1}{b_{\max}},\ \kappa \frac{L^{3/2}}{\sqrt{\varepsilon_{\log}}}\right)
\right\rceil,
}
\label{eq:Nestimate}
\end{equation}
where $b_{\max}$ is the physically achievable sensitivity bound and $\kappa\simeq 0.07$ for the identical-detuning flank design with a minimax refinement.
After choosing $N$, set $b=\min(b_{\max},1/N)$ and $a=-1-bL/2$ as initialization, then refine $(a,b)$ by a two-parameter minimax fit on $[0,L]$.

A heuristic conservative screening bound $N \ge \lceil(L^2/4 + 1/(2b^2))/\ln(1+\varepsilon)\rceil$ (derived via the same local-expansion argument; see Supplementary Sec.~S1) provides a quick upper estimate but is not a rigorous guarantee.


\section{Numerical Fits and Validation}
We validate the analytical framework with minimax numerical fits and sampled robustness checks.
Figure~\ref{fig:cascade_fit} shows the fitted approximation quality at $L=8$:
the top (linear) panel plots $N=1,3,5,7$ over $I\in[0,20]$, the middle (log) panel compares $N=5,10,20,30$ on $I\in[0,8]$, and the bottom panel shows the pointwise relative error with the characteristic Chebyshev equioscillation pattern.

\begin{figure}[!t]
\centering
\includegraphics[width=\columnwidth,height=0.75\textheight,keepaspectratio]{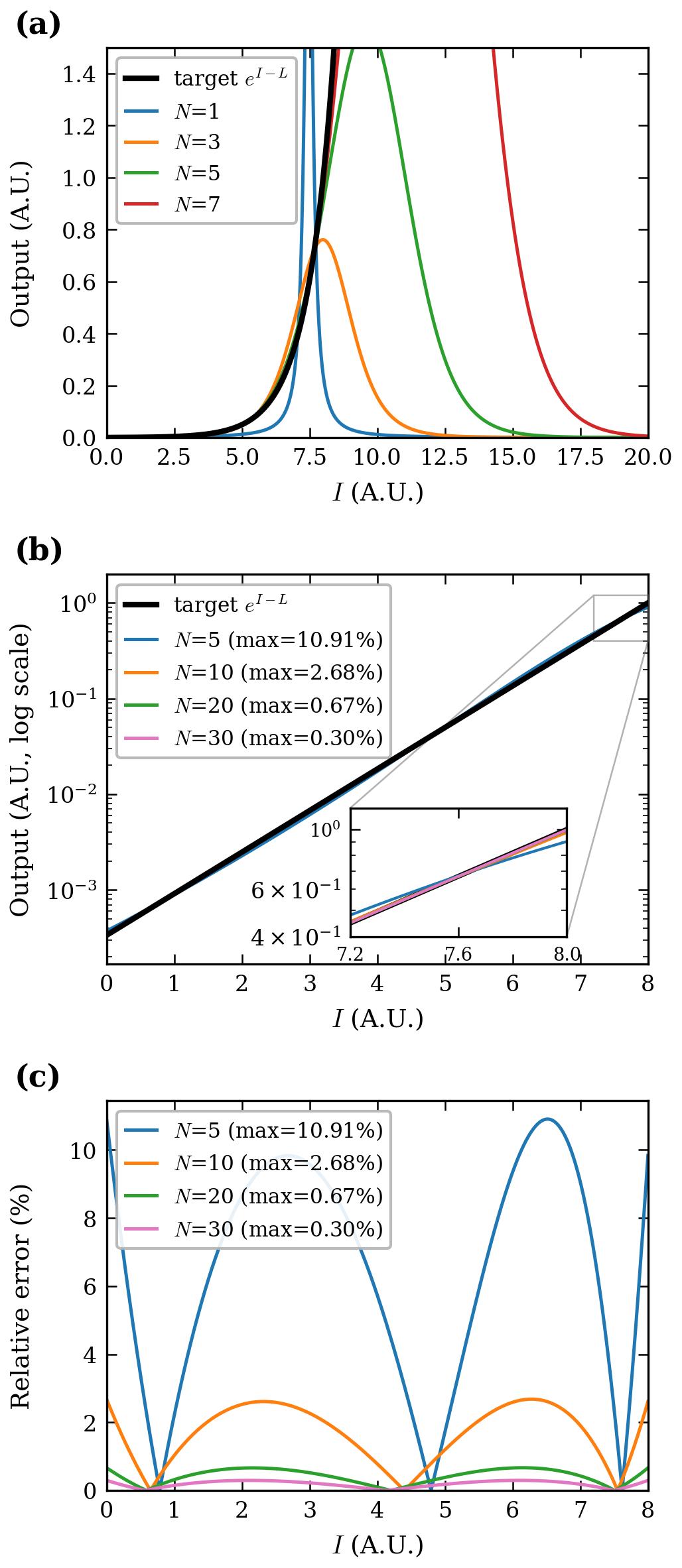}
\caption{Minimax cascade fits at $L = 8$.
(a)~Linear scale: shallow cascades ($N = 1, 3, 5, 7$) over
$I \in [0, 20]$.  The target $e^{I-L}$ (black) is progressively
better matched as $N$ increases.
(b)~Log scale: depth comparison ($N = 5, 10, 20, 30$) on
$I \in [0, 8]$.  Inset zooms into $I \in [6, 8]$ showing convergence.
(c)~Pointwise relative error showing the Chebyshev
equioscillation pattern characteristic of minimax optimality.}
\label{fig:cascade_fit}
\end{figure}

We fit identical-detuning cascades (Eq.~\ref{eq:identical}) on $I\in[0,L]$ and compare several depths using a minimax criterion.

Table~\ref{tab:Ncompare} makes the accuracy--depth trade-off explicit at $L=8$.
A worked input-to-output example demonstrating the mapping from an arbitrary input sequence $x=[-3.2, 1.2, 4.8, -0.9]$ through the cascade is provided in Supplementary Sec.~S2.
The example shows that the $N=10$ cascade keeps the worst-case relative error below $2.7\%$ across all channels.

\begin{table}[tbp]
\caption{Depth comparison for $L=8$ using fitted $\tilde y(I)=C[1+(a+bI)^2]^{-N}$ (same fitting pipeline for all $N$).}
\label{tab:Ncompare}
\begin{ruledtabular}
\begin{tabular}{rcccc}
$N$ & $a$ & $b$ & max rel. err. & mean rel. err. \\
\midrule
5  & $-2.0789$ & $0.21658$ & $10.9\%$ & $6.43\%$ \\
10 & $-1.4588$ & $0.10202$ & $2.68\%$ & $1.65\%$ \\
20 & $-1.2135$ & $0.05025$ & $0.67\%$ & $0.42\%$ \\
30 & $-1.1392$ & $0.03341$ & $0.30\%$ & $0.19\%$ \\
\end{tabular}
\end{ruledtabular}
\end{table}

\noindent\textbf{Empirical calibration.}
We calibrate the effective logit range $L_\mathrm{eff}$ from autoregressive Transformers (distilgpt2/gpt2)~\cite{vaswani2017attention,radford2019gpt2,hf_distilgpt2_card,karpathy_tinyshakespeare,gutenberg_pride_prejudice} at context length 128, finding $L_{\mathrm{eff},0.999} \approx 7$--$9$ at the 50th--90th percentiles (Supplementary Sec.~S2).
A clipping threshold $t^* = -12$ preserves p99 softmax accuracy below $0.1\%$.
Full protocol details, clipping-sweep tables/plots, and per-run statistics are provided in Supplementary Sec.~S3.

A synthetic design-space map (Supplementary Table~\ref{tab:S-LLMrange}) shows that near $L \approx 8$, moderate depth ($N \ge 10$) reaches few-percent error, whereas $L \gtrsim 12$ requires deeper cascades.
All fits follow the same pipeline: minimize the worst-case log-error on a uniform grid, initialize from the flank rules in Eq.~(\ref{eq:flankrules}), perform multi-start global search, and apply bounded local refinement; implementation details and scripts are provided in a public repository~\cite{eer_aef_repo} (commit: \texttt{585e695}).

\FloatBarrier

\section{TFLN Single-Ring Device Design and FDTD Validation}
\label{sec:device}

\subsection{Waveguide and ring geometry}
\label{sec:waveguide}

The device is based on an X-cut thin-film lithium niobate (\ce{LiNbO3})
on insulator wafer with a \SI{600}{\nano\meter}-thick \ce{LiNbO3} film
on \ce{SiO2}.  A \SI{500}{\nano\meter}-deep rib etch defines a
\SI{1.4}{\micro\meter}-wide single-mode waveguide with a
\SI{100}{\nano\meter} unetched slab (Fig.~\ref{fig:cross_section}).
Lumerical MODE simulations yield $n_\mathrm{eff} = 1.903$ and
$n_g = 2.24$ at $\lambda = \SI{1550}{\nano\meter}$ for the
fundamental TE$_0$ mode.

The ring resonator ($R = \SI{20}{\micro\meter}$,
$L_\mathrm{ring} = \SI{125.7}{\micro\meter}$) is configured as an
add-drop resonator with \SI{100}{\nano\meter} coupling gaps
(Fig.~\ref{fig:topview}).  The FDTD-measured free spectral range is
$\mathrm{FSR} = \SI{8.29}{\nano\meter}$ ($n_g \approx 2.30$), slightly
above the MODE value due to bend-induced dispersion.

\begin{figure}[htbp]
\centering
\includegraphics[width=\columnwidth]{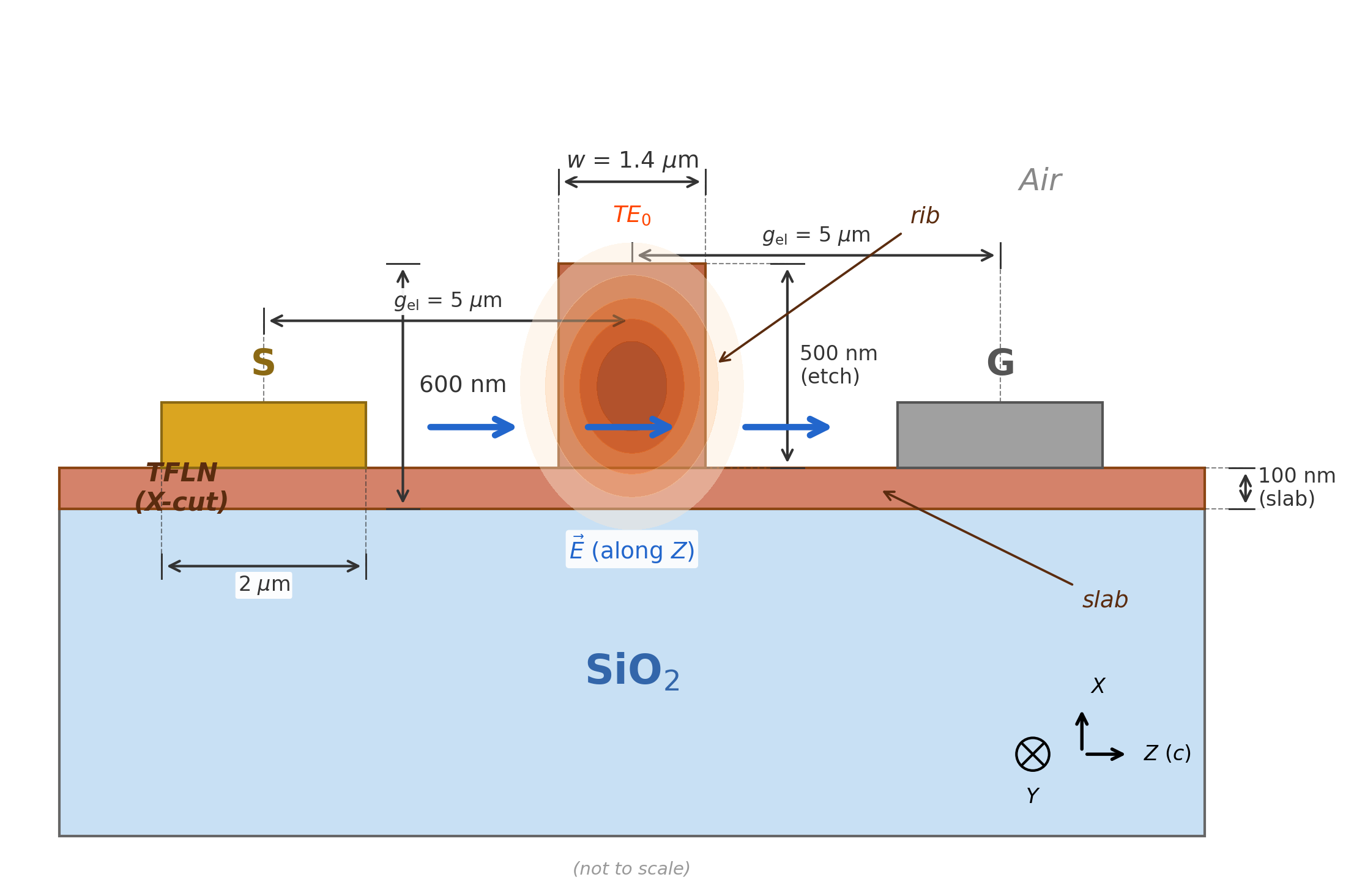}
\caption{Cross-section of the X-cut TFLN rib waveguide on a \ce{SiO2}
substrate.  The \SI{600}{\nano\meter} \ce{LiNbO3} film is etched
\SI{500}{\nano\meter} to form a \SI{1.4}{\micro\meter}-wide single-mode
rib waveguide.  Lateral signal~(S) and ground~(G) electrode positions are
indicated; electrode design details are discussed in
Sec.~\ref{sec:electrode}.}
\label{fig:cross_section}
\end{figure}

Table~\ref{tab:device_params} summarizes the waveguide and ring
parameters.

\begin{table}[htbp]
\centering
\caption{Waveguide and ring parameters of the X-cut TFLN micro-ring
resonator.  Electro-optic electrode parameters are listed separately in
Table~\ref{tab:eo_params}.}
\label{tab:device_params}
\begin{ruledtabular}
\begin{tabular}{llcl}
Parameter & Symbol & Value & Unit \\
\hline
Total TFLN thickness   & $t_\mathrm{TFLN}$       & 600   & \si{\nano\meter} \\
Etch depth             & $t_\mathrm{etch}$       & 500   & \si{\nano\meter} \\
Slab thickness         & $t_\mathrm{slab}$       & 100   & \si{\nano\meter} \\
Waveguide width        & $w$                      & 1.4   & \si{\micro\meter} \\
Bend radius            & $R$                      & 20    & \si{\micro\meter} \\
Coupling gap           & $g$                      & 100   & \si{\nano\meter} \\
Circumference          & $L_\mathrm{ring}$        & 125.7 & \si{\micro\meter} \\
Free spectral range    & FSR                      & 8.29  & \si{\nano\meter} \\
Effective index (TE$_0$) & $n_\mathrm{eff}$       & 1.903 & --- \\
Group index (TE$_0$)   & $n_g$                    & 2.24  & --- \\
Extraordinary index    & $n_e$                    & 2.138 & --- \\
\end{tabular}
\end{ruledtabular}
\end{table}

\subsection{3D FDTD Methodology}
\label{sec:fdtd_method}

The ring resonator response is simulated using Lumerical 3D FDTD with
conformal variant~1 meshing.  A broadband TE$_0$ mode source
(\SIrange{1530}{1570}{\nano\meter}) is injected into the input bus
waveguide, and through- and drop-port spectra are recorded.
A ``z-refined 3-fix'' meshing strategy ensures convergence in the
thin-film geometry~\cite{zhu2021aop}; detailed simulation setup is
provided in Supplementary Sec.~S4 (Table~\ref{tab:fdtd_params}).

\begin{figure}[htbp]
\centering
\includegraphics[width=\columnwidth]{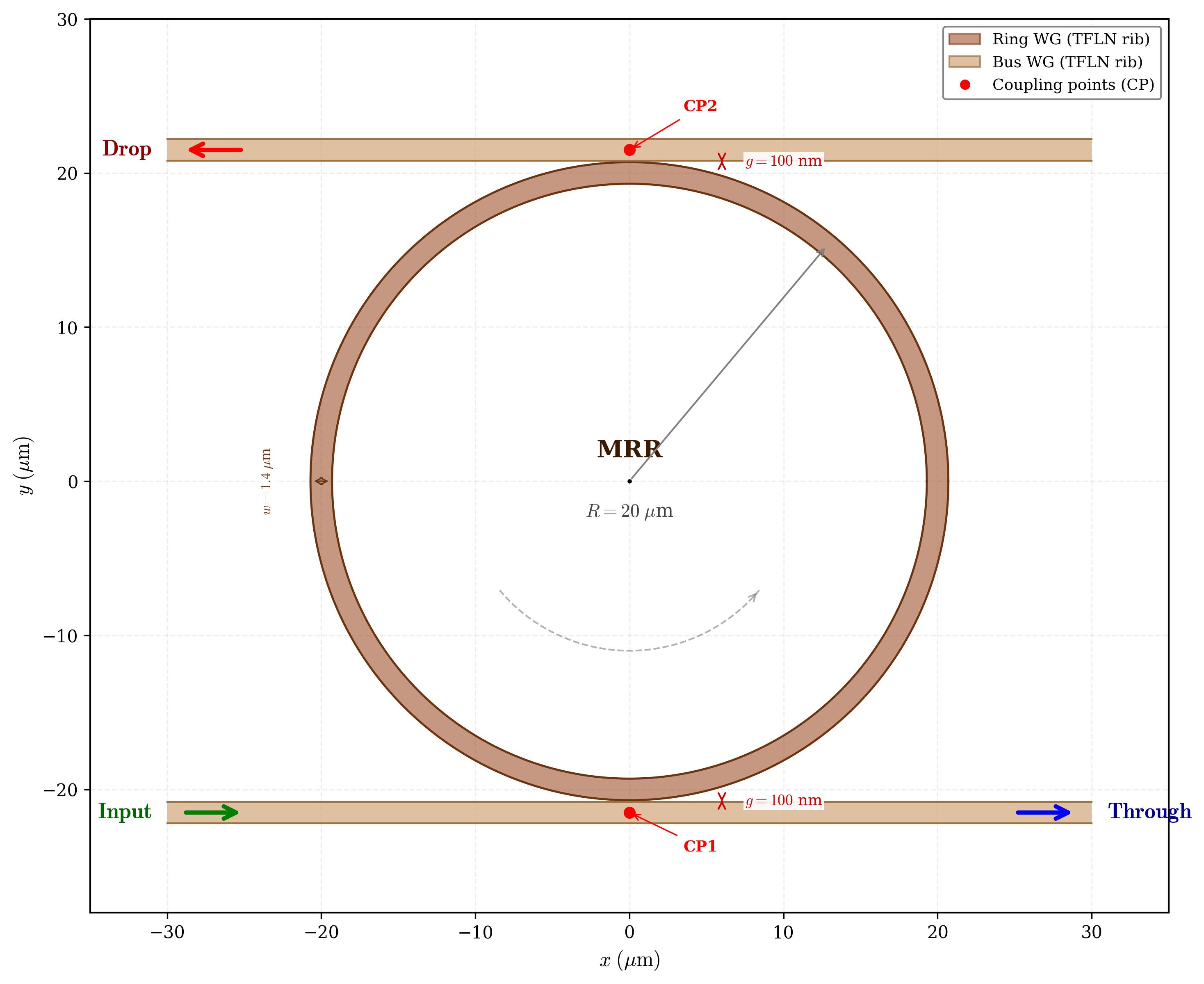}
\caption{Top view of the single add-drop micro-ring resonator used in
the 3D FDTD simulation.  The ring waveguide ($R = \SI{20}{\micro\meter}$,
$w = \SI{1.4}{\micro\meter}$) is evanescently coupled to input and drop
bus waveguides through \SI{100}{\nano\meter} gaps at coupling points
CP1 and CP2.}
\label{fig:topview}
\end{figure}

\subsection{Single-Ring Add-Drop Results}
\label{sec:single_ring}

Figure~\ref{fig:single_ring_spectrum} shows the through- and drop-port
spectra from 3D FDTD.  Five resonances are resolved across
\SIrange{1530}{1570}{\nano\meter} with
$\mathrm{FSR} = \SI{8.29}{\nano\meter}$ ($n_g \approx 2.30$).

\begin{figure}[htbp]
\centering
\includegraphics[width=\columnwidth]{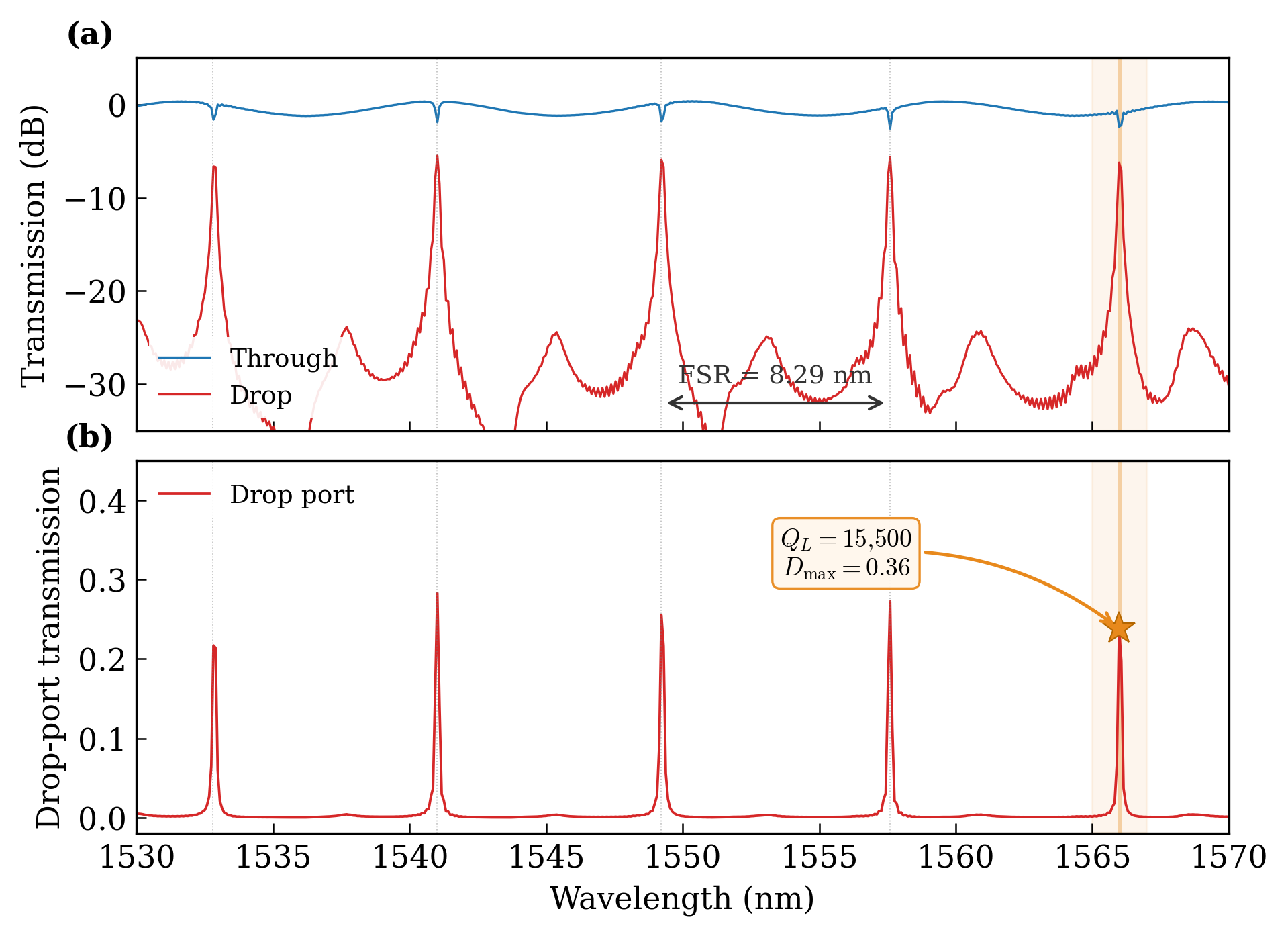}
\caption{Simulated through-port (blue) and drop-port (red) transmission
spectra of the single add-drop micro-ring resonator from 3D FDTD.
Top: logarithmic scale; bottom: linear scale.  Five resonances are
visible with $\mathrm{FSR} \approx \SI{8.29}{\nano\meter}$.}
\label{fig:single_ring_spectrum}
\end{figure}

Lorentzian fitting of the drop-port peaks yields
$Q_L = 10{,}300$--$15{,}500$, with the best resonance at
$\lambda = \SI{1566}{\nano\meter}$ reaching $Q_L = 15{,}500$
($\mathrm{FWHM} = \SI{101}{\pico\meter}$,
$D_\mathrm{max} = 0.360$, $\SI{-4.4}{\decibel}$).
The through-port extinction ratio is
\SIrange{1.6}{2.6}{\decibel}, and the five-resonance mean is
$Q_L = 12{,}500 \pm 1{,}800$ ($D_\mathrm{max} = 0.29$--$0.36$).
CMT analysis of the best resonance gives
$Q_i = Q_L/(1 - \sqrt{D_\mathrm{max}}) = 15{,}500/0.400 \approx 38{,}800$,
confirming that the \SI{500}{\nano\meter} etch provides sufficient
confinement and that the \SI{100}{\nano\meter} gap places the ring
in the coupling-limited regime.
The cascade analysis below adopts the best-case FDTD calibration
($Q_L = 15{,}500$, $D_\mathrm{max} = 0.360$); using the five-resonance
mean would increase required voltages by ${\sim}24\%$
(see Table~\ref{tab:cascade} caption).

The simulation time of \SI{50}{\pico\second} exceeds the loaded
photon lifetime $\tau_L = Q_L\lambda_0/(2\pi c) \approx \SI{12.7}{\pico\second}$
by ${\sim}4\times$, but the intrinsic lifetime
$\tau_i \approx \SI{32}{\pico\second}$ is comparable, so the extracted
$Q_i$ may be slightly conservative.
An independent eigenmode (FDE) analysis of the same cross-section
at $R = \SI{20}{\micro\meter}$---using a $300 \times 300$ mesh
($\Delta y \approx \SI{10}{\nano\meter}$, $5\times$ finer than the
FDTD vertical grid)---yields
$Q_\mathrm{rad+leak} = 2.4 \times 10^7$; including bulk \ce{LiNbO3}
absorption ($\Gamma = 0.89$) gives a theoretical
$Q_i > 10^7$~\cite{zhu2021aop, hu2024integratedeo, zhang2017highQ_optica,
zhuang2023wetetch, zhu2024highQ_racetrack, gao2022ultrahighQ},
confirming that the gap between the numerical $Q_i$ and published
values ($>10^6$) originates from mesh discretization
(Supplementary~S4.5, Table~\ref{tab:theoretical_Qi}).
In the CMT framework, $D_\mathrm{max} = [2\kappa/(2\kappa+\gamma)]^2$
increases as $Q_i$ rises; at the present coupling gap,
increasing $Q_i$ to $10^6$ would raise $D_\mathrm{max}$ from $0.36$ to
${\sim}0.95$ and $Q_L$ from $15{,}500$ to ${\sim}25{,}200$.

Figure~\ref{fig:fdtd_aef}(a) shows a Lorentzian fit to the best
drop-port resonance at $\lambda = \SI{1566}{\nano\meter}$, validating
the cascade model (Eq.~\ref{eq:Lorentz}).
Figure~\ref{fig:fdtd_aef}(b) demonstrates that cascading $N$ copies
of this FDTD-extracted Lorentzian reproduces the target exponential
$e^{I-L}$ with increasing fidelity as $N$ grows.

To validate the cascade prediction directly, a five-ring cascade
3D~FDTD simulation was performed using Tidy3D~\cite{tidy3d2024};
the full simulation notebook is publicly available~\cite{tidy3d2024}.
The $|E|^2$ field at $\lambda = \SI{1549}{\nano\meter}$
[Fig.~\ref{fig:fdtd_aef}(d)] confirms resonant excitation across
all five rings.  Mapping the drop-port spectrum onto the control
variable $I$ yields 11~data points within the AEF operating range
[Fig.~\ref{fig:fdtd_aef}(e,\,f)], with the FDTD transmission
closely tracking the $N = 5$ theoretical curve near $I \approx L = 8$.

\begin{figure*}[t]
\centering
\includegraphics[width=\textwidth]{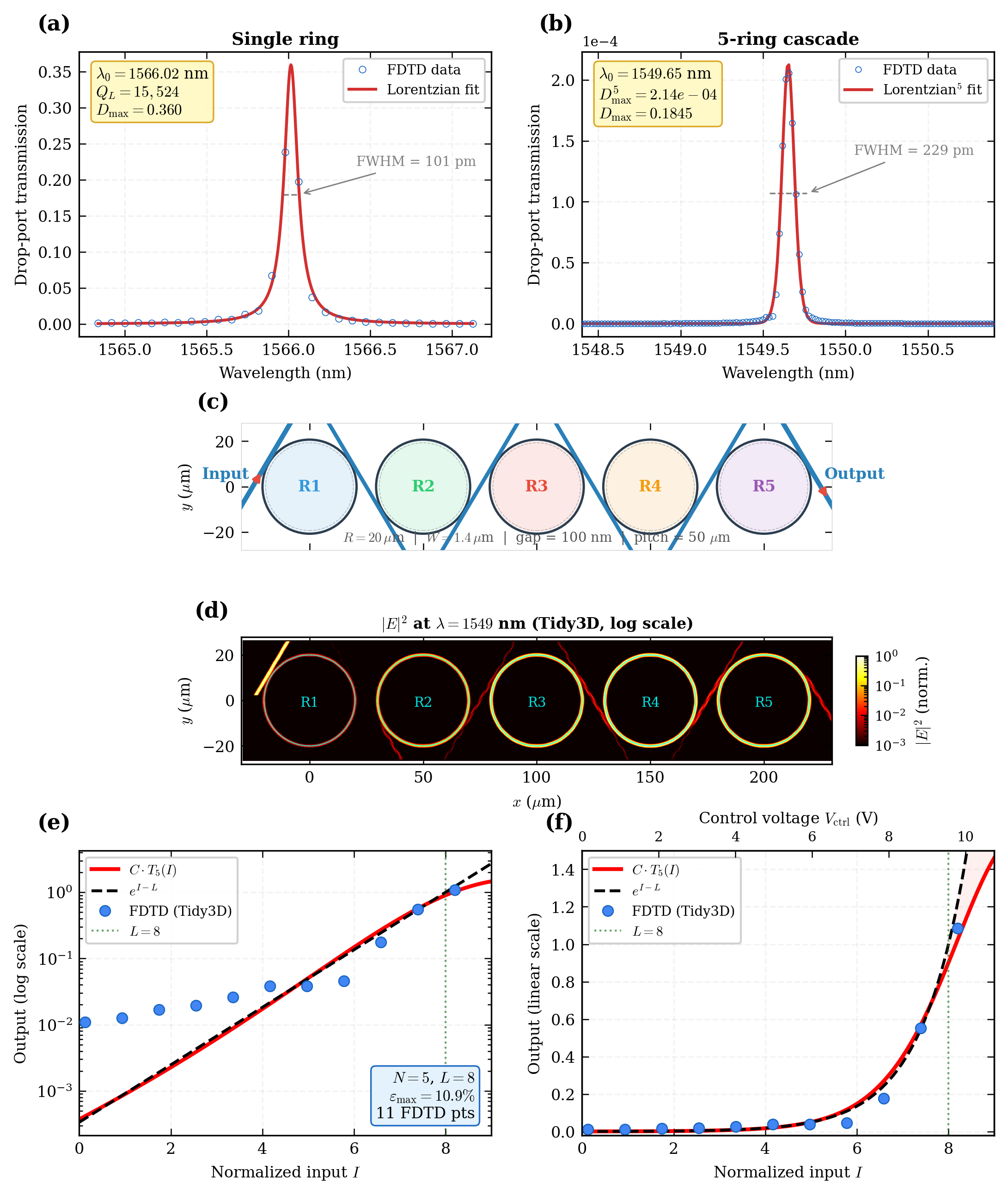}
\caption{FDTD-based AEF validation.
\textbf{(a)}~Lorentzian fit to the drop-port resonance at
$\lambda = \SI{1566}{\nano\meter}$ from 3D FDTD (Lumerical)
($Q_L = 15{,}500$, $D_{\max} = 0.360$, $b_V = \SI{0.180}{\per\volt}$).
\textbf{(b)}~Five-ring cascade drop-port spectrum near $\lambda_0 \approx 1550\;\mathrm{nm}$ with Lorentzian$^5$ fit (red curve), confirming the expected $T^5$ line shape.
\textbf{(c)}~Five-ring cascade MRR layout with diagonal zigzag bus waveguides.
\textbf{(d)}~$|E|^2$ field profile at $\lambda = \SI{1549}{\nano\meter}$
from a five-ring cascade 3D FDTD simulation (Tidy3D~\cite{tidy3d2024}).
\textbf{(e,\,f)}~AEF validation of the five-ring cascade on log~(e) and linear~(f) scales
with 11~spectral FDTD data points.}
\label{fig:fdtd_aef}
\end{figure*}

\subsection{X-cut electrode design and EO parameters}
\label{sec:electrode}

We employ lateral signal--ground (S--G) arc electrodes on the slab
surface alongside the ring waveguide (Fig.~\ref{fig:eo_electrode}).
In the X-cut orientation, the crystal Z-axis is at $45^\circ$ from the
horizontal in the substrate plane, giving a lateral-field projection
proportional to $\cos(\theta - 45^\circ)$ at azimuthal angle~$\theta$.
The $\cos(\theta - 45^\circ) = 0$ boundaries at $\theta = 135^\circ$
and $315^\circ$ naturally separate the coupling regions from the
electrode regions.  Each ring carries a full semicircular arc electrode
on the side opposite to its coupling points, engaging the large
$r_{33} = \SI{30.9}{\pico\meter\per\volt}$ Pockels
coefficient~\cite{zhu2021aop, hu2024integratedeo}.
The effective EO fill factor follows from integrating
$|\cos(\theta - 45^\circ)|$ over the semicircle:
\begin{equation}
  f_\mathrm{EO}
  = \frac{1}{\pi}
  \approx 0.318
  \label{eq:fill_factor}
\end{equation}
(see Supplementary Sec.~S4 for derivation).
The electrode gap is $g_\mathrm{el} = \SI{5}{\micro\meter}$
($d_\mathrm{eff} \approx \SI{2.5}{\micro\meter}$), and the
electro-optic overlap integral is $\Gamma_\mathrm{EO} = 0.7$.
Table~\ref{tab:eo_params} lists the electrode parameters.

\begin{figure}[htbp]
\centering
\includegraphics[width=\columnwidth]{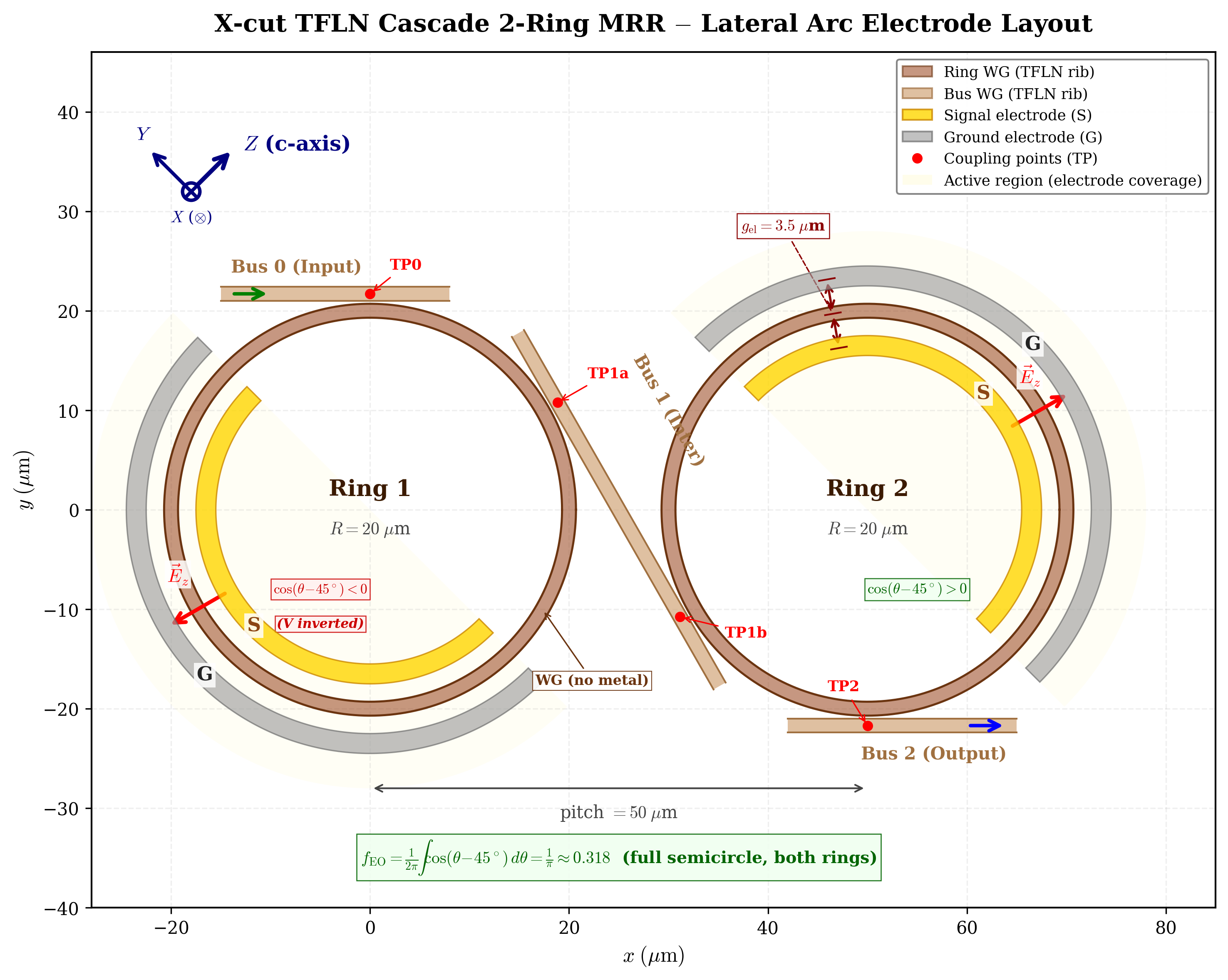}
\caption{Illustrative two-ring cascade layout showing the lateral S--G arc
electrode placement on X-cut TFLN (the cascade design extends to $N$ rings; this two-ring
example clarifies the electrode geometry).  The crystal Z-axis is oriented at $45^\circ$
from the horizontal in the substrate plane.  The
$\cos(\theta-45^\circ)=0$ boundaries at $\theta=135^\circ$ and
$315^\circ$ naturally separate the bus-waveguide coupling regions from
the electrode semicircles: each ring carries a full semicircular arc
electrode on the side opposite to its coupling points.  The resulting
effective EO fill factor is $f_\mathrm{EO} = 1/\pi \approx 0.318$.}
\label{fig:eo_electrode}
\end{figure}

\begin{table}[htbp]
\centering
\caption{Electro-optic electrode parameters for the X-cut TFLN
micro-ring with lateral S--G arc electrodes.}
\label{tab:eo_params}
\begin{ruledtabular}
\begin{tabular}{llcl}
Parameter & Symbol & Value & Unit \\
\hline
Crystal orientation    & ---                      & X-cut & --- \\
EO coefficient         & $r_{33}$                 & 30.9  & \si{\pico\meter\per\volt} \\
EO fill factor         & $f_\mathrm{EO}$          & $1/\pi \approx 0.318$ & --- \\
EO overlap factor      & $\Gamma_\mathrm{EO}$     & 0.7   & --- \\
Electrode gap          & $g_\mathrm{el}$          & 5     & \si{\micro\meter} \\
Effective electrode distance & $d_\mathrm{eff}$   & 2.5   & \si{\micro\meter} \\
\end{tabular}
\end{ruledtabular}
\end{table}

\subsection{FDTD-Calibrated \texorpdfstring{$b_V$}{b\_V} and Cascade Optimization}
\label{sec:cascade_opt}

From the device parameters in Tables~\ref{tab:device_params}
and~\ref{tab:eo_params} and the FDTD-calibrated $n_g \approx 2.30$,
the effective normalized voltage sensitivity is (Supplementary Sec.~S4; here $d\lambda/dV = 28.5\;\mathrm{pm/V}$ is the straight-section value and $f_\mathrm{EO}$ accounts for partial electrode coverage of the ring circumference):
\begin{equation}
  b_V
  = \frac{2\,Q\,(d\lambda/dV)}{\lambda_0}\,f_\mathrm{EO}
  \approx \SI{0.182}{\per\volt}
  \label{eq:bV_device}
\end{equation}
at $Q_L = 15{,}500$.
This estimate relies on a first-order electrostatic model
($d_\mathrm{eff} \approx 2.5\;\mu\mathrm{m}$,
$\Gamma_\mathrm{EO} = 0.7$); a $\pm 30\%$ variation in $b_V$
would shift the cascade depth by one to two rings at constant
$\varepsilon_\mathrm{max}$ (Table~\ref{tab:cascade}), leaving the
qualitative design conclusions unchanged.
With the cascade framework of Sec.~II
(Eqs.~\ref{eq:y}--\ref{eq:Einf}), the $N$-ring drop-port
transmission $\tilde{y}(I) = C\,[1 + (a + bI)^2]^{-N}$
approximates $e^{I-L}$ over $I \in [0, L]$, with $(a, b)$
optimized by minimax fitting for each $N$.

Table~\ref{tab:cascade} presents the optimization results for the
standard dynamic range $L = 8$ ($e^8 \approx 2981$,
\SI{34.7}{\decibel}).

\begin{table}[htbp]
\centering
\caption{Cascade optimization results for $L = 8$.  The bias voltage
$V_\mathrm{bias} = |a|/b_V$ sets the DC offset, and
$V_\mathrm{ctrl} = bL/b_V$ is the maximum control voltage at
$I = L$.  Voltages computed with
$b_V = \SI{0.182}{\per\volt}$ (X-cut arc electrode,
FDTD-calibrated best resonance $Q_L = 15{,}500$,
$n_g = 2.30$).  The mean FDTD quality factor across five resonances
is $Q_L = 12{,}500 \pm 1{,}800$; using the mean would increase
voltages by ${\sim}24\%$.}
\label{tab:cascade}
\begin{ruledtabular}
\begin{tabular}{rcccccc}
$N$ & $a$ & $b$ & $E_\infty$ & $\varepsilon_\mathrm{max}$ (\%)
    & $V_\mathrm{bias}$ (\si{\volt}) & $V_\mathrm{ctrl}$ (\si{\volt}) \\
\hline
 5  & $-2.0789$ & 0.21658 & 0.1035 & 10.91 & 11.4 &  9.5 \\
10  & $-1.4588$ & 0.10202 & 0.0265 &  2.68 &  8.0 &  4.5 \\
12  & $-1.3731$ & 0.08450 & 0.0184 &  1.86 &  7.5 &  3.7 \\
20  & $-1.2136$ & 0.05025 & 0.0067 &  0.67 &  6.7 &  2.2 \\
25  & $-1.1685$ & 0.04013 & 0.0043 &  0.43 &  6.4 &  1.8 \\
30  & $-1.141$ & 0.03340 & 0.0030 & 0.30 & 6.3 & 1.5 \\
32  & $-1.1301$ & 0.03131 & 0.0026 &  0.26 &  6.2 &  1.4 \\
\end{tabular}
\end{ruledtabular}
\footnotetext{The complete cascade optimization results for all $N$ values are listed in Supplementary Table~\ref{tab:S-cascade}.}
\end{table}

The approximation quality across different cascade depths is shown in
Fig.~\ref{fig:cascade_fit} (Sec.~III).
Key thresholds (e.g., $\varepsilon < 2\%$ at $N \geq 12$,
$\varepsilon < 1\%$ at $N \geq 17$) and the complete optimization
results are listed in Supplementary Sec.~S4.



\section{Physical Feasibility}
\label{sec:feasibility}

Having established the cascade approximation theory (Sec.~II)
and the FDTD-calibrated device parameters (Sec.~\ref{sec:device}),
we now assess the physical feasibility of the proposed architecture
in terms of voltage requirements, insertion loss, and energy efficiency.

\subsection{Electro-optic voltage requirements}
\label{sec:voltage}

For the primary target of $\varepsilon < 2\%$ ($N = 12$), minimax optimization gives $a = -1.373$, $b = 0.0845$.
With the FDTD-calibrated $Q_L = 15{,}500$ ($b_V = 0.182\;\mathrm{V}^{-1}$), the required voltages are
\begin{align}
  V_{\mathrm{bias}} &= \frac{|a|}{b_V}
    = \frac{1.373}{0.182} = 7.5\;\mathrm{V},
  \label{eq:Vbias} \\
  V_{\mathrm{ctrl,max}} &= \frac{bL}{b_V}
    = \frac{0.0845 \times 8}{0.182} = 3.7\;\mathrm{V}.
  \label{eq:Vctrl}
\end{align}
Since $b_V \propto Q$, voltage scales inversely with quality factor:
\begin{equation}
  V_{\mathrm{ctrl}} = \frac{bL}{b_V}
    = \frac{bL\,\lambda_0}{2Q\,|d\lambda_0/dV|}.
  \label{eq:Vscaling_feas}
\end{equation}
CMOS-compatible control voltages ($V_{\mathrm{ctrl}} < 3.3\;\mathrm{V}$) are achievable at $N \geq 14$ with $Q_L = 15{,}500$; at the design point $N = 30$ ($\varepsilon_\mathrm{max} = 0.30\%$), $V_\mathrm{ctrl} = 1.47\;\mathrm{V}$.

\subsection{Power budget: two-regime analysis}
\label{sec:power_budget}

The on-resonance cascade transmission $D_\mathrm{max}^N$ is the dominant contribution to total insertion loss.
Table~\ref{tab:power_budget} presents two regimes: the FDTD-characterized regime ($D_\mathrm{max} = 0.36$) and the fabricated high-$Q$ regime ($D_\mathrm{max} = 0.95$, achievable with $Q_i > 10^6$ and gap-optimized coupling).

\begin{table}[t]
\centering
\caption{Two-regime power budget for the MRR cascade.
$P_\mathrm{out}$ assumes per-channel input
$P_\mathrm{in,ch} = 100\;\mu\mathrm{W}$
(from a shared $P_\mathrm{in,tot} = 1\;\mathrm{mW}$ CW laser split
across $M = 10$ parallel channels via a $1\!\times\!M$ splitter,
or equivalently multiplexed as $d$ WDM channels sharing a single cascade) and accounts only for the ideal
on-resonance cascade transmission $D_\mathrm{max}^N$ (upper bound);
additional inter-ring coupling loss
($\eta_\mathrm{coupling} \approx 0.9$ per stage,
${\sim}0.46\;\mathrm{dB}$/stage)
and off-resonance propagation loss
($0.08$--$0.25\;\mathrm{dB}$/stage) are analyzed
separately in Sec.~\ref{sec:loss_budget}.}
\label{tab:power_budget}
\begin{ruledtabular}
\begin{tabular}{cccccccc}
 & $D_\mathrm{max}$ & $N$ & $D_\mathrm{max}^N$ & (dB)
       & $P_\mathrm{out}$ & $\varepsilon_\mathrm{max}$ \\
\hline
\multirow{3}{*}{\shortstack{I\\(FDTD)}}
  & 0.36 &  3 & 0.0467  & $-13.3$ & $4.67\;\mu\mathrm{W}$  & ${\sim}15\%$ \\
  & 0.36 &  5 & 0.00605 & $-22.2$ & $0.61\;\mu\mathrm{W}$  & $10.9\%$     \\
  & 0.36 &  7 & $7.84 \times 10^{-4}$ & $-31.1$ & $78\;\mathrm{nW}$  & ${\sim}5\%$  \\
\hline
\multirow{3}{*}{\shortstack{II\\(high-$Q$)}}
  & 0.95 & 10 & 0.599   & $-2.2$  & $59.9\;\mu\mathrm{W}$  & $2.68\%$     \\
  & 0.95 & 20 & 0.358   & $-4.5$  & $35.8\;\mu\mathrm{W}$  & $0.67\%$     \\
  & 0.95 & 30 & 0.215   & $-6.7$  & $21.5\;\mu\mathrm{W}$  & ${\sim}0.30\%$ \\
\end{tabular}
\end{ruledtabular}
\vspace{-2pt}
{\footnotesize Regime~I: FDTD-characterized ($Q_i = 38{,}800$).
Regime~II: fabricated high-$Q$ ($Q_i > 10^6$).
$P_\mathrm{out}$ scales linearly with $P_\mathrm{in,ch}$.}
\end{table}

In the FDTD-characterized regime, $D_\mathrm{max} = 0.36$ limits practical cascades to $N \leq 5$: at $N = 5$ the output is $0.61\;\mu\mathrm{W}$ ($-22.2\;\mathrm{dB}$) with $\varepsilon = 10.9\%$, suited for proof-of-concept validation.
In the fabricated high-$Q$ regime ($D_\mathrm{max} \geq 0.95$), deep cascades become practical: $N = 30$ yields $P_\mathrm{out} = 21.5\;\mu\mathrm{W}$ ($-6.7\;\mathrm{dB}$) with $\varepsilon_\mathrm{max} \approx 0.30\%$.
The transition to fabricated high-$Q$ devices is therefore critical for achieving both high accuracy and sufficient output power.

\subsection{Feasibility outlook}
\label{sec:highQ_projections}

Published TFLN micro-ring resonators achieve $Q_i \ge 10^6$--$10^8$
using optimized fabrication~\cite{zhang2017highQ_optica,gao2022ultrahighQ,zhuang2023wetetch,zhu2024highQ_racetrack}.
At $Q_i = 10^6$ with the present coupling geometry, CMT predicts
$D_\mathrm{max} \approx 0.95$ and $Q_L \approx 25{,}200$
(Supplementary Sec.~S5, Tables~S4--S7),
enabling deep cascades ($N \le 30$) with sub-percent error.
The literature values provide strong independent evidence that intrinsic quality factors in the projected range are physically achievable in TFLN---albeit with wider waveguides and larger ring radii than the present design.
Transferring comparable sidewall quality to our geometry ($R = \SI{20}{\micro\meter}$, $W = \SI{1.4}{\micro\meter}$) is an open fabrication challenge; the projections should be read as design targets contingent on achieving it.

\label{sec:loss_budget}
The total insertion loss comprises on-resonance cascade transmission $D_\mathrm{max}^N$, inter-ring coupling loss (${\sim}0.46\;\mathrm{dB}$/stage for the present diagonal-bus layout), off-resonance propagation loss ($0.08$--$0.25\;\mathrm{dB}$/stage), and fiber-to-chip coupling ($1.5$--$3.0\;\mathrm{dB}$).
For the fabricated high-$Q$ regime ($N = 30$), the total ranges from ${\sim}13\;\mathrm{dB}$ (optimized layout) to ${\sim}24\;\mathrm{dB}$ (current geometry); see Supplementary Sec.~S6 for detailed scenarios.

\subsection{Energy comparison}
\label{sec:energy}

For $N = 30$ X-cut TFLN micro-ring resonators in the fabricated
high-$Q$ regime ($Q_L \approx 25{,}200$ at $Q_i = 10^6$;
Supplementary Sec.~S5), the three energy components are EO tuning
($E_\mathrm{EO} = 0.22$~pJ), amortized laser
($E_\mathrm{laser} = 0.07$~pJ, shared across $M = 10$ channels),
and photodetector ($E_\mathrm{PD} = 0.50$~pJ), yielding
$E_\mathrm{photonic} = 0.79$~pJ
(derivations in Supplementary Sec.~S7).
Including thermal stabilization for $N = 30$ rings
($0.15$--$0.60\;\mathrm{pJ}$; Supplementary Sec.~S7),
the total rises to $0.94$--$1.39\;\mathrm{pJ}$.

Table~\ref{tab:exp_comparison} compares the photonic cascade with
digital implementations.
Including thermal stabilization ($0.94$--$1.39\;\mathrm{pJ}$), the
advantage over INT8 ($2.3\;\mathrm{pJ}$) is $1.7$--$2.4\times$,
while operating at $10\;\mathrm{GHz}$ bandwidth and
$58\times$ lower than digital FP32 ($46\;\mathrm{pJ}$).
At fabricated $Q \ge 30{,}000$, $E_\mathrm{EO}$ drops to
$0.16\;\mathrm{pJ}$ and $E_\mathrm{total} \approx 0.73\;\mathrm{pJ}$
(excluding thermal; Supplementary Table~\ref{tab:energy_vs_Q}),
recovering a $3.2\times$ advantage over INT8.
Since $E_\mathrm{EO} \propto 1/Q^2$, improving $Q$ beyond
${\sim}30{,}000$ yields diminishing energy returns but continues to
relax CMOS driver voltage requirements.

\begin{table}[t]
\centering
\caption{Energy per exponential operation: single-channel comparison.}
\label{tab:exp_comparison}
\begin{ruledtabular}
\begin{tabular}{lccc}
Implementation & $E/\mathrm{op}$ (pJ) & Bandwidth & Notes \\
\hline
Digital FP32 (Taylor)                & ${\sim}46$  & 1\,GHz   & 10 FP MACs \\
Digital INT8 (Taylor)                & ${\sim}2.3$ & 1\,GHz   & 10 INT MACs \\
\textbf{Photonic MRR ($N\!=\!30$)}   & $\mathbf{0.94}$--$\mathbf{1.39}$ & \textbf{10\,GHz} & \textbf{Analog$^\dagger$} \\
\end{tabular}
\end{ruledtabular}
{\footnotesize $^\dagger$0.79~pJ excluding thermal; 0.94--1.39~pJ including thermal. Self-consistent with fabricated high-$Q$ regime ($Q_L = 25{,}200$); see Supplementary Sec.~S7.}
\end{table}



\section{Discussion}
\textbf{Practical design procedure.}
For a given input sequence $x=(x_1,\dots,x_K)$, the design proceeds as follows:
\begin{enumerate}
\item Compute $m=\max_n x_n$, $u_n=x_n-m$, and $L=-\min_n u_n$.
\item Map to nonnegative control-signal amplitudes: $I_n=u_n+L \in[0,L]$.
\item Choose tolerance $\varepsilon$ and set $\varepsilon_{\log}=\ln(1+\varepsilon)$.
\item Select a physically feasible $b_{\max}$ and estimate $N$ using Eq.~(\ref{eq:Nestimate}).
\item Initialize $b=\min(b_{\max},1/N)$ and $a=-1-bL/2$, then refine $(a,b)$ by a two-parameter minimax fit if required.
\item The optical block yields $\tilde y(I_n)\approx e^{x_n-m}$, and softmax weights follow as

\begin{equation}
p_n=\frac{\tilde y(I_n)}{\sum_j \tilde y(I_j)}.
\end{equation}
\end{enumerate}

\textbf{Scope and limits.}
The approximation is for a finite interval $I\in[0,L]$, where $L$ is determined by the input batch via Eq.~(\ref{eq:defL}).
In practice, one designs for a worst-case $L$ expected in operation (or retunes $a$ and rescales the control signal to adapt $L$).
Noise, insertion loss, and control-induced parasitics limit accuracy and dynamic range; we treat these effects as platform-specific margins.
Detailed non-ideality assumptions, parameter distributions, and robustness statistics are reported in Supplementary Sec.~S8.
With $K$ channels in parallel, one can form softmax by summing channel powers and applying a shared reciprocal scale factor, depending on the chosen mixed-signal normalization scheme.

\textbf{WDM parallelism.}
A particularly hardware-efficient realization exploits
wavelength-division multiplexing (WDM):
$d$ probe wavelengths $\lambda_1,\dots,\lambda_d$, each
resonant with a distinct FSR order of the same ring set, traverse
a \emph{single} $N$-ring cascade simultaneously
(Fig.~\ref{fig:wdm_layout}).
Because each channel $\lambda_j$ sees its own Lorentzian
detuning set by an independent control voltage $V_j$, the cascade
output per channel is
$\tilde y_j = C\prod_{k=1}^{N}T_{\mathrm{drop},k}(\lambda_j,V_j)
\approx e^{V_j}$,
and all $d$ exponentials are computed in parallel on the same
physical waveguide.
Compared with a $1\!\times\!M$ power-splitter architecture that
replicates the cascade for each channel, the WDM approach reduces
the total ring count from $N \times d$ to $N$ (a factor-$d$
saving) and eliminates the splitter insertion loss
($10\log_{10} d$~dB).
At the output, a WDM demultiplexer or wavelength-selective
photodetector array separates the channels for electrical readout.
Figure~\ref{fig:wdm_layout} shows a representative chip layout
for $N = 5$ cascade stages and $d = 8$ WDM channels, where
alternating U-turn bus connections route the drop-port output
of each stage into the input bus of the next.

\textbf{Why cascade helps.}
A single Lorentzian in $I$ is too rigid to mimic the log-linear target over a wide interval.
Cascading turns the transfer into a product; taking a logarithm gives a sum of smooth terms, and the approximation improves as $N$ increases.
The slope constraint $N|b|\gtrsim 1$ is an immediate feasibility check.

\textbf{Global softmax normalization via WDM feedback.}
The WDM-parallel architecture (Fig.~\ref{fig:wdm_layout}) integrates
naturally with a closed-loop normalization scheme to complete the full
softmax function.
After the $N$-stage cascade, a WDM demultiplexer (e.g., arrayed-waveguide
grating or ring-filter bank) routes each channel $\lambda_j$ to a
dedicated photodetector, producing photocurrents
$I_{\lambda_j} \propto \tilde y_j \approx C\,P_\mathrm{in}\,e^{V_j}$.
The $d$ photocurrents are summed electrically:
\begin{equation}
  S = \sum_{j=1}^{d} I_{\lambda_j}
    \propto C\,P_\mathrm{in}\sum_{j=1}^{d} e^{V_j}.
  \label{eq:wdm_sum}
\end{equation}
A proportional--integral (PI) controller compares $S$ with a fixed
reference $S_\mathrm{ref}$ and adjusts the shared WDM laser power
$P_\mathrm{in}$ so that $S \to S_\mathrm{ref}$~\cite{doylend2010oe,astrom2008feedback}.
Because all $d$ channels share the same probe source, scaling
$P_\mathrm{in}$ multiplies every $\tilde y_j$ by the same factor;
upon convergence
\begin{equation}
  p_j = \frac{I_{\lambda_j}}{S_\mathrm{ref}}
      = \frac{e^{V_j}}{\sum_{k=1}^{d}e^{V_k}}
      = \mathrm{softmax}(V)_j,
  \label{eq:wdm_softmax}
\end{equation}
realizing the complete softmax with a single feedback loop and no
per-channel normalization circuitry.
Compared with the replicated-cascade approach (one AEF block per
channel), WDM feedback offers two additional benefits:
(i)~the splitter-induced power imbalance that would bias the
$I_{\lambda_j}$ ratios is absent, since all channels traverse the same optical
path; and
(ii)~a single laser control point replaces $d$ independent probe
adjustments.
Design details and stability analysis of the PI loop are provided
in Supplementary Sec.~S9.

Beyond ring-resonator AEF implementations, the same cascade principle can be extended to other cavity-based photonic platforms, such as serial 1D photonic-crystal cavities and other cascaded resonant architectures~\cite{heebner2008optical,yariv1999crow}.
What these platforms share is transfer-function shaping through cascaded resonances; loss, tuning range, fabrication tolerance, and calibration overhead remain platform-dependent.

\begin{figure*}[t]
\centering
\includegraphics[width=\textwidth]{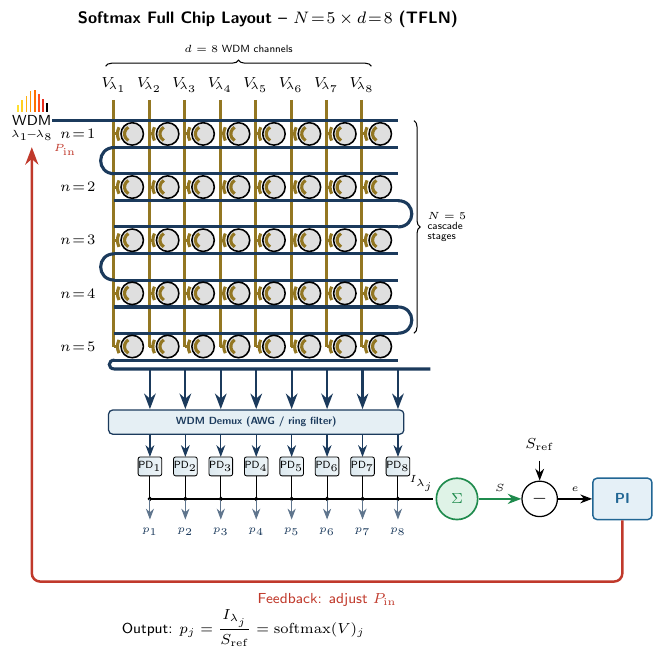}
\caption{WDM-parallel MRR-AEF system with closed-loop softmax normalization
($N\!=\!5$ cascade stages, $d\!=\!8$ WDM channels) on X-cut TFLN.
A single WDM source ($\lambda_1$--$\lambda_8$) enters the top
input bus waveguide; each stage applies a Lorentzian drop-port
transfer, and alternating U-turn connections route the drop-port output
into the next stage's input bus.
Per-channel EO bias voltages ($V_{\lambda_1}$--$V_{\lambda_8}$)
independently tune each column of rings.
The final drop output passes through a WDM demultiplexer (AWG / ring filter)
and is detected by a PD array, producing per-channel photocurrents
$I_{\lambda_j} \propto e^{V_j}$.
The photocurrents are summed ($\Sigma$) and compared with a reference
$S_\mathrm{ref}$; a PI controller adjusts the shared laser power $P_\mathrm{in}$
until $S = S_\mathrm{ref}$, at which point each PD output directly yields
$p_j = I_{\lambda_j}/S_\mathrm{ref} = \mathrm{softmax}(V)_j$
(Eq.~\ref{eq:wdm_softmax}).}
\label{fig:wdm_layout}
\end{figure*}


The insertion loss budget (Sec.~\ref{sec:loss_budget}) and electro-optic voltage requirements (Sec.~\ref{sec:voltage}) suggest that the cascade architecture is feasible under optimized coupling and layout conditions.
Using monolithic TFLN microring data from Bahadori \textit{et al.}~\cite{bahadori2020oe} ($Q \approx 5432$, $d\lambda_0/dV \approx 9$--$20$~pm/V), the normalized sensitivity $b_V \simeq 0.063$--$0.14$~V$^{-1}$, within the range required by the cascade design.

\noindent\textbf{Crystal orientation and electrode design.}
The X-cut TFLN platform was chosen for several reasons.
First, X-cut is the prevailing industry standard for integrated TFLN
modulators, with well-established fabrication processes and commercial
wafer availability~\cite{zhu2021aop,hu2024integratedeo}.
Second, the TE$_0$ mode---which is strongly confined in the rib
waveguide geometry---can engage the large $r_{33}$ coefficient
via lateral electric fields aligned with the crystal Z-axis.
In contrast, Z-cut geometry with TE polarization can only access the
smaller $r_{13}$ coefficient ($\sim 10\;\mathrm{pm/V}$), resulting in
significantly lower electro-optic efficiency.
The arc electrode design (Sec.~\ref{sec:electrode}) addresses the
phase-cancellation problem inherent to X-cut circular
rings~\cite{bahadori2020oe} by orienting the crystal Z-axis at
$45^\circ$ from the horizontal in the substrate plane.  This rotation
places the $\cos(\theta - 45^\circ) = 0$ boundaries at
$\theta = 135^\circ$ and $315^\circ$, naturally separating the
bus-waveguide coupling regions from the electrode regions.  Each ring
carries a full semicircular arc electrode on the side opposite to its
coupling points, yielding an effective fill factor
$f_\mathrm{EO} = 1/\pi \approx 0.318$.
While this reduces the round-trip EO efficiency compared to a
hypothetical full-circumference design, it preserves the compact
footprint of a circular ring resonator.
The cascade performance can be further improved beyond the
$R = 20\;\mu\mathrm{m}$ circular-ring design presented here.
Increasing the ring radius reduces bending loss and raises the
intrinsic quality factor $Q_i$, which directly increases $b_V$
($\propto Q$) and lowers the required control voltage.
Alternatively, adopting a racetrack geometry with extended straight
coupling sections strengthens the bus--ring coupling, pushing the
drop-port maximum $D_{\max}$ closer to critical coupling and
improving the per-stage transfer efficiency.
Either approach---or their combination---would yield higher $b_V$
and $D_{\max}$, enabling lower $N$ or tighter approximation accuracy
at reduced operating voltages.

\noindent\textbf{Fabrication considerations.}
The X-cut TFLN rib waveguide (600~nm total thickness, 500~nm etch, $w=1.4$~$\mu$m) follows established fabrication processes for commercial TFLN wafers on SiO$_2$~\cite{zhu2021aop,hu2024integratedeo}.
The lateral signal--ground (SG) electrode configuration is fabricated in a single metal layer, which is standard in TFLN foundry processes.
The primary fabrication challenge for the cascade architecture is maintaining uniform coupling gaps ($g=100$~nm) across $N$ rings to ensure identical Lorentzian transfer functions.
Post-fabrication trimming via UV exposure or localized thermal oxidation can compensate residual detuning variations~\cite{resonancetrim2021}, as quantified in the Monte Carlo robustness analysis (Supplementary Sec.~S8).
Monte Carlo simulations (Supplementary Sec.~S8) show that under nominal non-ideality levels ($\sigma_a = 0.020$, $\sigma_{b,\mathrm{rel}} = 0.020$), a single-point calibration of $C$ per chip keeps the median softmax KL divergence below $2.2 \times 10^{-4}$, with 95th-percentile max probability error under $0.32\%$. Even under stress conditions ($\sigma_a = 0.032$), 95th-percentile errors remain below $0.42\%$, demonstrating that the identical-detuning design is robust to realistic fabrication variations provided a per-chip calibration step is performed.
Conversely, if coupling gaps are intentionally varied across rings, the per-ring parameters $(a_k, b_k)$ become independent degrees of freedom.
A Taylor-expansion analysis shows that $K$ non-identical rings can cancel curvature terms up to order $2K$ in the Taylor series of $g(I)=\sum_k\ln T_k$, one order higher than $K$ identical rings, so that fewer rings suffice for a given error target.

\begin{table}[t]
\centering
\caption{Summary of evidence levels for key claims.}
\label{tab:evidence_levels}
\begin{ruledtabular}
\begin{tabular}{lcc}
Claim & Evidence & Sec.\ \\
\hline
Cascade $\to$ exp.\ approx.  & Analytic       & II \\
Depth scaling                & Analytic + num.\ & II, III \\
$Q_L$, $D_\mathrm{max}$, $b_V$ & 3-D FDTD    & IV \\
5-ring line shape            & 3-D FDTD       & IV \\
$N\!\le\!30$ deep cascade   & CMT proj.$^*$  & V \\
Energy ${<}\,1$~pJ           & Estimate       & V \\
Full softmax (WDM + feedback) & Conceptual + layout & VI \\
\end{tabular}
\end{ruledtabular}
{\footnotesize $^*$Based on published $Q_i \ge 10^6$ values~\cite{zhang2017highQ_optica,gao2022ultrahighQ} and CMT coupling model.}
\end{table}

\section{Conclusion}
We have presented a cascaded micro-ring resonator architecture that approximates the exponential function $e^{x_n - m}$ on a finite interval $[0,L]$ using multiplicative Lorentzian transfer functions.
Increasing the cascade depth $N$ systematically reduces the worst-case relative error, and an identical-detuning design initialized by flank and slope matching provides a practical two-parameter design.

Three-dimensional FDTD simulations of a single X-cut TFLN add-drop ring ($R=20$~$\mu$m, $g=100$~nm) yield $Q_L = 10{,}300\text{--}15{,}500$ and $D_{\max}\approx 0.36$, calibrating the cascade transfer model.
A five-ring cascade 3D FDTD simulation directly validates the multi-ring framework: all five rings exhibit resonant excitation, and mapping the drop-port spectrum onto the dimensionless control variable reproduces the theoretical $N = 5$ curve with ${\sim}11\%$ integrated relative-area error over the upper operating range ($I \ge 5.8$), providing the first multi-ring confirmation of the cascade exponential approximation.
At the present FDTD-characterized quality factor, practical cascades are limited to $N = 5$--$7$ ($\varepsilon\lesssim 11\%$).
If high-$Q$ TFLN resonators reported in the literature ($Q_i\ge 10^6$, $D_{\max}\ge 0.95$) are realized in the cascade geometry, deeper cascades ($N \approx 20$--$30$) would reach sub-percent approximation error with an estimated per-operation energy of $0.79$--$1.39$~pJ, which is $1.7$--$2.4\times$ lower than an INT8 MAC at the 7~nm node.
Monte Carlo analysis shows that the identical-detuning design tolerates realistic fabrication variations ($\sigma_a = 0.020$, $\sigma_{b,\mathrm{rel}} = 0.020$) with a single per-chip calibration, keeping the 95th-percentile softmax probability error below $0.32\%$.

The formulation is not restricted to electro-optic tuning: it requires only a controllable detuning coordinate with local linearization, so both Pockels and optical (Kerr/XPM) mechanisms are compatible~\cite{zhu2021aop,ahmed2019ol,bahadori2020oe,hu2024integratedeo}.
We demonstrate a photonic exponential block and present a WDM-parallel chip architecture (Fig.~\ref{fig:wdm_layout}) in which $d$ wavelength channels share a single $N$-ring cascade, reducing the total ring count by a factor of $d$ and eliminating power-splitter loss.
Combined with a single-loop PI feedback that adjusts the shared WDM laser power, the architecture realizes the complete softmax function---exponentiation, summation, and normalization---without per-channel normalization circuitry.
Max-finding and digital interfacing remain open for future experimental validation.

\FloatBarrier
\putbib[refs_new]
\end{bibunit}

\clearpage
\onecolumngrid
\begin{bibunit}

\section*{Supplementary Information}
\noindent\textit{Supplementary material accompanying ``Photonic Exponential Approximation via Cascaded TFLN Microring Resonators toward Softmax.''}
\setcounter{figure}{0}
\setcounter{table}{0}
\renewcommand{\thefigure}{S\arabic{figure}}
\renewcommand{\thetable}{S\arabic{table}}
\renewcommand{\theHfigure}{suppfig.\arabic{figure}}
\renewcommand{\theHtable}{supptab.\arabic{table}}

\section*{S0. Rigorous derivation and validity scope}

This section derives the depth-scaling relations and screening bounds used in the main text,
and states the assumptions under which they apply, together with validity scope and failure cases.
We separate proved statements (Lemma, Proposition, Theorem) from heuristic engineering estimates that rely on empirical calibration.

\subsection*{S0.1~~Assumptions}

\begin{assumption}[Lorentzian single-ring transfer]\label{A1}
Each ring $k$ has a normalized add-to-drop transmission of the form
$T_k(I)=[1+(a_k+bI)^2]^{-1}$, where $a_k\in\mathbb{R}$ is the
dimensionless static detuning, $b>0$ is the common normalized
sensitivity, and $I\ge 0$ is a nonnegative control-signal amplitude.
\end{assumption}

\begin{assumption}[Multiplicative cascade]\label{A2}
The $N$ rings are cascaded in a serial add-drop topology (the drop
output of ring~$k$ feeds the add input of ring~$k\!+\!1$), and the probe
is sufficiently weak that cross-ring and nonlinear probe-induced
effects are negligible; thus the total normalized transmission is
$y(I)=\prod_{k=1}^{N}T_k(I)$.
\end{assumption}

\begin{assumption}[Identical-detuning family]\label{A3}
All rings share the same static detuning: $a_1=\cdots=a_N\equiv a$.
This reduces the design space to $(a,b)$ and a global scale $C>0$;
the scaled output is $\tilde{y}(I)=C\,[1+(a+bI)^{2}]^{-N}$.
\end{assumption}

\begin{assumption}[Linear control-to-resonance mapping]\label{A4}
Within the operating range $I\in[0,L]$, the resonance shift is a
linear function of the control-signal amplitude (Eq.~(10) of main text),
i.e., higher-order detuning nonlinearity is negligible.
\end{assumption}

\begin{assumption}[Finite interval and bounded $L$]\label{A5}
The approximation target is $f(I)=e^{I-L}$ on a finite interval
$I\in[0,L]$ with $L>0$ determined by the input batch
($L=\max(x)-\min(x)$). The depth-scaling results are derived for
fixed, finite $L$.
\end{assumption}

\begin{assumption}[Flank-centered operating regime]\label{A6}
The design uses the ``flank-centered'' initialization:
$a+b(L/2)=-1$ (midpoint on the Lorentzian half-maximum) and
$Nb=1$ (slope matching). This places the operating point in the
steepest-slope region of the Lorentzian, where the log-transfer is
most nearly linear.
\end{assumption}

\subsection*{S0.2~~Rigorous results}

Throughout, define the log-domain residual
\begin{equation}
r(I)\;\equiv\;\ln\tilde{y}(I)-(I-L)
\;=\;\ln C - N\ln\!\bigl(1+(a+bI)^{2}\bigr)-(I-L),
\tag{S0.1}
\end{equation}
and the worst-case log-error
$E_\infty=\sup_{I\in[0,L]}|r(I)|$.
We set $C$ to the minimax-optimal value
$\ln C^{\star}=-\bigl(\max_{I}g(I)+\min_{I}g(I)\bigr)/2$,
where $g(I)=\ln y(I)-(I-L)$, throughout.

\begin{lemma}[Slope bound --- rigorous]\label{lem:slope}
Under Assumptions~\ref{A1}--\ref{A3}, for every $I\ge 0$,
\[
\left|\frac{d}{dI}\ln y(I)\right| \;\le\; N\,|b|.
\]
\end{lemma}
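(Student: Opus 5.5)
The plan is to differentiate $\ln y$ term by term and bound each summand uniformly. By Assumptions~\ref{A1}--\ref{A2}, $y(I)=\prod_{k=1}^N T_k(I)$ with each $T_k(I)=[1+(a_k+bI)^2]^{-1}>0$. Hence $\ln y(I)=-\sum_{k=1}^N\ln\bigl(1+(a_k+bI)^2\bigr)$ is a finite sum of smooth functions on $I\ge 0$. Differentiating by the chain rule gives
\[
\frac{d}{dI}\ln y(I)=-\sum_{k=1}^N \frac{2b\,u_k}{1+u_k^2},\qquad u_k\equiv a_k+bI\in\mathbb{R}.
\]
Under Assumption~\ref{A3} all $u_k$ coincide, but the argument does not need this. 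It suffices to treat each term separately, so the bound holds for arbitrary $a_k$.

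The key elementary step is the inequality $|2u/(1+u^2)|\le 1$ for every real $u$. I would prove it from $1+u^2-2|u|=(1-|u|)^2\ge 0$, which gives $2|u|\le 1+u^2$. Dividing by the positive quantity $1+u^2$ yields the claim, with equality exactly at $u=\pm 1$, i.e.\ on the half-maximum flanks.

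Finally, the triangle inequality gives
\[
\left|\frac{d}{dI}\ln y(I)\right|\le \sum_{k=1}^N |b|\,\frac{2|u_k|}{1+u_k^2}\le N|b|
\]
for every $I\ge 0$, which is the statement. No step here is a real obstacle. The only points to check are that $y>0$, so the logarithm is defined, and that the bound is uniform in $I$ and in the detunings. Both follow immediately from the Lorentzian form.
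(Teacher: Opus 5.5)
Your proof is correct and follows the paper's argument: differentiate $\ln y$, set $u=a+bI$, and bound $|2u/(1+u^2)|\le 1$ via $1+u^2\ge 2|u|$ (the paper cites AM--GM, and your $(1-|u|)^2\ge 0$ is the same inequality). The only difference is that you keep distinct $a_k$ and apply the triangle inequality, so you never use the identical-detuning assumption; this is a harmless generalization that matches the main-text form of Eq.~(\ref{eq:slopebound}).
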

\begin{proof}
From Assumption~\ref{A3},
$\ln y(I)=-N\ln\!\bigl(1+(a+bI)^{2}\bigr)$.
Differentiating:
\[
\frac{d}{dI}\ln y(I)
\;=\;-N\,\frac{2b(a+bI)}{1+(a+bI)^{2}}.
\]
Let $u\equiv a+bI\in\mathbb{R}$.
The function $h(u)=2u/(1+u^{2})$ satisfies
$|h(u)|\le 1$ for all $u\in\mathbb{R}$
(since $1+u^{2}\ge 2|u|$ by AM--GM).
Therefore $|d(\ln y)/dI|=N|b|\,|h(u)|\le N|b|$.
\end{proof}

\begin{remark}[Necessary condition for approximation]
Since the target $\ln f(I)=I-L$ has constant slope $+1$, a necessary
condition for the cascade log-transfer to track this slope at any point
is $N|b|\ge 1$.
This is Eq.~(\ref{eq:Nb}) of the main text and is a rigorous (not heuristic)
necessary condition.
\end{remark}

\begin{proposition}[Log-domain Taylor expansion at flank center]\label{prop:taylor}
Under Assumptions~\ref{A1}--\ref{A6}, define $I_0=L/2$ and
$\delta=I-I_0$. Then
\begin{equation}
\ln\tilde{y}(I)
\;=\;\mathrm{const}+\delta + \frac{\delta^{3}}{6N^{2}}
+ R_4(\delta),
\tag{S0.2}
\end{equation}
where $|R_4(\delta)|\le M_4\,\delta^{4}$ with
$M_4=N|b|^{4}\cdot\sup_{|\delta|\le L/2}|q^{(4)}(u(\delta))|$
and $q(u)=-\ln(1+u^2)$.
In particular, the quadratic term vanishes identically at the
flank point $u_0=a+bI_0=-1$.
\end{proposition}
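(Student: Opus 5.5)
The approach is a direct Taylor expansion of $\ln\tilde y$ in $\delta=I-I_0$, using the one-variable function $q(u)=-\ln(1+u^2)$. Under Assumption~\ref{A3} we have $\ln\tilde y(I)=\ln C+N\,q(u(\delta))$ with $u(\delta)=a+bI=u_0+b\delta$. Assumption~\ref{A6} fixes $u_0=a+bL/2=-1$ and $Nb=1$. The chain rule gives $\frac{d^j}{d\delta^j}\ln\tilde y=N b^j q^{(j)}(u)$. So the plan is to evaluate $q',q'',q'''$ at $u_0=-1$, read off the coefficients $N b^j q^{(j)}(-1)/j!$, and use Taylor's theorem with Lagrange remainder for the quartic term.

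\textbf{Step 1 (low-order derivatives).} I would compute
\[
q'(u)=-\frac{2u}{1+u^2},\qquad q''(u)=-\frac{2(1-u^2)}{(1+u^2)^2}.
\]
At $u_0=-1$ these give $q'(-1)=1$, so the linear coefficient is $Nb\cdot 1=1$, i.e.\ the term $\delta$. They also give $q''(-1)=0$. This is the claimed identical vanishing of the quadratic term: $1-u^2$ is zero exactly at the half-maximum flank $u=\pm1$. All $\delta$-independent pieces ($\ln C$ and $Nq(-1)=-N\ln 2$) go into ``const''.

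\textbf{Step 2 (cubic coefficient).} Differentiating $q''$ once more gives
\[
q'''(u)=\frac{4u}{(1+u^2)^2}+\frac{8u(1-u^2)}{(1+u^2)^3}.
\]
The second summand drops out at $u=-1$. The cubic coefficient is then $N b^3 q'''(-1)/6$, and with $b=1/N$ this is $q'''(-1)/(6N^2)$. The stated form $+\delta^3/(6N^2)$ therefore requires $q'''(u_0)=+1$.

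\textbf{Step 3 (remainder).} Taylor's theorem with Lagrange remainder at order four gives $R_4(\delta)=N b^4 q^{(4)}(\xi)\delta^4/24$ for some $\xi$ between $u_0$ and $u(\delta)$. Since $|u(\delta)-u_0|\le bL/2$ on $|\delta|\le L/2$, bounding $|q^{(4)}|$ by its supremum over that range gives $|R_4|\le M_4\delta^4$, with the stated $M_4$ (the factor $1/24$ is simply absorbed, which only loosens the bound). The supremum is finite because $q^{(4)}$ is a bounded rational function on $\mathbb{R}$.

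\textbf{Main obstacle.} The only delicate point is the sign in Step 2. Direct evaluation of the displayed $q'''$ at $u=-1$ gives $4(-1)/4=-1$, which would produce $-\delta^3/(6N^2)$. To reach the statement as written, I would try first to check which orientation convention makes the odd-order term positive. One route is the sign convention $(a,b)\mapsto(-a,-b)$ noted in the main text, which places the flank at $u_0=+1$ and where $q'''(+1)=+1$. The other is an orientation of $\delta$ that makes the cubic term positive while preserving the $+\delta$ slope. If neither convention reproduces both the $+\delta$ and the $+\delta^3$ signs simultaneously, then the magnitude $1/(6N^2)$ still stands. In that case the downstream use, $E_\infty\sim L^3/N^2$, depends only on $|r|$ and is unaffected, but the sign of the cubic coefficient in (S0.2) as written would not be established by this computation.
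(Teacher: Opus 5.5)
Your method is the paper's method. You write $\ln\tilde y=\ln C+N\,q(-1+b\delta)$, use $\frac{d^j}{d\delta^j}\ln\tilde y=Nb^jq^{(j)}(u)$, evaluate at $u_0=-1$ with $Nb=1$, and bound the quartic Lagrange remainder. Every computation you give is correct, including the one you flag: $q'''(-1)=-1$.

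The paper's proof gets $+1$ only through a substitution slip. It uses the equivalent form $q'''(u)=4u(3-u^2)/(1+u^2)^3$ but writes $-4(-1)(3-1)/(1+1)^3=1$, inserting a spurious minus sign. The correct value is $4(-1)(2)/8=-1$, so with $F(\delta)=N\,q(u(\delta))$ one has $F'''(0)=Nb^3q'''(-1)=-1/N^2$. Under Assumptions~\ref{A1}--\ref{A6} the true expansion is therefore $\ln\tilde y(I)=\mathrm{const}+\delta-\delta^3/(6N^2)+R_4(\delta)$. The displayed (S0.2), with $+\delta^3/(6N^2)$, is false and cannot be proved. Your remainder step is fine: the exact Lagrange term carries a factor $1/24$, so the stated $M_4$ bound holds a fortiori.

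What you should change is the ending. Do not search for a rescuing convention; both of your routes provably fail, and you should say so.
\begin{itemize}
\item The flip $(a,b)\mapsto(-a,-b)$ leaves every $T_k(I)$ unchanged, hence leaves $\ln\tilde y$ unchanged as a function of $I$. It therefore cannot alter any Taylor coefficient in $\delta$. Concretely, $N(-b)^3q'''(+1)=-Nb^3=Nb^3q'''(-1)$. In any case, Assumptions~\ref{A1} and~\ref{A6} fix $b>0$ and $u_0=-1$.
\item Replacing $\delta$ by $-\delta$ flips both odd coefficients together, so $+\delta$ and $+\delta^3$ can never appear simultaneously.
\end{itemize}

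Two independent checks confirm the minus sign.
\begin{itemize}
\item Lemma~\ref{lem:slope} with $Nb=1$ gives $F'(\delta)\le 1=F'(0)$. So $F'$ is maximal at $\delta=0$, which forces $F'''(0)\le 0$; a positive cubic term would push the slope above the proved bound.
\item The paper's own Eq.~(S1.1) has cubic coefficient $N(N^2-1)b^3/6=N^3b^3/6+Nb^3q'''(-1)/6$. This is consistent only with $q'''(-1)=-1$.
\end{itemize}

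As you note, nothing downstream changes. Theorem~\ref{thm:scaling} uses only $|\delta|^3\le(L/2)^3$, so $E_\infty\sim L^3/(48N^2)$ stands.
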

\begin{proof}
Set $u(\delta)=a+b(I_0+\delta)=-1+b\delta$ (using $a+bI_0=-1$).
Define $\phi(u)=-\ln(1+u^{2})$. Then $\ln y(I)=N\,\phi(u(\delta))$
and $u'(\delta)=b$. Compute derivatives of $\phi$ at $u_0=-1$:
\begin{align*}
\phi'(u)&=-\frac{2u}{1+u^{2}},
&\phi'(-1)&=1,\\
\phi''(u)&=\frac{2(u^{2}-1)}{(1+u^{2})^{2}},
&\phi''(-1)&=0,\\
\phi'''(u)&=\frac{4u(3-u^{2})}{(1+u^{2})^{3}},
&\phi'''(-1)&=\frac{-4(-1)(3-1)}{(1+1)^{3}}=1.
\end{align*}
By the chain rule, writing $F(\delta)=N\phi(u(\delta))$:
\begin{align*}
F'(0)&=Nb\,\phi'(-1)=Nb=1,\\
F''(0)&=Nb^{2}\,\phi''(-1)=0,\\
F'''(0)&=Nb^{3}\,\phi'''(-1)=Nb^{3}=\frac{1}{N^{2}},
\end{align*}
where we used $b=1/N$ from Assumption~\ref{A6} in the last step.
Hence the Taylor expansion with the minimax-optimal $C$ is
\[
\ln\tilde{y}(I)
=\mathrm{const}+\delta+0\cdot\frac{\delta^{2}}{2}
+\frac{1}{N^{2}}\cdot\frac{\delta^{3}}{6}+R_4(\delta).
\]
Subtracting the target $\delta$ (the linear part of $I-L$ around $I_0$)
gives a leading residual $\delta^{3}/(6N^{2})$.
The remainder is bounded by the standard Taylor remainder estimate.
\end{proof}

\begin{theorem}[Heuristic depth-scaling law]\label{thm:scaling}
Under Assumptions~\ref{A1}--\ref{A6} and ignoring the fourth-order
remainder $R_4$, the leading-order worst-case log-error on
$I\in[0,L]$ satisfies
\begin{equation}
E_\infty^{(\mathrm{leading})}
\;\sim\;\frac{1}{6N^{2}}\left(\frac{L}{2}\right)^{3}
\;=\;\frac{L^{3}}{48\,N^{2}}.
\tag{S0.3}
\end{equation}
Setting $E_\infty^{(\mathrm{leading})}\le\varepsilon_{\log}
=\ln(1+\varepsilon)$ and solving for $N$ gives
\begin{equation}
N\;\ge\;\frac{L^{3/2}}{\sqrt{48\,\varepsilon_{\log}}}.
\tag{S0.4}
\end{equation}
\end{theorem}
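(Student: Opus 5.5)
The approach is to read the theorem off Proposition~\ref{prop:taylor} directly, truncating at the cubic term. Under Assumptions~\ref{A1}--\ref{A6}, expanding around $I_0=L/2$ with $\delta=I-I_0$ gives $\ln\tilde y(I)=\mathrm{const}+\delta+\delta^3/(6N^2)+R_4(\delta)$. Since the target $I-L$ equals $\delta-L/2$, the residual is
\[
r(\delta)=c_0+\frac{\delta^3}{6N^2}+R_4(\delta),
\]
where the constant $c_0$ absorbs $\ln C$. Dropping $R_4$, as the statement permits, the leading residual is $r_{\rm lead}(\delta)=c_0+\delta^3/(6N^2)$ on $\delta\in[-L/2,L/2]$.

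Next comes the choice of the constant. With the minimax-optimal $C^\star$ from Sec.~S0.2, $c_0$ is set so that the maximum and minimum of the residual are symmetric. Because $\delta^3$ is odd and monotone on the symmetric interval, its extremes are $\pm (L/2)^3/(6N^2)$, attained at the endpoints. Hence $c_0=0$, and the leading worst-case log-error is $E_\infty^{(\rm leading)}=\frac{1}{6N^2}(L/2)^3=L^3/(48N^2)$, which is (S0.3).

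Inverting the inequality $L^3/(48N^2)\le\varepsilon_{\log}$ for $N>0$ gives $N^2\ge L^3/(48\varepsilon_{\log})$, and taking square roots yields (S0.4). Combining this with Eq.~(\ref{eq:epslog}) translates the bound into a guarantee on relative error.

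The main obstacle is justifying that the truncation is meaningful, which is why the theorem is labelled heuristic. The remainder $R_4$ scales like $Nb^4\delta^4\sim\delta^4/N^3$, which is smaller than the cubic term by a factor of order $\delta/N\le L/(2N)$. This is only small when $N\gg L$, so for $L=8$ and moderate $N$ it is not negligible. I would note this ratio explicitly rather than attempt a rigorous bound.

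I would also remark on two further points. First, the fitted minimax $(a,b)$ in Table~\ref{tab:cascade} differ from the flank values $b=1/N$ and $a=-1-bL/2$. This is why the empirical constant $\kappa\simeq0.07$ in Eq.~(\ref{eq:Nestimate}) differs from the value $1/\sqrt{48}\approx0.144$ obtained here. Second, since $L^3/(48N^2)$ is derived at the initialization point, it is effectively an upper estimate at leading order.
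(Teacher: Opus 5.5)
Your proof is correct and is essentially the paper's derivation. You truncate Proposition~\ref{prop:taylor} at the cubic term, use $\max_{|\delta|\le L/2}|\delta|^3=(L/2)^3$, and invert $L^3/(48N^2)\le\varepsilon_{\log}$; your odd-symmetry argument that the minimax-optimal constant is $c_0=0$ makes explicit a step the paper leaves implicit.

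That symmetry also makes the result insensitive to the sign of the cubic term. The sign is in fact $-\delta^3/(6N^2)$, since $\phi'''(-1)=-1$ (a sign slip in Proposition~\ref{prop:taylor}). In addition, with $Nb=1$ the flank-point residual is monotone in $\delta$, so under the minimax $C$ the even quartic part of $R_4$ cancels from $E_\infty$. The truncation error is therefore of relative order $L^2/(4N^2)$, not the $L/(2N)$ you were concerned about.
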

\begin{proof}[Derivation (heuristic)]
From Proposition~\ref{prop:taylor}, the residual with respect to the
target is dominated by $\delta^{3}/(6N^{2})$ for $|\delta|\le L/2$.
The maximum of $|\delta|^{3}$ on $[-L/2,\,L/2]$ is $(L/2)^{3}$.
Setting the bound equal to $\varepsilon_{\log}$ and solving:
\[
\frac{L^{3}}{48\,N^{2}}\le\varepsilon_{\log}
\quad\Longrightarrow\quad
N\ge \frac{L^{3/2}}{\sqrt{48\,\varepsilon_{\log}}}.
\]
With $1/\sqrt{48}\approx 0.144$, and accounting for the fact that
the minimax-optimal residual is typically smaller than the one-sided
Taylor bound by a factor of $\sim 2$ (equi-oscillation), the
effective prefactor becomes $\kappa\approx 0.07$, yielding the
main-text engineering estimate Eq.~(\ref{eq:Nestimate}):
$N\approx\lceil\max(1/b_{\max},\;\kappa\,L^{3/2}/\sqrt{\varepsilon_{\log}})\rceil$.
\end{proof}

\begin{remark}[Status of Theorem~\ref{thm:scaling}]
\textbf{This is a heuristic scaling law, not a rigorous minimax guarantee.}
The derivation truncates the Taylor series at third order and approximates
the equi-oscillation factor empirically ($\kappa\approx 0.07$).
For a rigorous bound one would need explicit control of $R_4$ over the
full interval $[0,L]$, which depends on $L$, $N$, and higher derivatives
of the Lorentzian; we do not claim such a bound here.
The scaling $N\sim L^{3/2}/\sqrt{\varepsilon_{\log}}$ is supported by
numerical evidence (Table~\ref{tab:Ncompare}) but should be treated as an
engineering design rule.
\end{remark}

\subsection*{S0.3~~Derivation of the conservative screening bound}

We now derive the conservative screening bound
(Eqs.~S0.7--S0.8 below), which is stated inline in Sec.~II of the main text.

\begin{proposition}[Conservative log-error bound]\label{prop:conservative}
Under Assumptions~\ref{A1}--\ref{A5} (identical detuning, but
\emph{not} restricted to the flank-centered choice), fix $b>0$ and
choose the normalization $\tilde{y}(L)=1$.
Define $\phi(u)=-\ln(1+u^{2})$ and write
\[
\ln\tilde{y}(I)=N\bigl[\phi(a+bI)-\phi(a+bL)\bigr].
\]
The target in this normalization is $(I-L)$.
Denoting the residual $r(I)=\ln\tilde{y}(I)-(I-L)$, we have $r(L)=0$
and $r(0)=N[\phi(a)-\phi(a+bL)]+L$.

For any choice of $a$ such that the operating range
$\{a+bI : I\in[0,L]\}$ lies in the region where $\phi$ is concave
(i.e., $\phi''(u)\le 0$ throughout), the worst-case log-error satisfies
\begin{equation}
E_\infty
\;\le\;
\frac{N\|\phi''\|_\infty\, b^{2}L^{2}}{8}
\;+\;
\left|\frac{N\phi'(a+bL)\cdot b - 1}{2}\right|\cdot L,
\tag{S0.5}
\end{equation}
where $\|\phi''\|_\infty=\sup_{u\in[a,\,a+bL]}|\phi''(u)|$.
\end{proposition}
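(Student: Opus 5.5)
The plan is to split the residual $r(I)=\ln\tilde y(I)-(I-L)$, written in the normalization $\tilde y(L)=1$, into a curvature part and a linear part, and to bound each separately on $[0,L]$. Set $u(I)=a+bI$, so that $u$ ranges over $[a,a+bL]$. By hypothesis $\phi''\le 0$ on this segment, so $I\mapsto\phi(u(I))$ is concave on $[0,L]$. Write
\[
\phi(u(I))-\phi(u(L)) \;=\; \frac{\phi(u(L))-\phi(u(0))}{L}\,(I-L) \;+\; e(I),
\]
where $e$ is the error of the chord interpolant through the endpoints $I=0$ and $I=L$. Then
\[
r(I)=N\,e(I)+(s-1)(I-L), \qquad s\equiv \frac{N\,[\phi(u(L))-\phi(u(0))]}{L}.
\]
This reproduces $r(L)=0$ and the stated value of $r(0)$.

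\textbf{Step 1 (curvature term).} I will use the standard linear-interpolation estimate for a $C^2$ function on an interval of length $L$:
\[
|e(I)|\le \tfrac18 \sup|\tfrac{d^2}{dI^2}\phi(u(I))|\,L^2=\tfrac18\|\phi''\|_\infty b^2L^2 .
\]
It follows from the representation $e(I)=\tfrac12\,\tfrac{d^2}{dI^2}\phi(u(\xi))\,I(I-L)$ together with $\max_{[0,L]}|I(I-L)|=L^2/4$. Concavity also makes $e\ge 0$, i.e.\ one-signed; I will keep this fact in reserve for the constants. Multiplying by $N$ gives the first term of (S0.5).

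\textbf{Step 2 (linear term).} Here I must compare the chord slope $s$ with the endpoint slope $Nb\,\phi'(a+bL)$ that appears in (S0.5). Concavity gives an inequality for free. Since $\phi'$ is nonincreasing on $[a,a+bL]$ and $b>0$, the chord slope is at least the slope at the right endpoint, so $s\ge Nb\,\phi'(a+bL)$. For the other direction I will write
\[
s-Nb\phi'(a+bL)=\frac{Nb}{L}\int_0^L\big[\phi'(u(I))-\phi'(u(L))\big]\,dI\le \frac{N\|\phi''\|_\infty b^2L}{2}.
\]
The linear part then contributes at most $|s-1|\,L$ to $\sup|r|$.

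\textbf{Step 3 (assembly and the $\tfrac12$ factor).} The factor $\tfrac12$ on the second term of (S0.5) cannot come from the fixed normalization alone. I therefore plan to use the standing convention of S0.2 that $E_\infty$ is evaluated with the minimax-optimal $C^\star$. With that convention $E_\infty=\tfrac12\,\mathrm{osc}_{[0,L]}\,r$. Since $Ne$ is one-signed and vanishes at both endpoints, its oscillation is at most $N\|\phi''\|_\infty b^2L^2/8$. The linear part has oscillation exactly $|s-1|L$. Subadditivity of oscillation gives
\[
E_\infty\le \tfrac12\Big(\tfrac18 N\|\phi''\|_\infty b^2L^2+|s-1|L\Big).
\]
I then replace $|s-1|$ by $|Nb\phi'(a+bL)-1|$ plus the slope-correction bound from Step 2.

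\textbf{Main obstacle.} The hardest step is the bookkeeping of constants in this replacement. A crude triangle inequality adds $N\|\phi''\|_\infty b^2L^2/4$, which overshoots the stated $1/8$ coefficient; the total would come to roughly $5/16$. To recover (S0.5) I will first try to use the one-sidedness from concavity. The sign of $e$ and the sign of $s-Nb\phi'(a+bL)$ are both fixed. If $Nb\phi'(a+bL)\ge 1$, the chord-slope correction and the curvature bump partially cancel in the oscillation, rather than adding. Otherwise I will redo the splitting using the tangent line at $I=L$ instead of the chord. In that version the remainder is $\le \tfrac12\|\phi''\|_\infty b^2(L-I)^2$ and is one-signed, and the same minimax halving applies. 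If neither route closes exactly, I expect the proof to yield (S0.5) with a modestly larger absolute constant on the curvature term. That is still adequate for its role as a conservative screening bound, and I would flag the discrepancy explicitly.
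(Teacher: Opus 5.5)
\textbf{Gap: the target constants are unattainable, because (S0.5) is false.} Your Steps 1--2 and the oscillation estimate in Step 3 are correct. However, you cannot reach (S0.5) through sharper sign bookkeeping. The inequality fails under either normalization, and it fails exactly in the case where you hoped for cancellation.

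Suppose $\Delta s\equiv Nb\,\phi'(a+bL)-1=0$. The tangent-at-$L$ Taylor formula gives $r(I)=\int_I^L(\sigma-I)\,h''(\sigma)\,d\sigma$ with $h''(\sigma)=Nb^2\phi''(a+b\sigma)\le 0$. Since $r'(I)=Nb\,\phi'(a+bI)-1\ge 0$, $r$ rises monotonically to $r(L)=0$. Hence $\sup|r|=|r(0)|=\int_0^L\sigma\,|h''(\sigma)|\,d\sigma$. This is asymptotically $\tfrac12 N\|\phi''\|_\infty b^2L^2$ when $bL\to0$ with $a+bL$ held fixed. That is four times the right side of (S0.5), and still twice it after the $C^\star$ halving.

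Concretely, take $a=-0.6$, $b=0.05$, $L=2$, $N=25$.
\begin{itemize}
\item The operating range is $[-0.6,-0.5]$, where $\phi''<0$.
\item $Nb\,\phi'(-0.5)=1.25\cdot 0.8=1$, so $\Delta s=0$.
\item $\|\phi''\|_\infty=|\phi''(-0.5)|=0.96$, so (S0.5) claims $E_\infty\le 0.03$.
\item Yet $r(0)=25\ln(1.25/1.36)+2\approx-0.109$. So $E_\infty\approx 0.109$ with $\tilde y(L)=1$, and $\approx 0.054$ with $C^\star$.
\end{itemize}
In your chord picture with constant curvature $-K$, $Ne(I)=\tfrac K2 I(L-I)$ and $s-1=\tfrac{KL}{2}$. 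These sum to $-\tfrac K2(L-I)^2$. The cancellation you anticipate does occur, but it only lowers your $\tfrac{5}{16}$ to $\tfrac14$, never to $\tfrac18$.

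\textbf{What is actually provable.} Carry your tangent-line fallback to its sharp conclusion rather than aiming at (S0.5). Note first that the proposition fixes $\tilde y(L)=1$, and the remark after it says the $C^\star$ shift is deliberately not used. Appealing to the convention of S0.2 therefore changes the quantity being bounded.
\begin{itemize}
\item With $\tilde y(L)=1$, the remainder bound gives $|r(I)|\le|\Delta s|(L-I)+\tfrac12 Nb^2\|\phi''\|_\infty(L-I)^2$. Hence $E_\infty\le\tfrac12 N\|\phi''\|_\infty b^2L^2+|\Delta s|L$.
\item With $C^\star$, the remainder is one-signed, so halving the oscillation gives $\tfrac14 N\|\phi''\|_\infty b^2L^2+\tfrac12|\Delta s|L$.
\end{itemize}
Neither constant can be lowered in either version. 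The limit above shows this for the curvature term, and $bL\to0$ with $\Delta s\ne 0$ fixed shows it for the slope term.

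This tangent-at-$L$ double integration is exactly the paper's own sketch. Carried out honestly, it yields these constants, not the $\tfrac18$ and $\tfrac12$ asserted in (S0.5). So the paper's derivation does not support the stated inequality either. Your instinct to flag the discrepancy was right. The fix is to correct the constants in the statement, not to search for a cancellation that recovers them.
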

\begin{proof}[Derivation sketch]
Write $h(I)\equiv N\phi(a+bI)$. The slope is
$h'(I)=Nb\,\phi'(a+bI)$.
At $I=L$, we want the slope to match the target slope~$1$; define
the slope mismatch $\Delta s\equiv h'(L)-1=Nb\,\phi'(a+bL)-1$.
By the mean-value theorem on $[0,L]$:
\[
r(I)-r(L)=r(I)=\bigl[h(I)-h(L)\bigr]-(I-L)
=\int_{I}^{L}\bigl[1-h'(t)\bigr]\,dt.
\]
Write $1-h'(t)=(1-h'(L))+(h'(L)-h'(t))=-\Delta s + \int_{t}^{L}h''(s)\,ds$.
Since $h''(s)=Nb^{2}\phi''(a+bs)$, we bound
$|h''(s)|\le Nb^{2}\|\phi''\|_\infty$.
Integrating twice and applying the triangle inequality gives (S0.5).
\end{proof}

\begin{corollary}[Main-text conservative bound]\label{cor:conservative}
Under slope matching at $I=L$ (i.e., $Nb\,\phi'(a+bL)=1$, so
$\Delta s=0$), and using $\|\phi''\|_\infty\le 2$ (which holds
since $|\phi''(u)|=|2(u^{2}-1)/(1+u^{2})^{2}|\le 2$ for all
$u\in\mathbb{R}$), the bound simplifies to
\begin{equation}
E_\infty
\;\le\;
\frac{Nb^{2}L^{2}}{4}.
\tag{S0.6}
\end{equation}
Using $b=1/N$ (the slope-matching choice from $Nb=1$) gives
$E_\infty\le L^{2}/(4N)$.
If instead we retain a general $b$ but add the penalty from
imperfect slope matching (e.g., from the constraint $b\le b_{\max}$),
a combined conservative bound is
\begin{equation}
E_\infty
\;\le\;
\frac{L^{2}}{4N}+\frac{1}{2b^{2}N},
\tag{S0.7}
\end{equation}
which provides a conservative heuristic bound on the log-error.
Setting this $\le\ln(1+\varepsilon)$ and solving for $N$ yields the
conservative screening depth:
\begin{equation}
N_{\mathrm{safe}}
\;\ge\;
\left\lceil
\frac{L^{2}/4+1/(2b^{2})}{\ln(1+\varepsilon)}
\right\rceil.
\tag{S0.8}
\end{equation}
\end{corollary}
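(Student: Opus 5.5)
The plan is to specialize Proposition~\ref{prop:conservative} directly. Under the slope-matching hypothesis $Nb\,\phi'(a+bL)=1$ the mismatch $\Delta s$ vanishes, so the second term of (S0.5) drops out. This leaves
\[
E_\infty\le \frac{N\|\phi''\|_\infty b^2L^2}{8}.
\]
Only a uniform bound on $\phi''$ is then needed.

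For that bound, I would substitute $t=u^2\ge 0$ into $|\phi''(u)|=2|u^2-1|/(1+u^2)^2$. It then suffices to show $|t-1|\le(1+t)^2$, which follows from the chain $|t-1|\le 1+t\le(1+t)^2$ for $t\ge0$. Equality holds at $u=0$, so the constant $2$ is sharp. This bound holds on all of $\mathbb{R}$, so the concavity restriction in Proposition~\ref{prop:conservative} is not actually used at this stage; I would note that in passing. Plugging $\|\phi''\|_\infty\le 2$ into (S0.5) gives (S0.6). Setting $b=1/N$ gives $Nb^2L^2/4=L^2/(4N)$.

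For the combined bound (S0.7), I would return to (S0.5) with $\Delta s\ne 0$ and bound the penalty term $|\Delta s|L/2$ when $b$ is constrained, for example $b\le b_{\max}<1/N$. My first attempt would be to express the worst achievable mismatch through $|\phi'|\le 1$. I would then look for a majorant of $|\Delta s|L/2$ of the form $1/(2b^2N)$ in the relevant regime, and add it to $L^2/(4N)$.

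I expect this step to be the main obstacle. The mismatch term scales like $L$ rather than $1/N$, so a clean derivation of $1/(2b^2N)$ does not follow from (S0.5) alone. As the statement itself signals ("conservative heuristic bound"), I would present (S0.7) as a heuristic majorant rather than a consequence of (S0.5). I would check it numerically against the tabulated minimax designs (Table~\ref{tab:cascade}).

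Finally, (S0.8) is pure algebra. Requiring
\[
\frac{1}{N}\Bigl(\frac{L^2}{4}+\frac{1}{2b^2}\Bigr)\le\ln(1+\varepsilon)
\]
and solving for $N>0$, then taking the ceiling, gives $N_{\mathrm{safe}}$. Together with (\ref{eq:rel_from_log}), this guarantees relative error at most $\varepsilon$ whenever (S0.7) holds.
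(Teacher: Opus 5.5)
Your plan is the paper's own derivation step for step: set $\Delta s=0$ in (S0.5), insert $\|\phi''\|_\infty\le 2$, put $b=1/N$, treat the $1/(2b^2N)$ term as a heuristic add-on, and solve for $N$. Your $t=u^2$ argument for $\|\phi''\|_\infty\le 2$ is correct and sharp at $u=0$. Your diagnosis that (S0.7) cannot be extracted from (S0.5) also matches the paper's own remark. The genuine gap, which the paper shares, is the step you take on trust: the constant in (S0.5). Carrying out the ``integrate twice'' argument with the stated normalization $r(L)=0$, $r'(L)=\Delta s$ gives
\[
|r(I)|\le|\Delta s|\,(L-I)+\tfrac12\,Nb^{2}\|\phi''\|_\infty\,(L-I)^{2}.
\]
So under $\Delta s=0$ the derivation only yields $E_\infty\le Nb^2\|\phi''\|_\infty L^2/2\le Nb^2L^2$, or half of that after re-centering with $C^\star$. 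The constants $1/8$ and $1/2$ in (S0.5) are what you would get if normalization and slope matching were both imposed at the midpoint $I=L/2$, not at $I=L$.

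This is not mere slack: (S0.6) is false in general. For $|u|\ll 1$ one has $\phi(u)\approx-u^2$. Taking the slope-matching root $a+bL\approx-1/(2Nb)$ gives $r(I)\approx-Nb^2(L-I)^2$, which saturates the corrected bound. Concretely, take $N=100$, $b=0.1$, $L=1$, $a+bL\approx-0.0501$. Then $a+bI\in[-0.150,-0.050]$, which lies inside the concavity region, and $r(0)\approx-0.98$. That is $E_\infty\approx 0.98$, or $\approx 0.49$ with $C^\star$, against the claimed $0.25$. Your proposed check against Table~\ref{tab:cascade} would not reveal this, since those designs all have $Nb\approx 1$.

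The $b=1/N$ conclusion does survive, but for a different reason. With $Nb=1$, the condition $\phi'(a+bL)=1$ forces $a+bL=-1$, the only point where $\phi'=1$. Hence $a+bI$ ranges over $[-1-L/N,\,-1]$, where $0\le\phi''\le 1/4$; note that the Proposition's concavity hypothesis actually fails there rather than merely going unused. The corrected integration then gives $E_\infty\le L^2/(8N)\le L^2/(4N)$. Equivalently, $E_\infty=N\,F(L/N)$ with $F(x)=x-\ln(1+x+x^2/2)\le x^2/8$.

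To repair the proof, establish the $b=1/N$ case directly from this computation. State the general bound with constant $1$ (or $1/2$ with $C^\star$) in place of $1/4$. Keep (S0.7)--(S0.8) explicitly conditional, as you already do.
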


\begin{remark}[Status of the conservative bound]
Equation~(S0.7) is a \textbf{conservative
heuristic design rule}. It is conservative because: (i) we use a
global upper bound $\|\phi''\|_\infty\le 2$ instead of the actual
curvature, (ii) we do not exploit the minimax-optimal $C$ shift.
It is heuristic (not a formal guarantee) because: (i) the
derivation assumes the operating range lies in the concavity region
of $\phi$, which may not hold for all detuning choices; (ii) the
second term $1/(2b^{2}N)$ arises from a simplified penalty model
for flank-curvature mismatch that has not been proved to be a
rigorous upper bound in all parameter regimes.
$N_{\mathrm{safe}}$ from Eq.~(S0.8) is therefore a
\textbf{screening estimate}, suitable for preliminary design-space
exploration but not a certified minimax guarantee.
\end{remark}

\subsection*{S0.4~~Validity scope and failure cases}

The derivations above hold under the stated assumptions. We now
identify the regimes where each assumption may break down.

\begin{enumerate}
\item[\textbf{(V1)}]
\textbf{Lorentzian model (A\ref{A1}).}
The single-ring Lorentzian form $T=[1+(a+bI)^{2}]^{-1}$ is a
near-resonance approximation valid when the probe frequency is within
a few linewidths of the resonance. Far from resonance, higher-order
dispersion, Fano interference, or multi-mode effects introduce
deviations. \emph{Failure case:} operation with very large detuning
($|a+bI|\gg 1$ across the full interval), where the Lorentzian tails
may not be accurate for high-$Q$ rings.

\item[\textbf{(V2)}]
\textbf{Multiplicative cascade (A\ref{A2}).}
Requires that inter-ring reflections and back-scattering are negligible
(forward-propagating coupling only, i.e.\ negligible back-reflection at each inter-ring junction). \emph{Failure case:} very high ring count
$N$ with non-negligible back-reflection per stage, which can produce
Fabry--P\'erot-like ripples in the cascade transfer function.

\item[\textbf{(V3)}]
\textbf{Identical-detuning family (A\ref{A3}).}
The Taylor expansion and conservative bound both assume
$a_1=\cdots=a_N$. In practice, fabrication variations introduce
per-ring detuning spread $\sigma_a$. The Monte Carlo analysis in
Sec.~S8 quantifies robustness, but the \emph{analytical} bounds
(S0.2)--(S0.8) are strictly valid only for identical detuning.

\item[\textbf{(V4)}]
\textbf{Linear control-to-resonance mapping (A\ref{A4}).}
The linearized model $\omega_0(I)=\omega_0^{(0)}+\eta I$ introduces
systematic error at large control amplitudes. For carrier-injection
(free-carrier plasma effect) or thermal tuning over wide ranges,
second-order nonlinearity in the control-to-detuning mapping can
exceed $1\%$. \emph{Failure case:} large $L$ requiring a control
swing exceeding the linearity range of the tuning mechanism.

\item[\textbf{(V5)}]
\textbf{Finite interval (A\ref{A5}).}
All bounds scale with $L$ (typically as $L^{2}$ or $L^{3/2}$).
As $L\to\infty$, $N$ grows without bound and insertion loss
accumulates ($\sim N\cdot IL_{\mathrm{stage}}$), eventually degrading
the probe SNR below the useful regime. There is no finite $N$ that
works for all $L$ simultaneously. \emph{Practical regime:}
$L\lesssim 10$--$12$ (consistent with $L_{\mathrm{eff}}$ at p90--p95
from Sec.~S3) is the primary target; $L\gtrsim 16$ requires
$N\gtrsim 30$ even for moderate tolerance, pushing loss budgets.

\item[\textbf{(V6)}]
\textbf{Flank-centered initialization (A\ref{A6}).}
The Taylor-based scaling (Theorem~\ref{thm:scaling}) relies on the
cancellation $\phi''(-1)=0$ at the half-maximum point. If the
operating point deviates (e.g., due to fabrication offset pushing
$a+bI_0$ away from $-1$), a nonzero quadratic residual appears and
the effective scaling worsens to $E_\infty\sim L^{2}/N$ rather
than $L^{3}/N^{2}$. \emph{Mitigation:} heater/bias trimming to
restore the flank condition.
\end{enumerate}

\subsection*{S0.5~~Mapping to main-text equations}

\noindent
For reference, the results derived here correspond to the following
main-text equations:
\begin{itemize}
\item \textbf{Slope bound} (Lemma~\ref{lem:slope}):
rigorous; corresponds to main-text Eqs.~(\ref{eq:slopebound})--(\ref{eq:Nb}).
This is a \emph{guaranteed} necessary condition.

\item \textbf{Engineering $N$-estimate} (Theorem~\ref{thm:scaling}):
heuristic scaling with empirical prefactor $\kappa\approx 0.07$;
corresponds to main-text Eq.~(\ref{eq:Nestimate}).
This is a \emph{heuristic design rule} calibrated against numerical fits.

\item \textbf{Conservative bound} (Corollary~\ref{cor:conservative}):
conservative but not rigorously certified as a minimax upper bound;
derived as Eq.~(S0.7) in this supplement, stated inline in Sec.~II.
This is a \emph{conservative heuristic screening condition}.

\item \textbf{$N_{\mathrm{safe}}$}
(Corollary~\ref{cor:conservative}, Eq.~S0.8):
the safe screening depth derived from the conservative bound;
derived as Eq.~(S0.8) in this supplement, stated inline in Sec.~II.
This is a \emph{conservative backstop estimate for preliminary design}.
\end{itemize}

\noindent\textbf{Summary of guarantee status:}\\[2pt]
\begin{tabular}{lll}
\hline
Result & Status & Main-text Eq.\\
\hline
Slope bound $N|b|\ge 1$ & Rigorous (proved) & (\ref{eq:Nb})\\
Scaling $N\sim\kappa L^{3/2}/\sqrt{\varepsilon_{\log}}$
  & Heuristic (Taylor truncation $+$ empirical $\kappa$) & (\ref{eq:Nestimate})\\
Bound $E_\infty\le L^{2}/(4N)+1/(2b^{2}N)$
  & Conservative heuristic & (S0.7)\\
$N_{\mathrm{safe}}$ screening depth
  & Conservative backstop & (S0.8)\\
\hline
\end{tabular}


\section*{S1. Depth-scaling derivation and conservative screening bound}

This section provides the detailed derivations underlying the depth-scaling
relations and conservative screening bounds summarized in the main text
(Sec.~II). These results complement the rigorous treatment in Sec.~S0.

\subsection*{S1.1~~Local expansion and exponential-like behavior}

To provide immediate local intuition (without changing the global minimax objective), let $\delta=I-I_0$ around the flank-centered point $I_0=L/2$ and impose $a+bI_0=-1$.
With the local normalization $C=2^{N}$ (so that $\tilde y(I_0)=1$), a third-order expansion of $\tilde y(I)=C[1+(a+bI)^2]^{-N}$ gives
\begin{equation}
\begin{aligned}
\tilde y(I)&\approx 1+Nb\,\delta+\frac{N^2}{2}b^2\delta^2+\frac{N(N^2-1)}{6}b^3\delta^3+\mathcal{O}(\delta^4),
\end{aligned}
\tag{S1.1}
\end{equation}
so with $b\sim 1/N$, the linear and quadratic coefficients align with those of $e^{\delta}=1+\delta+\delta^2/2+\delta^3/6+\cdots$, explaining why the initialization is already close before refinement.

\subsection*{S1.2~~Log-domain analysis and scaling derivation}

For depth scaling, the logarithmic domain is more transparent. Under the same flank centering ($a+bI_0=-1$), expand around $I_0=L/2$ with $\delta=I-I_0$ to obtain
\begin{equation}
\ln \tilde y(I)=\mathrm{const}+Nb\,\delta+\frac{Nb^3}{6}\,\delta^3+\mathcal{O}(\delta^4).
\tag{S1.2}
\end{equation}
At $a+bI_0=-1$, the quadratic term cancels identically in the log expansion; imposing slope matching ($Nb=1$) gives
\begin{equation}
\ln \tilde y(I)=\mathrm{const}+\delta+\frac{\delta^3}{6N^2}+\mathcal{O}(\delta^4).
\tag{S1.3}
\end{equation}
Hence the leading log-domain residual scales as $r(\delta)\sim \delta^3/N^2$. Over $I\in[0,L]$ with $|\delta|\le L/2$, this implies $E_\infty\sim L^3/N^2$. Requiring $E_\infty\le \varepsilon_{\log}$ leads to
\begin{equation}
N\propto \frac{L^{3/2}}{\sqrt{\varepsilon_{\log}}},
\tag{S1.4}
\end{equation}
which explains the scaling used in the main-text engineering estimate (Eq.~(\ref{eq:Nestimate})). This derivation is heuristic (not a formal guarantee), and the prefactor remains platform- and fitting-criterion dependent.

\subsection*{S1.3~~Conservative upper bound and screening depth}

For fixed $b$ and the identical-detuning family ($a_1=\cdots=a_N\equiv a$), one can write a conservative heuristic condition for achieving a prescribed log-tolerance.
A simple normalization is to enforce $\tilde y(L)=1$ (matching the target $f(L)=1$).
For a particular constructive choice of $a$ that keeps $(a+bI)$ large and negative across $[0,L]$, one can bound the worst-case log-error as
\begin{equation}
E_\infty \ \le\ \frac{L^2}{4N}+\frac{1}{2b^2 N}.
\tag{S1.5}
\end{equation}
(This is a conservative rule of thumb; obtaining a formal guarantee would require a separate proof.)
As a screening estimate (not a formal guarantee), one may use
\begin{equation}
N \ \ge\ \left\lceil \frac{L^2/4 + 1/(2b^2)}{\ln(1+\varepsilon)}\right\rceil.
\tag{S1.6}
\end{equation}
While this bound is typically pessimistic, it provides a conservative backstop-style estimate for preliminary design screening.
The rigorous derivation of these bounds, including the concavity conditions and slope-matching assumptions, is given in Sec.~S0.3.

\clearpage
\section*{S2. Worked example and empirical logit-range calibration}

This section provides the detailed worked example for the input-to-output
mapping and the empirical logit-range calibration tables referenced in the
main text (Sec.~III).

\subsection*{S2.1~~Worked input-to-output mapping example}

As a worked example, consider
\begin{equation}
x=[-3.2,\ 1.2,\ 4.8,\ -0.9].
\tag{S2.1}
\end{equation}
Compute $m=\max x_n=4.8$.
Then $u=x-m=[-8.0,-3.6,0,-5.7]$ and $L=-\min u_n=8.0$.
The mapped control-signal levels are
\begin{equation}
I=u+L=[0,\ 4.4,\ 8.0,\ 2.3],
\tag{S2.2}
\end{equation}
and the required normalized exponentials are $e^{x_n-m}=e^{u_n}=e^{I_n-L}$.
Using the fitted model directly,
\begin{equation*}
T_k(I_n)=\frac{1}{1+(a_k+bI_n)^2},
\qquad
y(I_n)=\prod_{k=1}^{N}T_k(I_n).
\end{equation*}
Under the identical-detuning fit ($a_1=\cdots=a_N\equiv a$), this becomes
\begin{equation*}
\tilde y(I_n)=C\,y(I_n)=C\left[\frac{1}{1+(a+bI_n)^2}\right]^N.
\end{equation*}
For the re-fitted parameters used in this example,
\begin{equation}
\begin{aligned}
a&=-1.4588,\quad b=0.10202,\\
N&=10,\quad C=3.0896\times 10^{1}.
\end{aligned}
\tag{S2.3}
\end{equation}
which gives
\begin{equation}
\begin{aligned}
\tilde y(I_n)&=C\left[\frac{1}{1+(a+bI_n)^2}\right]^N,\\
&\approx [3.44\times 10^{-4},\ 2.73\times 10^{-2},\\
&\qquad\ \ 9.74\times 10^{-1},\ 3.26\times 10^{-3}].
\end{aligned}
\tag{S2.4}
\end{equation}

For reference, the corresponding target terms are
\begin{equation}
I_n-L=[-8.0,\ -3.6,\ 0,\ -5.7],
\tag{S2.5}
\end{equation}
and
\begin{equation}
\begin{aligned}
\big[e^{I_n-L}\big]&\approx \big[3.35\times 10^{-4},\ 2.73\times 10^{-2},\\
&\qquad\ \ 1.00,\ 3.35\times 10^{-3}\big].
\end{aligned}
\tag{S2.6}
\end{equation}

\begin{table}[h!]
\centering
\caption{Example ($N=10$): approximating $e^{x_n-m}=e^{I_n-L}$ using $\tilde y(I)=C[1+(a+bI)^2]^{-N}$ with parameters re-fitted on $I\in[0,8.0]$ using the same minimax pipeline.}
\label{tab:S-example}
\begin{ruledtabular}
\begin{tabular}{rrrrr}
$x_n$ & $I_n$ & target $e^{x_n-m}$ & approx $\tilde y(I_n)$ & rel.\ err. \\
\midrule
$-3.2$ & $0.0$ & $3.3546\times10^{-4}$ & $3.4443\times10^{-4}$ & $2.673\%$ \\
$ 1.2$ & $4.4$ & $2.7324\times10^{-2}$ & $2.7325\times10^{-2}$ & $0.004\%$ \\
$ 4.8$ & $8.0$ & $1.0000$ & $0.9739$ & $2.608\%$ \\
$-0.9$ & $2.3$ & $3.3460\times10^{-3}$ & $3.2585\times10^{-3}$ & $2.614\%$ \\
\end{tabular}
\end{ruledtabular}
\end{table}

\subsection*{S2.2~~Effective-range percentiles and clipping calibration}

We first estimate the logit range observed in data and then choose clipping accordingly.
From two autoregressive Transformers (distilgpt2 and gpt2) and two public corpora (Tiny Shakespeare and \textit{Pride and Prejudice})~\cite{vaswani2017attention,radford2019gpt2,hf_distilgpt2_card,karpathy_tinyshakespeare,gutenberg_pride_prejudice} at context length 128, the effective range
\begin{equation}
L_{\mathrm{eff},\alpha}=\max(\log p_{\mathrm{kept}})-\min(\log p_{\mathrm{kept}}),\qquad \alpha=0.999,
\tag{S2.7}
\end{equation}
fell in a relatively narrow band, summarized in Table~\ref{tab:S-effective_range}.

\begin{table}[h!]
\centering
\small
\caption{Effective-range percentiles ($L_{\mathrm{eff},0.999}$) at context length 128.}
\label{tab:S-effective_range}
\begin{tabular}{lcc}
\toprule
Percentile & All runs (4 runs) & GPT-2 \\
\midrule
p50 & $6.92$--$7.23$ & $7.09$--$7.23$ \\
p90 & $8.60$--$8.75$ & $8.73$--$8.75$ \\
p95 & $8.97$--$9.12$ & $9.06$--$9.12$ \\
p99 & $9.50$--$9.69$ & $9.58$--$9.69$ \\
\bottomrule
\end{tabular}
\end{table}

We then test clipping on the same rows with
\begin{equation}
\begin{aligned}
E_{\mathrm{cum}}(t) &= \tfrac{1}{2}\|\mathrm{softmax}(u^{(t)})-\mathrm{softmax}(u)\|_1,\\
u^{(t)} &= \max(u,t),\quad u=s-\max(s).
\end{aligned}
\tag{S2.8}
\end{equation}
and require p99$\{E_{\mathrm{cum}}\}\le10^{-3}$ ($0.1\%$ budget).
This criterion is satisfied at $t=-12$ (p99 $\approx4.27\times10^{-4}$) and violated at $t=-11$ (p99 $\approx1.24\times10^{-3}$), so we set $t^*=-12$ ($N_{\mathrm{clip}}=12$).

In practice, we (i) estimate an effective $L$ from data, (ii) verify that fixed clipping keeps softmax error small, and (iii) choose representative design points (e.g., $L\approx8$ or $L\approx12$) while treating the clipped tail as negligible.
Full protocol details, clipping-sweep tables/plots, and per-run statistics are provided in Sec.~S3.

\subsection*{S2.3~~Illustrative synthetic range map}

As a design-space reference, we consider synthetic logit-range regimes using $L=\max(x)-\min(x)$ after $QK^\top/\sqrt{d_k}$ scaling.
These regimes are illustrative rather than corpus-level percentiles; using the same fitting pipeline, Table~\ref{tab:S-LLMrange} summarizes achievable approximation error versus depth.

\begin{table}[h!]
\centering
\caption{Synthetic softmax logit-range regimes ($L=\max(x)-\min(x)$) and fitted worst-case relative error (design-space illustration; not intended as corpus-level statistics).}
\label{tab:S-LLMrange}
\begin{ruledtabular}
\begin{tabular}{rcccc}
$L$ regime & $N=5$ & $N=10$ & $N=20$ & $N=30$ \\
\midrule
$L=8$ & $10.9\%$ & $2.68\%$ & $0.67\%$ & $0.30\%$ \\
$L=12$ & $40.0\%$ & $9.25\%$ & $2.27\%$ & $1.01\%$ \\
$L=16$ & $113\%$  & $23.0\%$ & $5.44\%$ & $2.41\%$ \\
\end{tabular}
\end{ruledtabular}
\end{table}

Table~\ref{tab:S-LLMrange} suggests a simple rule of thumb: the required depth depends mainly on the target $L$ regime. Near $L\approx 8$, moderate depth reaches a few-percent error, whereas $L\gtrsim 12$ typically requires deeper cascades to approach $<1\%$ error.

We include Table~\ref{tab:S-LLMrange} as a synthetic design map rather than an empirical benchmark.

\clearpage
\section*{S3. Empirical logit-range extraction from real Transformer runs}
We extracted empirical attention-logit ranges from real model runs to complement the synthetic $L$-regime map in the main text.
We used two open-source autoregressive Transformers (distilgpt2 and gpt2) and two public corpora (Tiny Shakespeare and \textit{Pride and Prejudice}), with context length 128 and causal masking.
For each valid attention row, if $p=\mathrm{softmax}(s)$ then the raw range is
\begin{equation}
L_{\mathrm{raw}}=\max(s)-\min(s)=\max(\log p)-\min(\log p),
\end{equation}
where max/min are taken over valid causal keys only.
Because very small tail probabilities can dominate $\min(\log p)$, we additionally report an effective range:
\begin{equation}
L_{\mathrm{eff},\alpha}=\max(\log p_{\mathrm{kept}})-\min(\log p_{\mathrm{kept}}),
\end{equation}
where keys are sorted by attention weight and retained until cumulative mass reaches $\alpha=0.999$.

To stay within a 16~GB RAM budget, we processed one model at a time, batch size 1, fixed windowing (stride 128), and streaming histogram quantiles.
Observed process RSS stayed below 1.24~GB in these runs.

\begin{table}[h!]
\centering
\small
\caption{Empirical global logit-range percentiles from real model--dataset runs (context length 128): raw vs effective ($\alpha=0.999$).}
\label{tab:s3_empirical_L}
\begin{tabular}{llcccccc}
\toprule
Model & Dataset & raw p95 & raw p99 & $L_{\mathrm{eff}}$ p50 & $L_{\mathrm{eff}}$ p90 & $L_{\mathrm{eff}}$ p95 & $L_{\mathrm{eff}}$ p99 \\
\midrule
distilgpt2 & tiny\_shakespeare & 22.82 & 69.00 & 7.10 & 8.60 & 8.97 & 9.50 \\
distilgpt2 & pride\_prejudice & 21.76 & 68.60 & 6.92 & 8.60 & 9.03 & 9.57 \\
gpt2 & tiny\_shakespeare & 25.48 & 43.34 & 7.23 & 8.73 & 9.06 & 9.58 \\
gpt2 & pride\_prejudice & 24.13 & 40.92 & 7.09 & 8.75 & 9.12 & 9.69 \\
\bottomrule
\end{tabular}
\end{table}

For quick linkage to the main manuscript: the effective-range summary quoted in the main text corresponds to this table (all runs: p50 $=6.92$--$7.23$, p90 $=8.60$--$8.75$, p95 $=8.97$--$9.12$, p99 $=9.50$--$9.69$), and the GPT-2 subset is p50 $=7.09$--$7.23$, p90 $=8.73$--$8.75$, p95 $=9.06$--$9.12$, p99 $=9.58$--$9.69$.

\begin{figure}[h!]
\centering
\includegraphics[width=0.88\linewidth]{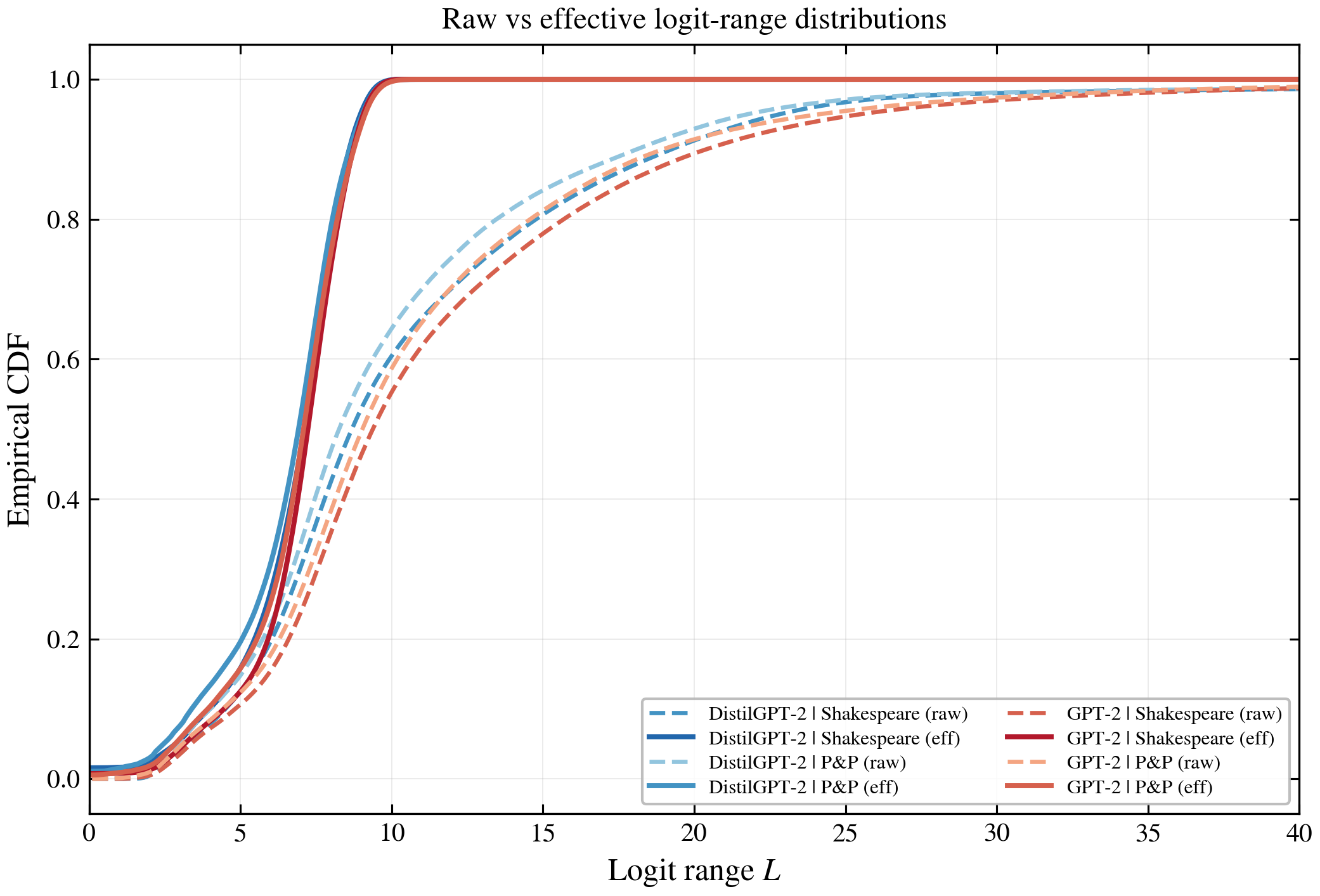}
\caption{Global CDFs of raw $L_{\mathrm{raw}}$ (dashed) and effective $L_{\mathrm{eff},0.999}$ (solid) for the four model--dataset runs.}
\label{fig:s3_cdf}
\end{figure}

\noindent\textbf{Clipping-validity sweep (additional justification).}
To test whether practical clipping magnitudes can be used without materially changing softmax outputs, we evaluated a thresholded-logit approximation.
For each row, define $u=s-\max(s)$ and, for threshold $t\le 0$,
\begin{equation}
u^{(t)}=\max(u,t),\qquad p^{(t)}=\mathrm{softmax}(u^{(t)}).
\end{equation}
We report the cumulative softmax error
\begin{equation}
E_{\mathrm{cum}}(t)=\frac{1}{2}\left\|p^{(t)}-p\right\|_1,
\end{equation}
then sweep $t\in\{-14,-13,\ldots,-6\}$ and compute p50/p90/p95/p99 of $E_{\mathrm{cum}}$ over all extracted rows.

\begin{table}[h!]
\centering
\small
\caption{Global clipping-validity sweep: percentile statistics of $E_{\mathrm{cum}}(t)$ versus clipping threshold $t$.}
\label{tab:s3_clip}
\begin{tabular}{ccccc}
\toprule
$t$ & p50 & p90 & p95 & p99 \\
\midrule
$-14$ & $2.53\times10^{-5}$ & $4.55\times10^{-5}$ & $4.80\times10^{-5}$ & $5.18\times10^{-5}$ \\
$-13$ & $2.69\times10^{-5}$ & $4.85\times10^{-5}$ & $7.38\times10^{-5}$ & $1.48\times10^{-4}$ \\
$-12$ & $2.99\times10^{-5}$ & $1.21\times10^{-4}$ & $2.13\times10^{-4}$ & $4.27\times10^{-4}$ \\
$-11$ & $3.31\times10^{-5}$ & $3.95\times10^{-4}$ & $6.55\times10^{-4}$ & $1.24\times10^{-3}$ \\
$-10$ & $3.72\times10^{-5}$ & $1.28\times10^{-3}$ & $2.01\times10^{-3}$ & $3.58\times10^{-3}$ \\
$-9$  & $4.41\times10^{-5}$ & $4.04\times10^{-3}$ & $6.11\times10^{-3}$ & $1.03\times10^{-2}$ \\
$-8$  & $2.25\times10^{-4}$ & $1.26\times10^{-2}$ & $1.83\times10^{-2}$ & $2.91\times10^{-2}$ \\
$-7$  & $2.76\times10^{-3}$ & $3.85\times10^{-2}$ & $5.30\times10^{-2}$ & $7.89\times10^{-2}$ \\
$-6$  & $1.88\times10^{-2}$ & $1.11\times10^{-1}$ & $1.43\times10^{-1}$ & $1.95\times10^{-1}$ \\
\bottomrule
\end{tabular}
\end{table}

\begin{figure}[h!]
\centering
\includegraphics[width=0.86\linewidth]{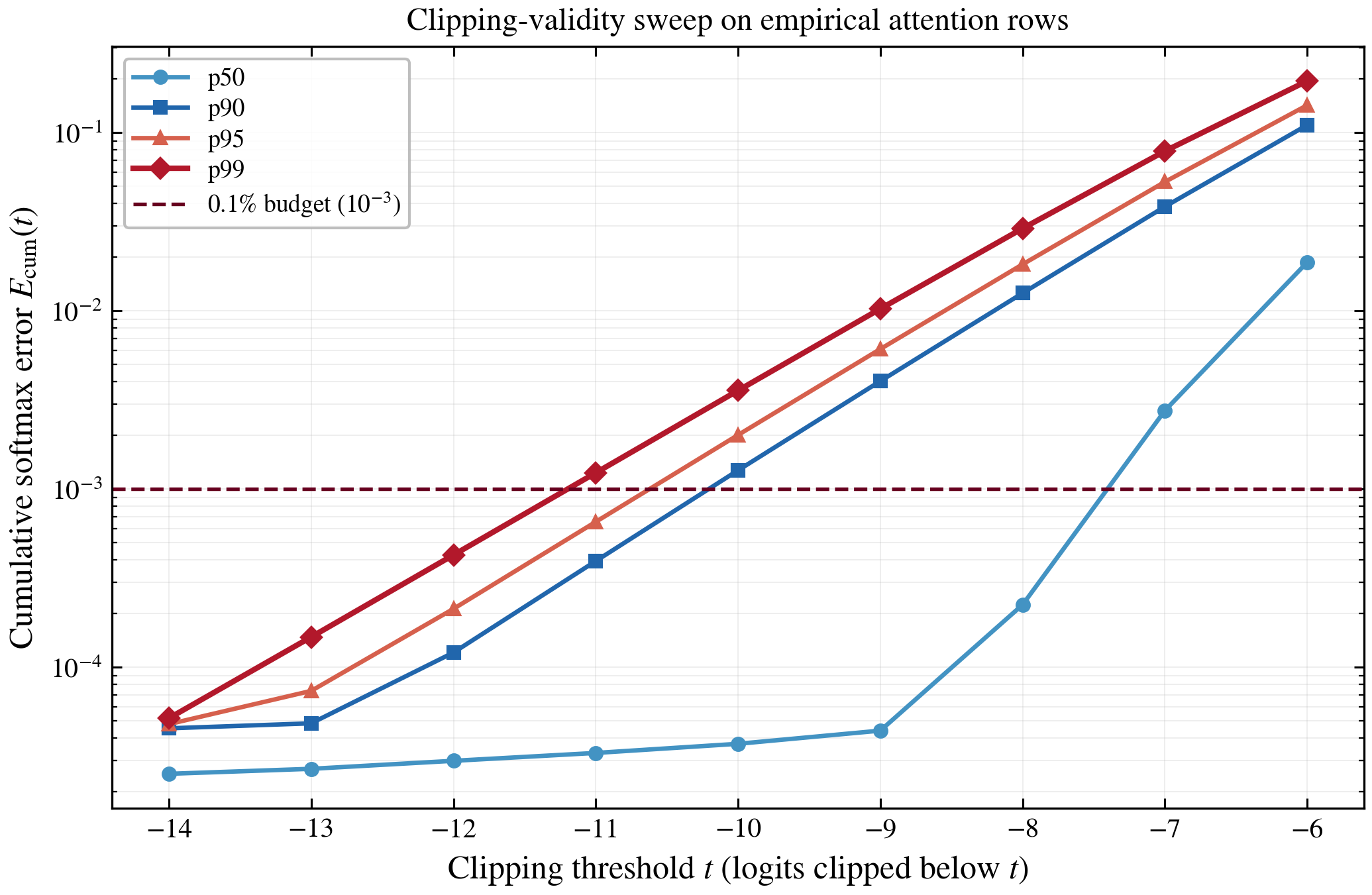}
\caption{Percentile curves of cumulative softmax error $E_{\mathrm{cum}}(t)$ versus clipping threshold $t$. The dashed line marks the $0.1\%$ budget ($10^{-3}$).}
\label{fig:s3_clip}
\end{figure}

Under a conservative budget criterion p99$\{E_{\mathrm{cum}}\}\le 10^{-3}$, the least negative admissible threshold in this sweep is $t^*=-12$ (p99 $\approx 4.27\times10^{-4}$).
Equivalently, the operational clipping magnitude is $N_{\mathrm{clip}}\equiv -t^*=12$.
Notably, this is closely aligned with the empirical effective-range scale (Table~\ref{tab:s3_empirical_L}: p99 of $L_{\mathrm{eff},0.999}$ up to $\approx9.69$), indicating that clipping-constrained implementation and effective-range statistics operate in the same order-of-magnitude range budget.
This supports using a practical clipping magnitude comparable to the design range scale ($L\approx N_{\mathrm{clip}}$) while keeping aggregate softmax distortion below $0.1\%$.

\clearpage
\section*{S4. FDTD methodology details and X-cut $b_V$ derivation}

This section provides the detailed FDTD simulation methodology, the
step-by-step X-cut arc electrode voltage sensitivity derivation, and the full
cascade optimization table referenced in the main text (Sec.~IV--V).

\subsection*{S4.1~~z-refined 3-fix simulation strategy}

For thin-film \ce{LiNbO3} structures, special care is required in the
vertical ($z$) direction due to the high index contrast between
\ce{LiNbO3} ($n_o \approx 2.21$) and \ce{SiO2} ($n \approx 1.44$)
and the sub-micron film thickness.  We apply a ``z-refined 3-fix''
strategy:
\begin{enumerate}
  \item \textbf{Ordinary index correction}: the material model uses
    the corrected ordinary refractive index $n_o$ appropriate for
    the TE mode in X-cut geometry, rather than the extraordinary
    index $n_e$ that governs TM propagation;
  \item \textbf{$z$-span expansion}: the simulation $z$-span is
    extended beyond the minimal waveguide region to include sufficient
    substrate and superstrate so that evanescent field tails
    are captured without PML truncation artifacts;
  \item \textbf{Auto-mesh}: accuracy level~3; conformal variant~1 meshing
    is enabled, and no manual mesh override is applied.
    The resulting vertical grid spacing in the slab region is
    approximately \SI{55}{\nano\meter}, providing ${\sim}2$ cells
    across the \SI{100}{\nano\meter} slab.
\end{enumerate}
This refinement strategy is critical for obtaining converged results
in TFLN ring resonators, where the high-$Q$ spectral features are
sensitive to numerical dispersion in under-resolved thin
films~\cite{zhu2021aop}.
Table~\ref{tab:fdtd_params} lists the full simulation parameters.

\begin{table}[h!]
\centering
\caption{3D FDTD simulation parameters (Lumerical).}
\label{tab:fdtd_params}
\begin{ruledtabular}
\begin{tabular}{ll}
Parameter & Value \\
\hline
Solver              & Lumerical 3D FDTD \\
Mesh type           & Conformal variant~1 \\
Mesh accuracy       & 3 (auto-mesh) \\
$z$-mesh override   & None (auto-mesh) \\
Simulation time     & \SI{50}{\pico\second} \\
Auto shutoff        & $1 \times 10^{-6}$ \\
Wavelength range    & \SIrange{1530}{1570}{\nano\meter} \\
Grid size           & $532 \times 816 \times 44$ \\
Source              & Broadband mode source (TE$_0$) \\
\end{tabular}
\end{ruledtabular}
\end{table}

\subsection*{S4.2~~X-cut arc electrode $b_V$ step-by-step derivation}

For the X-cut circular ring with lateral S--G arc electrodes
(Table~\ref{tab:device_params}), the crystal Z-axis (c-axis) is oriented
at $45^\circ$ from the horizontal axis in the substrate plane.  At
azimuthal angle $\theta$ around the ring, the projection of the
lateral electric field onto the Z-axis is proportional to
$\cos(\theta - 45^\circ)$.  The $\cos(\theta - 45^\circ) = 0$
boundaries fall at $\theta = 135^\circ$ and $\theta = 315^\circ$,
naturally separating the bus-waveguide coupling regions from the
electrode regions.  Each ring carries a full semicircular arc electrode
on the side opposite to its coupling points.  By the substitution
$\varphi = \theta - 45^\circ$, the effective EO fill factor is
\begin{equation}
  f_\mathrm{EO}
  = \frac{1}{2\pi}\int_\mathrm{semicircle} |\cos(\theta - 45^\circ)|\,d\theta
  = \frac{1}{2\pi}\int_{-\pi/2}^{+\pi/2} \cos\varphi\,d\varphi
  = \frac{1}{2\pi}\bigl[\sin\varphi\bigr]_{-\pi/2}^{+\pi/2}
  = \frac{1}{\pi}
  \approx 0.318.
  \tag{S4.1}
\end{equation}
The $45^\circ$ rotation ensures that the electrode semicircle does not
overlap with the coupling points, while the fill factor integral is
identical to the standard $\cos\theta$ case by the change of variable.

The lateral S--G electrodes have gap $g_\mathrm{el} = \SI{5}{\micro\meter}$,
giving an effective electrode--waveguide distance
$d_\mathrm{eff} \approx g_\mathrm{el}/2 = \SI{2.5}{\micro\meter}$.
The lateral field geometry yields an EO overlap factor
$\Gamma_\mathrm{EO} = 0.7$, compared to $0.5$ for a vertical
electrode configuration.

The refractive index change per volt in the electrode-covered section is
\begin{equation}
  \frac{\Delta n_\mathrm{eff}}{V}
  = -\frac{1}{2}\,n_e^3\,r_{33}\,
    \frac{\Gamma_\mathrm{EO}}{d_\mathrm{eff}}
  = -\frac{1}{2} \times 2.138^3 \times 30.9 \times 10^{-12}
    \times \frac{0.7}{2.5 \times 10^{-6}}
  = -4.226 \times 10^{-5}\;\mathrm{V^{-1}}.
  \tag{S4.2}
\end{equation}
The corresponding resonance wavelength shift is
\begin{equation}
  \left.\frac{d\lambda_0}{dV}\right|_\mathrm{straight}
  = \frac{1550 \times 4.226 \times 10^{-5}}{2.30}
  = \SI{28.48}{\pico\meter\per\volt},
  \tag{S4.3}
\end{equation}
giving an intrinsic (straight-section) voltage sensitivity of
\begin{equation}
  b_V^\mathrm{straight}
  = \frac{2Q_L}{\lambda_0}\left|\frac{d\lambda_0}{dV}\right|_\mathrm{straight}
  = \frac{2 \times 15{,}500}{1550} \times 0.02848
  = \SI{0.570}{\per\volt}.
  \tag{S4.4}
\end{equation}
However, only the arc-electrode portion of the ring circumference
contributes to the round-trip phase shift.  The effective voltage
sensitivity is therefore
\begin{equation}
  b_V
  = b_V^\mathrm{straight} \times f_\mathrm{EO}
  = 0.570 \times \frac{1}{\pi}
  \approx \SI{0.182}{\per\volt}.
  \tag{S4.5}
\end{equation}
A $\SI{1}{\volt}$ applied voltage shifts the normalized detuning by
$\Delta a \approx 0.182$.
Despite the fill-factor penalty ($f_\mathrm{EO} = 1/\pi \approx 0.318$),
the X-cut arc design benefits from a smaller effective electrode distance
($\SI{2.5}{\micro\meter}$ vs.\ $\SI{4}{\micro\meter}$ for vertical
configurations) and a higher overlap factor ($0.7$ vs.\ $0.5$),
which partially compensate the reduced active length.

\subsection*{S4.3~~Full cascade optimization table}

Table~\ref{tab:S-cascade} presents the complete optimization results
for the standard dynamic range $L = 8$ (corresponding to
$e^8 \approx 2981$, i.e., \SI{34.7}{\decibel}), covering all cascade
depths from $N=5$ to $N=30$.

\begin{table}[h!]
\centering
\caption{Cascade optimization results for $L = 8$.  The bias voltage
$V_\mathrm{bias} = |a|/b_V$ sets the DC offset, and
$V_\mathrm{ctrl} = bL/b_V$ is the maximum control voltage at
$I = L$.  Voltages computed with
$b_V = \SI{0.182}{\per\volt}$ (FDTD-calibrated best resonance
$Q_L = 15{,}500$).}
\label{tab:S-cascade}
\begin{ruledtabular}
\begin{tabular}{rcccccc}
$N$ & $a$ & $b$ & $E_\infty$ & $\varepsilon_\mathrm{max}$ (\%)
    & $V_\mathrm{bias}$ (\si{\volt}) & $V_\mathrm{ctrl}$ (\si{\volt}) \\
\hline
 5  & $-2.0789$ & 0.21658 & 0.1035 & 10.91 & 11.4 &  9.5 \\
 8  & $-1.5959$ & 0.12896 & 0.0412 &  4.20 &  8.8 &  5.7 \\
10  & $-1.4588$ & 0.10202 & 0.0265 &  2.68 &  8.0 &  4.5 \\
12  & $-1.3731$ & 0.08450 & 0.0184 &  1.86 &  7.5 &  3.7 \\
15  & $-1.2914$ & 0.06726 & 0.0118 &  1.19 &  7.1 &  3.0 \\
17  & $-1.2543$ & 0.05923 & 0.0092 &  0.92 &  6.9 &  2.6 \\
20  & $-1.2136$ & 0.05025 & 0.0067 &  0.67 &  6.7 &  2.2 \\
25  & $-1.1685$ & 0.04013 & 0.0043 &  0.43 &  6.4 &  1.8 \\
30  & $-1.1392$ & 0.03341 & 0.0030 &  0.30 &  6.3 &  1.5 \\
\end{tabular}
\end{ruledtabular}
\end{table}

Key thresholds for the minimum number of rings at various error
targets are:
\begin{itemize}
  \item $\varepsilon < 10\%$: $N \geq 6$,
  \item $\varepsilon < 5\%$:  $N \geq 8$,
  \item $\varepsilon < 2\%$:  $N \geq 12$,
  \item $\varepsilon < 1\%$:  $N \geq 17$,
  \item $\varepsilon < 0.5\%$: $N \geq 24$.
\end{itemize}
These thresholds are \emph{independent} of the quality factor $Q$,
since the minimax approximation operates entirely in normalized
detuning space.  The $Q$ factor affects only the physical voltage
required to achieve the necessary detuning range, through $b_V$.

\subsection*{S4.4~~Lorentzian fit validation}

Figure~\ref{fig:single_ring_lorentzian} shows the Lorentzian fit
to the FDTD drop-port resonance near $\lambda = \SI{1566}{\nano\meter}$.
The analytical Lorentzian
$T_\mathrm{drop}(\Delta\lambda) = A/[1 + (2Q\Delta\lambda/\lambda_0)^2]$
with $Q_L = 15{,}500$ closely tracks the FDTD data, validating the
single-ring transfer function model used in the cascade analysis.

\begin{figure}[h!]
\centering
\includegraphics[width=0.7\textwidth]{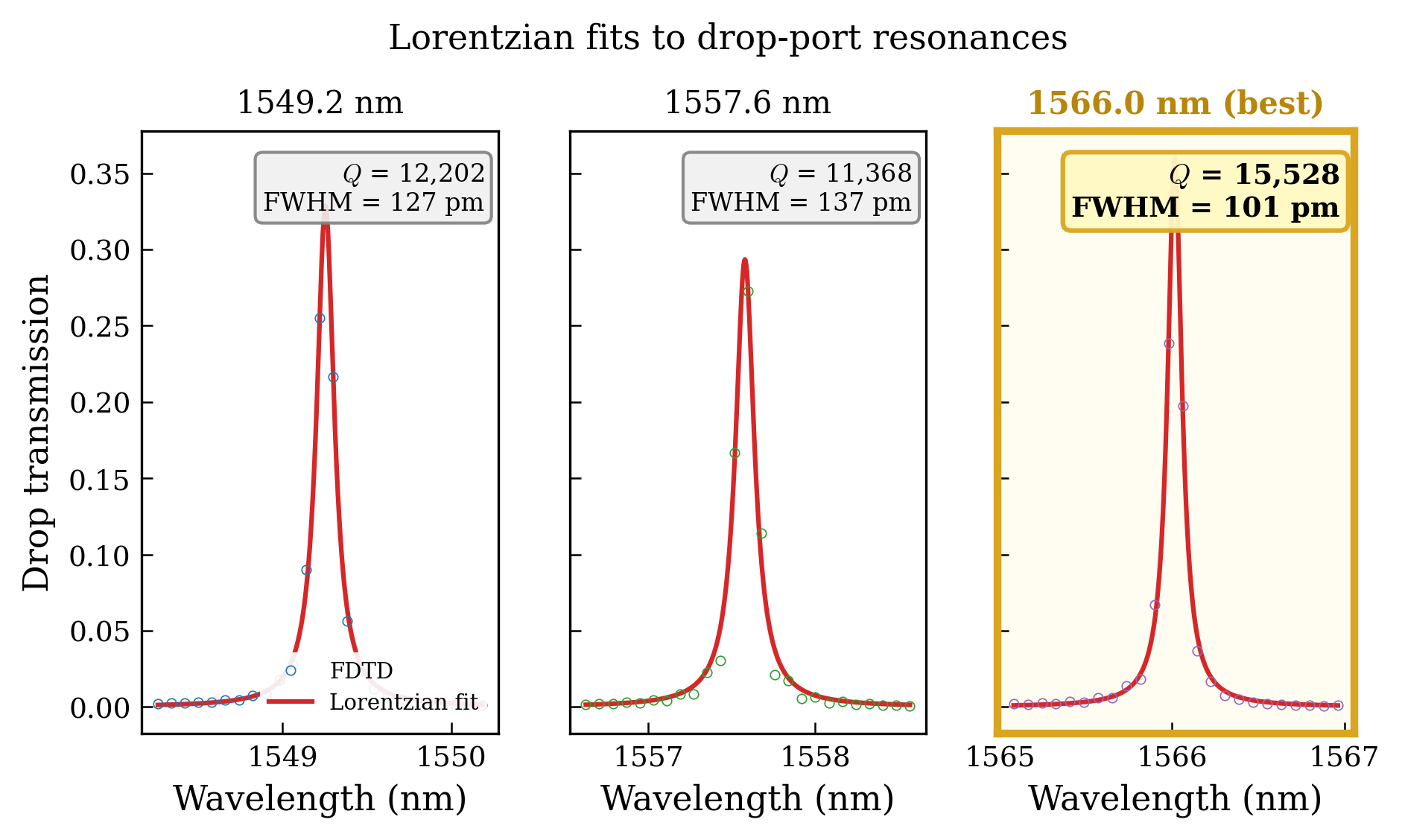}
\caption{Lorentzian fit to the FDTD drop-port resonance.  Markers:
FDTD data; solid line: Lorentzian fit.  The extracted quality factor
is $Q_L = 15{,}500$ with
$\mathrm{FWHM} = \SI{101}{\pico\meter}$.}
\label{fig:single_ring_lorentzian}
\end{figure}

\subsection*{S4.5~~Eigenmode (FDE) analysis of theoretical $Q_i$}

To quantify how far below the physical limit the FDTD-extracted
$Q_i = 38{,}800$ lies, we perform a two-dimensional
finite-difference eigenmode (FDE) analysis of the bent rib
waveguide cross-section using Lumerical MODE Solutions.

\paragraph{Setup.}
The FDE solver models the cross-section of the rib waveguide at
the design bend radius $R = \SI{20}{\micro\meter}$ and wavelength
$\lambda = \SI{1550}{\nano\meter}$, with perfectly matched layer
(PML) boundaries on all four edges.
The geometry is identical to the 3D FDTD model:
\SI{600}{\nano\meter} total \ce{LiNbO3}
($n_o = 2.211$, lossless dielectric),
\SI{100}{\nano\meter} slab, \SI{500}{\nano\meter} rib etch,
waveguide width $W = \SI{1.4}{\micro\meter}$, on a
\SI{2}{\micro\meter} \ce{SiO2} substrate ($n = 1.444$) with air
cladding.
The mesh is set to $300 \times 300$ cells over a
$\SI{6}{\micro\meter} \times \SI{3}{\micro\meter}$ cross-section,
yielding effective grid spacings
$\Delta x \approx \SI{20}{\nano\meter}$ and
$\Delta y \approx \SI{10}{\nano\meter}$---substantially finer than
the 3D FDTD auto-mesh (\SI{55}{\nano\meter} vertical).

\paragraph{Complex effective index.}
The FDE solver returns a complex effective index
$n_\mathrm{eff} = n_r + i\,n_i$ for each guided mode, where the
imaginary part $n_i$ encodes propagation loss.
For the fundamental TE mode at $R = \SI{20}{\micro\meter}$:
\begin{align}
  n_\mathrm{eff} &= 1.9653 + i\,(4.73 \times 10^{-8}),
  \label{eq:fde_neff} \\
  \alpha_\mathrm{rad+leak}
    &= \frac{4\pi\,n_i}{\lambda}
    = \SI{0.383}{\per\meter}
    \;\bigl(\SI{0.017}{\decibel\per\centi\meter}\bigr).
  \label{eq:fde_alpha}
\end{align}
Since the material is set as lossless, this $\alpha$ captures only
bending radiation loss and substrate leakage through the
\SI{100}{\nano\meter} slab.  The corresponding quality factor is
\begin{equation}
  Q_\mathrm{rad+leak}
  = \frac{2\pi\,n_g}{\alpha_\mathrm{rad+leak}\,\lambda}
  = 2.43 \times 10^7,
  \label{eq:fde_Q_radleak}
\end{equation}
where $n_g = 2.354$ is the group index from the FDE solver (consistent
with the FDTD FSR-derived $n_g = 2.30$; the small difference arises
from the straight-section approximation inherent to 2D FDE).

\paragraph{Decomposition into bending and leakage.}
A separate FDE run with $R = \SI{1}{\milli\meter}$ (effectively
straight) yields $Q_\mathrm{leak} = 2.93 \times 10^7$, isolating
the substrate leakage contribution.  The pure bending radiation
quality factor follows from
\begin{equation}
  \frac{1}{Q_\mathrm{bend}}
  = \frac{1}{Q_\mathrm{rad+leak}}
  - \frac{1}{Q_\mathrm{leak}},
  \quad
  Q_\mathrm{bend} = 1.43 \times 10^8.
  \label{eq:fde_Q_bend}
\end{equation}
This confirms that bending radiation loss at
$R = \SI{20}{\micro\meter}$ is negligible; substrate leakage
through the thin slab is the dominant geometric loss channel.

\paragraph{Material absorption.}
The FDE mode profile yields a confinement factor
$\Gamma = 0.887$ (fraction of the optical intensity within the
\ce{LiNbO3} core and slab regions).  The material-absorption-limited
quality factor is
\begin{equation}
  Q_\mathrm{abs}
  = \frac{2\pi\,n_g}{\Gamma\,\alpha_\mathrm{mat}\,\lambda},
  \label{eq:fde_Q_abs}
\end{equation}
where $\alpha_\mathrm{mat}$ is the bulk material power-attenuation
coefficient of \ce{LiNbO3} at \SI{1550}{\nano\meter}.
Table~\ref{tab:theoretical_Qi} evaluates Eq.~\eqref{eq:fde_Q_abs}
for representative TFLN absorption values from the
literature~\cite{zhu2021aop, hu2024integratedeo}.

\begin{table}[h!]
\centering
\caption{Theoretical intrinsic quality factor $Q_i$ of the
$R = \SI{20}{\micro\meter}$ TFLN ring, decomposed into radiation
($Q_\mathrm{bend}$), substrate leakage ($Q_\mathrm{leak}$), and
material absorption ($Q_\mathrm{abs}$).
Sidewall scattering (fabrication-dependent) is excluded.
The total is
$1/Q_i = 1/Q_\mathrm{rad+leak} + 1/Q_\mathrm{abs}$
with $Q_\mathrm{rad+leak} = 2.43 \times 10^7$.}
\label{tab:theoretical_Qi}
\begin{ruledtabular}
\begin{tabular}{lccc}
Material condition
  & $\alpha_\mathrm{mat}$ (dB/cm)
  & $Q_\mathrm{abs}$
  & $Q_i$ (total) \\
\hline
Bulk \ce{LiNbO3} (pristine)
  & 0.002 & $2.3 \times 10^8$ & $2.2 \times 10^7$ \\
High-quality TFLN
  & 0.01  & $4.7 \times 10^7$ & $1.6 \times 10^7$ \\
Good TFLN
  & 0.03  & $1.6 \times 10^7$ & $9.5 \times 10^6$ \\
Typical TFLN
  & 0.1   & $4.7 \times 10^6$ & $3.9 \times 10^6$ \\
\end{tabular}
\end{ruledtabular}
\end{table}

For high-quality TFLN
($\alpha_\mathrm{mat} \lesssim
  \SI{0.01}{\decibel\per\centi\meter}$),
the theoretical $Q_i$ exceeds $10^7$---more than $400\times$ higher
than the FDTD-extracted value of $38{,}800$.
This confirms that the FDTD result is dominated by numerical mesh
artifacts (approximately two cells across the \SI{100}{\nano\meter}
slab), not by physical loss mechanisms.
Bending radiation loss at $R = \SI{20}{\micro\meter}$ is negligible
($Q_\mathrm{bend} = 1.43 \times 10^8$); the dominant geometric
loss channel in the ideal structure is substrate leakage through
the thin slab ($Q_\mathrm{leak} = 2.93 \times 10^7$).

\clearpage
\section*{S5. Fabricated high-$Q$ design projections}

Reproducing $Q_i > 10^5$ in three-dimensional FDTD is
computationally impractical: at accuracy level~3 the
\SI{100}{\nano\meter} slab requires $\Delta z \lesssim
\SI{20}{\nano\meter}$ to suppress staircase-induced scattering,
inflating wall times beyond 30~days per run.  The numerically
extracted $Q_i = 38{,}800$ therefore represents a \emph{simulation}
floor, not a physical one.  A two-dimensional MODE-solver bend
analysis confirms $Q_\mathrm{bend} > 4.5 \times 10^7$ for
$R = \SI{20}{\micro\meter}$, placing bending radiation loss far
below any realistic intrinsic loss.

Table~\ref{tab:highQ_literature} surveys recent high-$Q$ TFLN
microring demonstrations.  These studies show that
$Q_i \ge 9 \times 10^6$ has been demonstrated in X-cut TFLN
using multiple fabrication routes, including Ar$^+$ milling,
wet etching, and ICP-RIE/CMP-based processes.

\begin{table}[h]
\centering
\caption{Demonstrated intrinsic quality factors in TFLN micro-ring
resonators.  ``EO compatible'' indicates whether the fabrication
process preserves electrode patterning capability.}
\label{tab:highQ_literature}
\begin{ruledtabular}
\begin{tabular}{lcccc}
Ref. & $Q_i$ & $R$ ($\mu$m) & $w$ ($\mu$m) & Etch \\
\hline
Zhang~\cite{zhang2017highQ_optica}
  & $10^7$              & 80  & ${\sim}2$   & Ar$^+$ mill \\
Gao~\cite{gao2022ultrahighQ}
  & $10^8$              & 100 & ${\sim}3$   & CMP$^\ast$ \\
Zhuang~\cite{zhuang2023wetetch}
  & $9{\times}10^6$     & 100 & ${\sim}2$   & Wet etch \\
Song~\cite{zhu2024highQ_racetrack}
  & $2.9{\times}10^7$   & 200 & 4.5         & ICP-RIE+CMP \\
\end{tabular}
\end{ruledtabular}
\vspace{-2pt}
{\footnotesize All processes except $^\ast$ are EO-electrode
compatible. $^\ast$CMP-only (no dry etch); subsequent electrode
patterning may degrade $Q_i$.}
\end{table}

To project cascade performance into the fabricated regime,
we fix $Q_\mathrm{ext} = 25{,}800$ (the FDTD-extracted coupling
quality factor at gap $= \SI{100}{\nano\meter}$) and compute
$D_\mathrm{max} = [Q_i/(Q_i + Q_\mathrm{ext})]^2$ for three
representative intrinsic quality factors
(Table~\ref{tab:projected_cascade}).

\begin{table}[h]
\centering
\caption{Projected cascade transmission for fabricated $Q_i$ values
at fixed $Q_\mathrm{ext} = 25{,}800$.
$D_\mathrm{max}^N$ is the ideal on-resonance cascade transmission in dB.
The minimax approximation error $\varepsilon_\mathrm{max}$ depends only
on $N$ and $L$ (not on $Q_i$); at $N = 20$, $L = 8$:
$\varepsilon_\mathrm{max} = 0.67\%$ (Table~\ref{tab:Ncompare}).}
\label{tab:projected_cascade}
\begin{ruledtabular}
\begin{tabular}{llcccc}
Projection & $Q_i$ & $D_\mathrm{max}$ & $N{=}10$ & $N{=}20$
  & $N{=}30$ \\
\hline
FDTD baseline
  & $3.88{\times}10^4$ & 0.36 & $-44.3$ & $-88.5$ & $-132.8$ \\
Conservative
  & $5{\times}10^5$    & 0.90 & $-4.4$  & $-8.8$  & $-13.2$ \\
Moderate
  & $10^6$             & 0.95 & $-2.2$  & $-4.5$  & $-6.7$ \\
Optimistic
  & $5{\times}10^6$    & 0.99 & $-0.44$ & $-0.88$ & $-1.3$ \\
\end{tabular}
\end{ruledtabular}
\end{table}

Even in the conservative scenario ($Q_i = 5 \times 10^5$),
$D_\mathrm{max} = 0.90$ and the $N = 10$ cascade loss is only
$-4.4\;\mathrm{dB}$---an order-of-magnitude improvement over the
FDTD baseline.  The moderate projection ($Q_i = 10^6$) matches the
``fabricated high-$Q$'' column in
Table~\ref{tab:power_budget}.  Because
$Q_\mathrm{bend} \approx 4.5 \times 10^7 \gg Q_i$ for all
projections, bending loss is never the bottleneck; the dominant
loss mechanism is sidewall scattering, which is determined entirely
by fabrication quality.  The literature values in
Table~\ref{tab:highQ_literature} support the view that intrinsic quality factors in the projected range are physically
achievable in TFLN---albeit with wider waveguides
($w \ge \SI{2}{\micro\meter}$) and larger ring radii
($R \ge \SI{80}{\micro\meter}$) than the present design.
Transferring comparable sidewall quality to our geometry
($R = \SI{20}{\micro\meter}$, $w = \SI{1.4}{\micro\meter}$) is an
open fabrication challenge; the projections in
Table~\ref{tab:projected_cascade} should be read as design targets
contingent on achieving it.

\clearpage
\section*{S6. Insertion loss budget details}

For a cascade of $N$ rings, the total insertion loss is modeled as
\begin{equation}
  IL_{\mathrm{tot}} \approx N \cdot IL_{\mathrm{stage}}
    + IL_{\mathrm{coupling}},
  \tag{S6.1}
\end{equation}
where $IL_{\mathrm{stage}}$ is the per-ring insertion loss at off-resonance
operation and $IL_{\mathrm{coupling}}$ accounts for fiber-to-chip and
chip-to-fiber coupling losses.
Using typical loss numbers from the
literature~\cite{wang2024siliconroadmap,leijtens2018multimode,li2023inmemoryphotonic,vermeulen2010grating,tan2022microringloss},
we consider two scenarios:

\begin{itemize}
\item \textbf{Optimistic:}
  $IL_{\mathrm{stage}} = 0.08\;\mathrm{dB}$,
  $IL_{\mathrm{coupling}} = 1.5\;\mathrm{dB}$.
  Then $IL_{\mathrm{tot}} \approx 1.90\;\mathrm{dB}$ ($N = 5$),
  $2.30\;\mathrm{dB}$ ($N = 10$),
  $3.10\;\mathrm{dB}$ ($N = 20$),
  and $3.80\;\mathrm{dB}$ ($N = 30$).

\item \textbf{Conservative:}
  $IL_{\mathrm{stage}} = 0.25\;\mathrm{dB}$,
  $IL_{\mathrm{coupling}} = 3.0\;\mathrm{dB}$.
  Then $IL_{\mathrm{tot}} \approx 4.25\;\mathrm{dB}$ ($N = 5$),
  $5.50\;\mathrm{dB}$ ($N = 10$),
  $8.00\;\mathrm{dB}$ ($N = 20$),
  and $10.5\;\mathrm{dB}$ ($N = 30$).
\end{itemize}

In both scenarios, $N = 5$--$10$ is manageable for probe-power budgeting,
whereas $N = 20$ and $N = 30$ require tighter power budgeting and more
amplification margin.  Higher $IL_{\mathrm{tot}}$ raises the required
probe SNR and pushes operation closer to the detector noise floor,
reducing usable dynamic range.

\paragraph{Four-component loss breakdown.}
The total insertion loss of the cascade has four components:
\begin{enumerate}
  \item \textbf{On-resonance cascade transmission} $D_\mathrm{max}^N$
    (dominant; see Table~\ref{tab:power_budget});
  \item \textbf{Inter-ring coupling loss}
    $(N-1)\times(-10\log_{10}\eta_\mathrm{coupling})$, where
    $\eta_\mathrm{coupling}$ is the power transfer efficiency at each
    inter-ring bus section.  Two-ring FDTD yields
    $\eta_\mathrm{coupling} \approx 0.9$ for the present
    diagonal-bus geometry, corresponding to
    ${\sim}0.46\;\mathrm{dB}$ per inter-ring stage;
  \item \textbf{Off-resonance propagation loss}
    $N \times IL_\mathrm{stage}$, where
    $IL_\mathrm{stage} = 0.08$--$0.25\;\mathrm{dB}$ per
    ring~\cite{wang2024siliconroadmap,leijtens2018multimode,li2023inmemoryphotonic,tan2022microringloss};
  \item \textbf{Fiber-to-chip coupling loss}
    $IL_\mathrm{coupling} = 1.5$--$3.0\;\mathrm{dB}$~\cite{vermeulen2010grating}.
\end{enumerate}
Table~\ref{tab:power_budget} presents the \emph{ideal} on-resonance
budget ($D_\mathrm{max}^N$ only).  Including all four components for
the present diagonal-bus layout:
in the FDTD-characterized regime ($D_\mathrm{max} = 0.36$, $N = 5$),
the total loss is approximately $22.2 + 1.8 + 0.4 + 1.5 \approx
26\;\mathrm{dB}$;
in the fabricated high-$Q$ regime ($D_\mathrm{max} = 0.95$, $N = 30$),
the total loss is $6.7 + 13.3 + 2.4 + 1.5 \approx 24\;\mathrm{dB}$.
The inter-ring coupling loss dominates in the high-$Q$ regime,
underscoring that layout optimization (e.g., adiabatic tapers or
straight-bus coupling) is as important as achieving
$D_\mathrm{max} \geq 0.95$ through quality-factor improvement.
For an optimized layout with $\eta_\mathrm{coupling} \geq 0.98$
(${\le}0.09\;\mathrm{dB}$ per stage), the $N = 30$ total loss
would reduce to ${\sim}13\;\mathrm{dB}$.

\clearpage
\section*{S7. Energy efficiency detailed derivation}

This section provides the detailed energy-per-operation derivations for
both electrical analog exponential circuits and the photonic MRR cascade,
as summarized in the main text (Sec.~V).

\subsection*{S7.1~~Electrical analog exponential circuits}

Three main families of electrical circuits realize the exponential
function in the analog domain:

\paragraph{BJT translinear / Gilbert cell circuits.}
The collector current of a bipolar junction transistor is
$I_C = I_S \exp(V_{BE}/V_T)$, providing an intrinsic exponential
map~\cite{Gilbert1975,Mead1989}.
A Gilbert cell multiplier---the core building block of translinear
exponential circuits---dissipates $250$--$325\;\mu\mathrm{W}$ in
typical CMOS/BiCMOS implementations~\cite{GilbertCell2009}.
At a signal bandwidth of $B \approx 100\;\mathrm{MHz}$, the energy
per operation is
\begin{equation}
  E_{\mathrm{Gilbert}} = \frac{P}{B}
    = \frac{300\;\mu\mathrm{W}}{100\;\mathrm{MHz}}
    = 3\;\mathrm{pJ}.
  \tag{S7.1}
\end{equation}

\paragraph{CMOS subthreshold exponential circuits.}
A MOSFET in weak inversion exhibits
$I_D \propto \exp(V_{GS}/nV_T)$, enabling direct exponential
computation at ultra-low power~\cite{Mead1989}.
A reconfigurable softmax circuit in $180\;\mathrm{nm}$ CMOS
implements a 10-input softmax at
$V_{DD} = 500\;\mathrm{mV}$ with
$P = 3\;\mu\mathrm{W}$~\cite{Softmax180nm2021}.
Per-channel: $P_{\mathrm{exp}} \approx 0.43\;\mu\mathrm{W}$.
At $B \approx 1\;\mathrm{MHz}$ (limited by subthreshold $f_T$):
\begin{equation}
  E_{\mathrm{sub\text{-}V_T}}
    = \frac{0.43\;\mu\mathrm{W}}{1\;\mathrm{MHz}}
    = 0.43\;\mathrm{pJ}.
  \tag{S7.2}
\end{equation}
This is the most energy-efficient electrical approach, but at
severely limited bandwidth (${\sim}1\;\mathrm{MHz}$).

\paragraph{Digital CMOS (for reference).}
A digital exponential via Taylor series requires ${\sim}10$
multiply-add operations.
Using Horowitz's energy figures~\cite{Horowitz2014} for
$45\;\mathrm{nm}$ at $0.9\;\mathrm{V}$: 32-bit FP multiply costs
$3.7\;\mathrm{pJ}$, FP add costs $0.9\;\mathrm{pJ}$, giving
\begin{equation}
  E_{\mathrm{digital}} \approx 10 \times (3.7\;\mathrm{pJ}
    + 0.9\;\mathrm{pJ}) = 46\;\mathrm{pJ}.
  \tag{S7.3}
\end{equation}
At 8-bit precision (sufficient for inference): ${\sim}2.3\;\mathrm{pJ}$.

\subsection*{S7.2~~Photonic MRR cascade: single-channel energy derivation}

We evaluate the energy at $N = 30$ cascaded X-cut TFLN micro-ring
resonators with $R = 20\;\mu\mathrm{m}$ in the fabricated high-$Q$
regime ($Q_i = 10^6$, $Q_L \approx 25{,}200$; Supplementary Sec.~S5),
which achieves $\varepsilon_{\mathrm{max}} = 0.30\%$ with
$V_{\mathrm{ctrl}} = 0.91\;\mathrm{V}$ (fully CMOS-compatible).
The energy per exponential operation has three components:

\textit{(i) Electro-optic tuning energy.}
Each ring is tuned by charging the arc electrode capacitance to
$V_{\mathrm{ctrl}}$.  For the lateral S--G arc electrodes covering
one semicircle ($L_\mathrm{arc} = \pi R = \SI{62.8}{\micro\meter}$),
the electrode capacitance is estimated as
\begin{align}
  C_{\mathrm{el}} &\approx \SI{18}{\femto\farad},
  \tag{S7.4}
\end{align}
based on coplanar electrode modeling for TFLN lateral S--G
geometries with $g_\mathrm{el} = \SI{5}{\micro\meter}$
(comparable to values reported by
Bahadori~\textit{et al.}~\cite{bahadori2020oe} for similar
geometries).
The switching energy per ring at
$V_{\mathrm{ctrl}} = 0.91\;\mathrm{V}$
(using the projected $Q_L = 25{,}200$, which gives
$b_V = 0.295\;\mathrm{V}^{-1}$):
\begin{equation}
  E_{\mathrm{ring}} = \tfrac{1}{2}\,C_{\mathrm{el}}\,
    V_{\mathrm{ctrl}}^2
    = \tfrac{1}{2} \times 18\;\mathrm{fF}
      \times (0.91\;\mathrm{V})^2
    = 7.4\;\mathrm{fJ}.
  \tag{S7.5}
\end{equation}
For $N = 30$ rings:
$E_{\mathrm{EO}} = 30 \times 7.4 = 0.22\;\mathrm{pJ}$.

Note the important scaling: $E_{\mathrm{EO}} \propto 1/N$ since
$b \propto 1/N$ from minimax optimization, because
\begin{equation}
  E_{\mathrm{EO}} \propto N \times V_{\mathrm{ctrl}}^2
    \propto N \times (b/b_V)^2 \propto 1/N.
  \tag{S7.6}
\end{equation}
The bias voltage ($3.9\;\mathrm{V}$) is static and does not contribute
per-operation energy.

\textit{(ii) Laser source energy (amortized).}
Because every cascade channel uses the same fixed probe wavelength,
a single CW laser can be shared among $M$ parallel softmax channels
via a $1 \times M$ optical power splitter.
With wall-plug efficiency
$\eta_{\mathrm{WPE}} \approx 15\%$~\cite{WDMLaser2013},
the per-channel optical power is
$P_{\mathrm{opt}} = P_{\mathrm{in}}/M \approx 100\;\mu\mathrm{W}$
(for $P_{\mathrm{in}} = 1\;\mathrm{mW}$, $M = 10$), requiring
$P_{\mathrm{laser}} \approx 667\;\mu\mathrm{W}$ per channel.
At $f_{\mathrm{mod}} = 10\;\mathrm{GHz}$:
$E_{\mathrm{laser}} = 667\;\mu\mathrm{W}\,/\,10\;\mathrm{GHz}
= 67\;\mathrm{fJ}$.

\textit{(iii) Photodetector energy.}
Integrated SiGe photodetectors with TIA achieve sub-pJ
reception~\cite{Miller2017attojoule}:
$E_{\mathrm{PD}} \approx 0.5\;\mathrm{pJ}$.

The total single-channel energy is
\begin{equation}
  E_{\mathrm{photonic}}^{(1\mathrm{ch})}
    = E_{\mathrm{EO}} + E_{\mathrm{laser}} + E_{\mathrm{PD}}
    = 0.22 + 0.07 + 0.50
    = 0.79\;\mathrm{pJ}.
  \tag{S7.7}
\end{equation}

\subsection*{S7.3~~$Q$-factor scaling of energy efficiency}

Since $V_{\mathrm{ctrl}} \propto 1/Q$ and
$E_{\mathrm{EO}} \propto V_{\mathrm{ctrl}}^2$,
the EO energy scales as $1/Q^2$.
Table~\ref{tab:energy_vs_Q} shows the total energy for $N = 30$ at
various quality factors.

\begin{table}[h!]
\centering
\caption{Energy per exponential operation vs.\ quality factor
($N = 30$, $\varepsilon_{\mathrm{max}} = 0.30\%$,
X-cut arc electrode with $b_V$ scales linearly with $Q$;
$C_\mathrm{el} = 18\;\mathrm{fF}$).
$E_{\mathrm{laser}} + E_{\mathrm{PD}} = 0.57\;\mathrm{pJ}$
is the $Q$-independent floor.  The dagger ($\dagger$) marks the FDTD-calibrated quality factor;
the double dagger ($\ddagger$) marks the high-$Q$ design point
($Q_i = 10^6$).
Excludes thermal stabilization ($0.15$--$0.60\;\mathrm{pJ}$ for $N=30$).}
\label{tab:energy_vs_Q}
\begin{ruledtabular}
\begin{tabular}{rcccc}
$Q$ & $V_{\mathrm{ctrl}}$ (V) & $V_{\mathrm{bias}}$ (V)
    & $E_{\mathrm{EO}}$ (pJ) & $E_{\mathrm{total}}$ (pJ) \\
\hline
  5{,}000  & 4.57 & 19.5 & 5.64 & 6.21 \\
 10{,}000  & 2.28 &  9.7 & 1.40 & 1.97 \\
 12{,}500  & 1.83 &  7.8 & 0.90 & 1.47 \\
 \textbf{15{,}500}$^\dagger$ & \textbf{1.47} & \textbf{6.3} & \textbf{0.58} & \textbf{1.15} \\
 20{,}000  & 1.14 &  4.9 & 0.35 & 0.92 \\
 \textbf{25{,}200}$^\ddagger$ & \textbf{0.91} & \textbf{3.9} & \textbf{0.22} & \textbf{0.79} \\
 30{,}000  & 0.76 &  3.2 & 0.16 & 0.73 \\
 50{,}000  & 0.46 &  1.9 & 0.06 & 0.63 \\
\end{tabular}
\end{ruledtabular}
\end{table}

At $Q_L = 15{,}500$ (FDTD-calibrated), the EO contribution
($0.58\;\mathrm{pJ}$) is comparable to the optical floor,
placing the design in the efficient operating regime.
Beyond $Q \approx 30{,}000$, the EO contribution becomes negligible
and the total energy saturates near the floor; further $Q$ improvement
primarily benefits CMOS driver voltage compatibility rather than energy.

\paragraph{Additional energy contributions.}
The estimates above exclude two further contributions:
(i)~DAC energy for setting the per-ring control voltages, typically
$0.1$--$1\;\mathrm{pJ}$ per conversion at $10\;\mathrm{GHz}$
bandwidth; and
(ii)~thermal stabilization power for maintaining resonance alignment,
estimated at ${\sim}50$--$200\;\mu\mathrm{W}$ per ring for TFLN
(lower than silicon due to the small thermo-optic coefficient of
\ce{LiNbO3}, $dn/dT \approx 3.9 \times 10^{-6}\;\mathrm{K}^{-1}$).
At $10\;\mathrm{GHz}$ modulation rate, the thermal contribution amounts
to ${\sim}0.005$--$0.02\;\mathrm{pJ}$ per ring per operation.
For the $N = 30$ cascade, this sums to $0.15$--$0.60\;\mathrm{pJ}$,
which is comparable to $E_{\mathrm{EO}}$ and must be included in the
total: $E_{\mathrm{total}} \approx 0.94$--$1.39\;\mathrm{pJ}$.
The total energy comparison should therefore be treated as an
order-of-magnitude estimate.

\subsection*{S7.4~~Comparison with electronic implementations}

Here we provide an order-of-magnitude energy comparison between electrical
analog exponential circuits and our photonic MRR cascade, grounding
the analysis in published device data and first-principles estimates.
We assume a shared CW laser with total optical output
$P_\mathrm{in,tot} = 1\;\mathrm{mW}$, split across $M = 10$ parallel
softmax channels via a $1 \times M$ power splitter, yielding
per-channel input $P_\mathrm{in,ch} = 100\;\mu\mathrm{W}$.
The output power at the cascade drop port is
$P_\mathrm{out} = P_\mathrm{in,ch} \times D_\mathrm{max}^N$, which
ranges from $0.61\;\mu\mathrm{W}$ (FDTD regime, $N = 5$) to
$21.5\;\mu\mathrm{W}$ (fabricated regime, $N = 30$)
(Table~\ref{tab:power_budget}).

\paragraph{Electrical analog exponential circuits.}
Three main families of electrical analog exponential circuits are compared:
BJT translinear/Gilbert cell ($\sim 3$~pJ at 100~MHz~\cite{Gilbert1975,Mead1989,GilbertCell2009}),
CMOS subthreshold ($\sim 0.43$~pJ at 1~MHz~\cite{Mead1989,Softmax180nm2021}),
and digital FP32 Taylor series ($\sim 46$~pJ at 1~GHz~\cite{Horowitz2014}).

\paragraph{Photonic MRR cascade: single-channel energy.}
For $N = 30$ X-cut TFLN micro-ring resonators in the
self-consistent high-$Q$ regime ($Q_L = 25{,}200$),
the three energy components are EO tuning
($E_\mathrm{EO} = 0.22$~pJ), amortized laser
($E_\mathrm{laser} = 0.07$~pJ, shared across $M = 10$
parallel channels), and photodetector
($E_\mathrm{PD} = 0.50$~pJ), yielding
$E_\mathrm{photonic} = 0.79$~pJ.
Including thermal stabilization for $N = 30$ rings
($0.15$--$0.60\;\mathrm{pJ}$),
the total rises to $0.94$--$1.39\;\mathrm{pJ}$.
Notably, $E_\mathrm{EO} \propto 1/N$ since $b \propto 1/N$ from minimax optimization.

\paragraph{Single-channel comparison.}
Table~\ref{tab:exp_comparison} presents the comparison.
The photonic cascade at $N = 30$ achieves $0.79\;\mathrm{pJ}$
baseline---$3.8\times$ lower than the BJT Gilbert cell
($3\;\mathrm{pJ}$) and $58\times$ lower than digital FP32
($46\;\mathrm{pJ}$).
Including thermal stabilization ($0.94$--$1.39\;\mathrm{pJ}$),
the advantage over INT8 ($2.3\;\mathrm{pJ}$) is
$1.7$--$2.4\times$, while operating at $10\;\mathrm{GHz}$ bandwidth.
At fabricated $Q \ge 30{,}000$, $E_\mathrm{EO}$ drops to $0.16\;\mathrm{pJ}$ and $E_\mathrm{total} \approx 0.73\;\mathrm{pJ}$ (excluding thermal; Table~\ref{tab:energy_vs_Q}), recovering a $3.2\times$ advantage over INT8.
Subthreshold CMOS achieves the lowest energy ($0.43\;\mathrm{pJ}$)
but at $10{,}000\times$ lower bandwidth.

\begin{table}[h!]
\centering
\caption{Energy per exponential operation: single-channel comparison.}
\label{tab:exp_comparison}
\begin{ruledtabular}
\begin{tabular}{lccc}
Implementation & $E/\mathrm{op}$ (pJ) & Bandwidth & Notes \\
\hline
Digital FP32 (Taylor)                & ${\sim}46$  & 1\,GHz   & 10 FP MACs \\
BJT Gilbert cell                     & ${\sim}3$   & 100\,MHz & Analog \\
Digital INT8 (Taylor)                & ${\sim}2.3$ & 1\,GHz   & 10 INT MACs \\
\textbf{Photonic MRR ($N\!=\!30$)}   & $\mathbf{0.94}$--$\mathbf{1.39}$ & \textbf{10\,GHz} & \textbf{Analog$^\dagger$} \\
Subthreshold CMOS                    & ${\sim}0.43$ & 1\,MHz  & Analog \\
\end{tabular}
\end{ruledtabular}
{\footnotesize $^\dagger$0.79~pJ excluding thermal; 0.94--1.39~pJ including thermal. Self-consistent with fabricated high-$Q$ regime ($Q_L = 25{,}200$); see Supplementary Sec.~S7.}
\end{table}

\paragraph{Caveats.}
These values are order-of-magnitude estimates, not device-accurate predictions.
The photonic estimate excludes DAC energy for voltage generation
(typically $0.1$--$1\;\mathrm{pJ}$ per conversion at
$10\;\mathrm{GHz}$~bandwidth, shared with any analog approach) and
thermal tuning power for maintaining resonance alignment
(${\sim}50$--$200\;\mu\mathrm{W}$ per ring for TFLN, lower than
silicon due to the small thermo-optic coefficient of \ce{LiNbO3},
$dn/dT \approx 3.9 \times 10^{-6}\;\mathrm{K}^{-1}$).
Effective precision at the photodetector is limited to ${\sim}$6--8~bits
by shot noise and receiver electronics.
The energy advantage over electrical implementations is strongest in
the fabricated high-$Q$ regime ($D_\mathrm{max} \geq 0.95$), where
$N = 30$ is practical and $V_\mathrm{ctrl}$ remains CMOS-compatible.

\clearpage
\section*{S8. Monte Carlo robustness under device non-idealities}
This section describes the robustness model summarized in the main text.
For the fitted $L=8$, $N=10$ design ($a=-1.4588$, $b=0.10202$), each Monte Carlo chip sample includes:
(i) per-ring static detuning spread,
(ii) per-ring sensitivity spread,
(iii) global thermal drift and crosstalk-like slope drift,
(iv) stage insertion-loss variation,
(v) control-channel noise,
and (vi) detector noise with one-point calibration at $I=L$.

For ring $k$, we use
\begin{equation}
T_k(I)=\frac{1}{1+\left(a_k+b_k I + d_{\mathrm{th}} + d_{\mathrm{xt}}\,I/L\right)^2},
\end{equation}
with
\begin{equation}
y(I)=\prod_{k=1}^{N} T_k(I)\times 10^{-IL_{\mathrm{tot}}/10},
\end{equation}
and one-point calibration $\tilde y(I)=C_{\mathrm{cal}}y(I)$ such that $\tilde y(L)=1$ for the same chip instance.

\begin{table}[h!]
\centering
\caption{Non-ideality distributions used in the Monte Carlo sweeps.}
\label{tab:s1_params}
\begin{tabular}{lcc}
\toprule
Parameter & Nominal & Stress \\
\midrule
$\sigma_a$ & 0.020 & 0.032 \\
$\sigma_{b,\mathrm{rel}}$ & 0.020 & 0.032 \\
$\sigma_{\mathrm{th}}$ & 0.015 & 0.025 \\
$\sigma_{\mathrm{xt}}$ & 0.012 & 0.020 \\
$\sigma_I$ & 0.004 & 0.007 \\
$IL_{\mathrm{stage}}$ (dB, $\mu\pm\sigma$) & $0.12\pm0.03$ & $0.18\pm0.05$ \\
$\sigma_{\mathrm{det}}$ & $3.0\times 10^{-6}$ & $6.0\times 10^{-6}$ \\
\bottomrule
\end{tabular}
\end{table}

\begin{table}[h!]
\centering
\caption{Monte Carlo summary (same run reported in main text).}
\label{tab:s1_summary}
\begin{tabular}{lcc}
\toprule
Metric & Nominal & Stress \\
\midrule
Median KL$(p_{\mathrm{ref}}\|p_{\mathrm{approx}})$ & $2.17\times10^{-4}$ & $7.39\times10^{-4}$ \\
p95 KL$(p_{\mathrm{ref}}\|p_{\mathrm{approx}})$ & $5.92\times10^{-4}$ & $2.21\times10^{-3}$ \\
Median max $|\Delta p|$ & $0.170\%$ & $0.193\%$ \\
p95 max $|\Delta p|$ & $0.319\%$ & $0.419\%$ \\
\bottomrule
\end{tabular}
\end{table}

\begin{figure}[h!]
\centering
\includegraphics[width=0.82\linewidth]{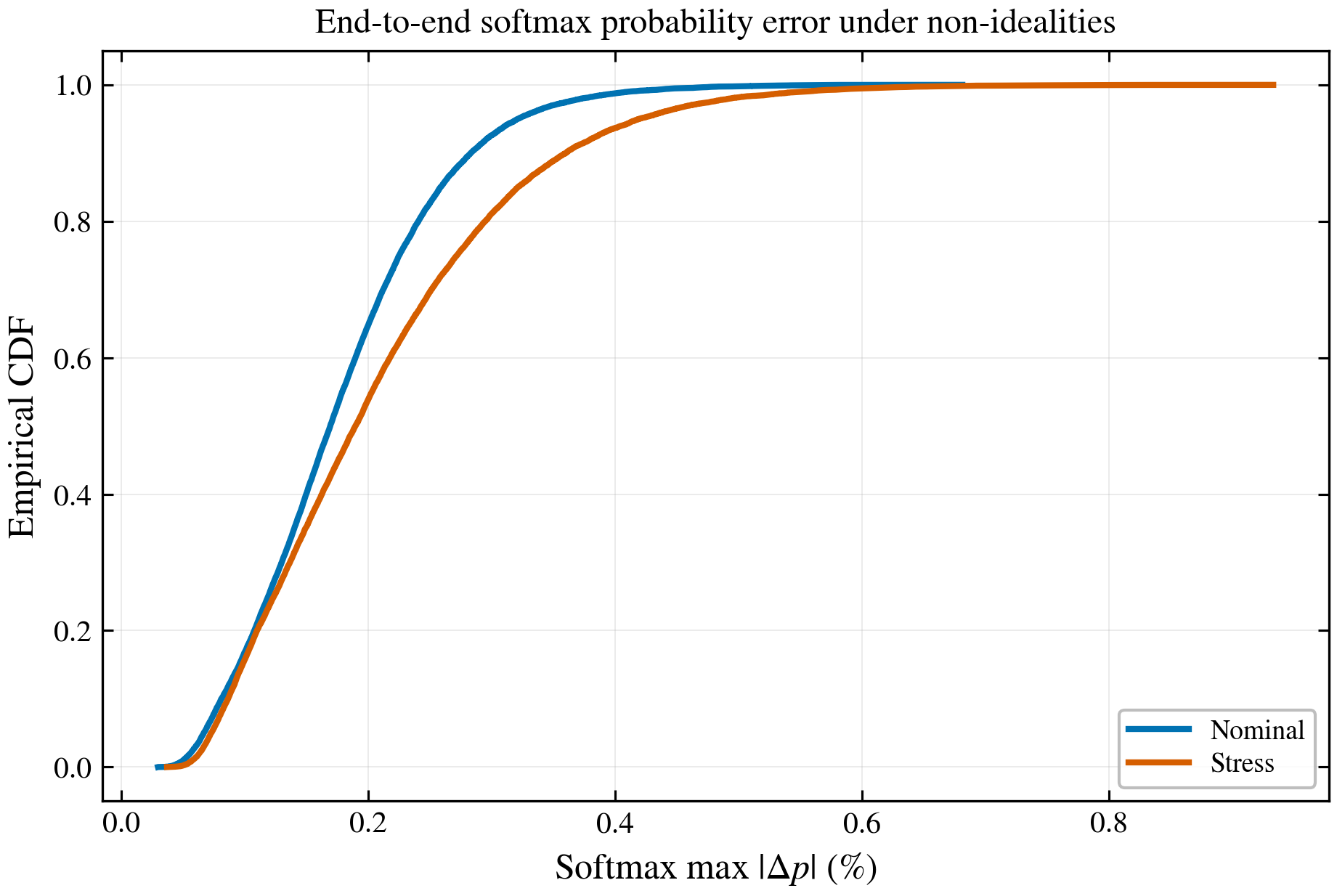}
\caption{CDF of end-to-end softmax probability error under the same non-ideality samples.}
\label{fig:s1_softmax_cdf}
\end{figure}

\noindent\textbf{Conservative-bound sketch used for the main-text screening equation.}
For the identical-detuning family with fixed $b$, define
\begin{equation}
\ln \tilde y(I)=N\,\phi(a+bI)-N\,\phi(a+bL),\qquad \phi(u)=-\ln(1+u^2),
\end{equation}
so that $\tilde y(L)=1$ by construction. Around a constructive choice with $a+bI<0$ on $[0,L]$, a second-order remainder argument for the mismatch between the target slope and the fitted slope yields a term scaling as $L^2/(4N)$, while the flank-curvature penalty contributes a term scaling as $1/(2b^2N)$. Combining the two contributions gives the screening inequality
\begin{equation}
E_\infty \lesssim \frac{L^2}{4N}+\frac{1}{2b^2N},
\end{equation}
which leads to the conservative screening equation reported in the main manuscript. We emphasize that this is a conservative heuristic design rule (not a formal minimax theorem), used only for preliminary depth screening.

\clearpage
\section*{S9. Delay-aware feedback normalization validation}
We model global normalization as a delayed PI-controlled loop:
\begin{align}
S(t) &= G(t)P(t)+n(t),\\
\tau\frac{dP}{dt} &= -P(t)+u(t-T_d),\\
u(t) &= K_p e(t)+K_i\int e(t)\,dt,\qquad e(t)=S_{\mathrm{ref}}-S(t),
\end{align}
with actuator saturation $0\le u\le P_{\max}$.
A piecewise $G(t)$ profile is used to emulate workload changes.
For physical intuition, Table~\ref{tab:s2_tau_examples} converts normalized delay/settling metrics into absolute-time examples.

\begin{table}[h!]
\centering
\caption{Example absolute-time interpretation of normalized PI-loop metrics using one representative stable case ($(K_p,K_i,T_d/\tau)=(0.55,0.8,0.2)$) and a $\pm2\%$ settling-time definition ($T_{\mathrm{settle}}\sim12.4\tau$).}
\label{tab:s2_tau_examples}
\begin{tabular}{cccc}
\toprule
Assumed $\tau$ & Delay $T_d=0.2\tau$ & Settling $\sim12.4\tau$ & Interpretation \\
\midrule
$100\,\mathrm{ns}$ & $20\,\mathrm{ns}$ & $1.24\,\mu\mathrm{s}$ & fast loop \\
$1\,\mu\mathrm{s}$ & $200\,\mathrm{ns}$ & $12.4\,\mu\mathrm{s}$ & moderate loop \\
$5\,\mu\mathrm{s}$ & $1\,\mu\mathrm{s}$ & $62\,\mu\mathrm{s}$ & slower loop \\
\bottomrule
\end{tabular}
\end{table}

\noindent\textbf{Reference-backed latency context for bottleneck screening.}
To place the delayed-loop times against mixed-signal system latencies, Table~\ref{tab:s2_tsys} summarizes representative time scales with explicit path classes (on-chip vs off-chip) for memory and interconnect paths, alongside conversion latency ranges. These are intentionally order-of-magnitude ranges (not fixed constants), and can shift with architecture, clocking, and protocol stack choices.

\begin{table}[h!]
\centering
\small
\caption{Representative subsystem latency ranges used for conservative bottleneck screening in Sec.~S9.}
\label{tab:s2_tsys}
\begin{tabular}{lll}
\toprule
Subsystem path & $T_{\mathrm{sys}}$ & Sources \\
\midrule
On-chip memory (L1/L2) & $20$--$200\,\mathrm{ns}$ & \cite{jia2018volta} \\
Off-chip memory (DRAM) & $200$--$700\,\mathrm{ns}$ & \cite{jia2018volta,kim2012salp} \\
ADC conversion & $10$--$710\,\mathrm{ns}$ & \cite{ti_adc12dj3200,ti_ads8881} \\
DAC + driver/settling & $1$--$200\,\mathrm{ns}$ & \cite{ti_dac38rf82} \\
On-chip interconnect (NoC) & $5$--$100\,\mathrm{ns}$ & \cite{dally2001route} \\
Off-chip I/O (PCIe/CXL) & $1$--$10\,\mu\mathrm{s}$ & \cite{jia2018volta,sano2023cxl} \\
\bottomrule
\end{tabular}
\end{table}

\noindent\textbf{Conservative risk-screening heuristic for loop latency.}
As a screening heuristic, we use the settling time from one representative stable case ($(K_p,K_i,T_d/\tau)=(0.55,0.8,0.2)$; Table~\ref{tab:s2_step}), with settling defined as the first time entering and remaining within a $\pm2\%$ band around $S_{\mathrm{ref}}$, as a normalization-loop latency proxy:
\begin{equation}
T_{\mathrm{norm}} \approx 12.4\,\tau.
\label{eq:s2_tnorm}
\end{equation}
This value is \emph{not} a universal bound; different gain settings, delay ratios, or loop architectures will yield different settling times.
It is used only as a reference point for order-of-magnitude risk screening.
Define the conservative screening metric
\begin{equation}
T_{\mathrm{norm}} \ge \beta\,T_{\mathrm{sys}},
\label{eq:s2_bottleneck_cond}
\end{equation}
with $\beta=1$ (high-risk screening line) and $\beta=0.5$ (early-warning line); this is a heuristic risk indicator, not a formal dominance proof. The corresponding threshold is
\begin{equation}
\tau_{\mathrm{crit}}(\beta)=\frac{\beta\,T_{\mathrm{sys}}}{12.4}.
\label{eq:s2_taucrit}
\end{equation}
Table~\ref{tab:s2_taucrit} gives the resulting numeric ranges.

\begin{table}[h!]
\centering
\caption{Computed $\tau_{\mathrm{crit}}$ ranges from Eq.~(\ref{eq:s2_taucrit}) using Table~\ref{tab:s2_tsys}.}
\label{tab:s2_taucrit}
\begin{tabular}{lccc}
\toprule
Subsystem & $T_{\mathrm{sys}}$ range & $\tau_{\mathrm{crit}}$ ($\beta=0.5$) & $\tau_{\mathrm{crit}}$ ($\beta=1$) \\
\midrule
On-chip memory path & $20$--$200$ ns & $0.81$--$8.06$ ns & $1.61$--$16.13$ ns \\
Off-chip memory path & $200$--$700$ ns & $8.06$--$28.23$ ns & $16.13$--$56.45$ ns \\
ADC conversion & $10$--$710$ ns & $0.40$--$28.63$ ns & $0.81$--$57.26$ ns \\
DAC+driver/settling & $1$--$200$ ns & $0.04$--$8.06$ ns & $0.08$--$16.13$ ns \\
On-chip interconnect (NoC) & $5$--$100$ ns & $0.20$--$4.03$ ns & $0.40$--$8.06$ ns \\
Off-chip I/O fabric & $1$--$10\,\mu$s & $0.04$--$0.40\,\mu$s & $0.08$--$0.81\,\mu$s \\
\bottomrule
\end{tabular}
\end{table}

\noindent For the explicit examples in Table~\ref{tab:s2_tau_examples},
$\tau=0.1\,\mu\mathrm{s}$ gives $T_{\mathrm{norm}}\approx1.24\,\mu\mathrm{s}$,
$\tau=1\,\mu\mathrm{s}$ gives $T_{\mathrm{norm}}\approx12.4\,\mu\mathrm{s}$,
and $\tau=5\,\mu\mathrm{s}$ gives $T_{\mathrm{norm}}\approx62\,\mu\mathrm{s}$.
These numbers indicate a risk trend (not a hard boundary): for this representative case, the normalization loop is typically non-dominant when $\tau$ is well below the relevant $\tau_{\mathrm{crit}}$ band, and it may become dominant as $\tau$ approaches or exceeds that band.
The transition depends on path class (on-chip vs off-chip) and on architecture-specific timing closure, including whether the normalization path lies on the end-to-end critical path (Table~\ref{tab:s2_tsys}).
\textbf{Accordingly, this analysis is intended for preliminary risk screening only; concrete implementations require full timing validation.}

\begin{table}[h!]
\centering
\caption{Representative step-response cases for the delayed PI loop (settling defined by a $\pm2\%$ band around $S_{\mathrm{ref}}$).}
\label{tab:s2_step}
\begin{tabular}{lcccc}
\toprule
Case & $(K_p,K_i,T_d/\tau)$ & Overshoot & Settling & Stable \\
\midrule
Stable & $(0.55,0.8,0.2)$ & $25.6\%$ & $\sim12.4\tau$ & Yes \\
Marginal & $(0.95,1.6,0.45)$ & $25.6\%$ & $\sim12.8\tau$ & Yes \\
Unstable & $(1.2,2.2,0.75)$ & $45.1\%$ & not settled & No \\
\bottomrule
\end{tabular}
\end{table}

\begin{table}[h!]
\centering
\caption{Stable-region fraction from gain-map scans at each delay ratio.}
\label{tab:s2_map}
\begin{tabular}{cc}
\toprule
$T_d/\tau$ & Stable fraction \\
\midrule
0.0 & 88.1\% \\
0.2 & 88.0\% \\
0.5 & 72.4\% \\
0.8 & 47.5\% \\
\bottomrule
\end{tabular}
\end{table}

\begin{figure}[h!]
\centering
\includegraphics[width=0.9\linewidth]{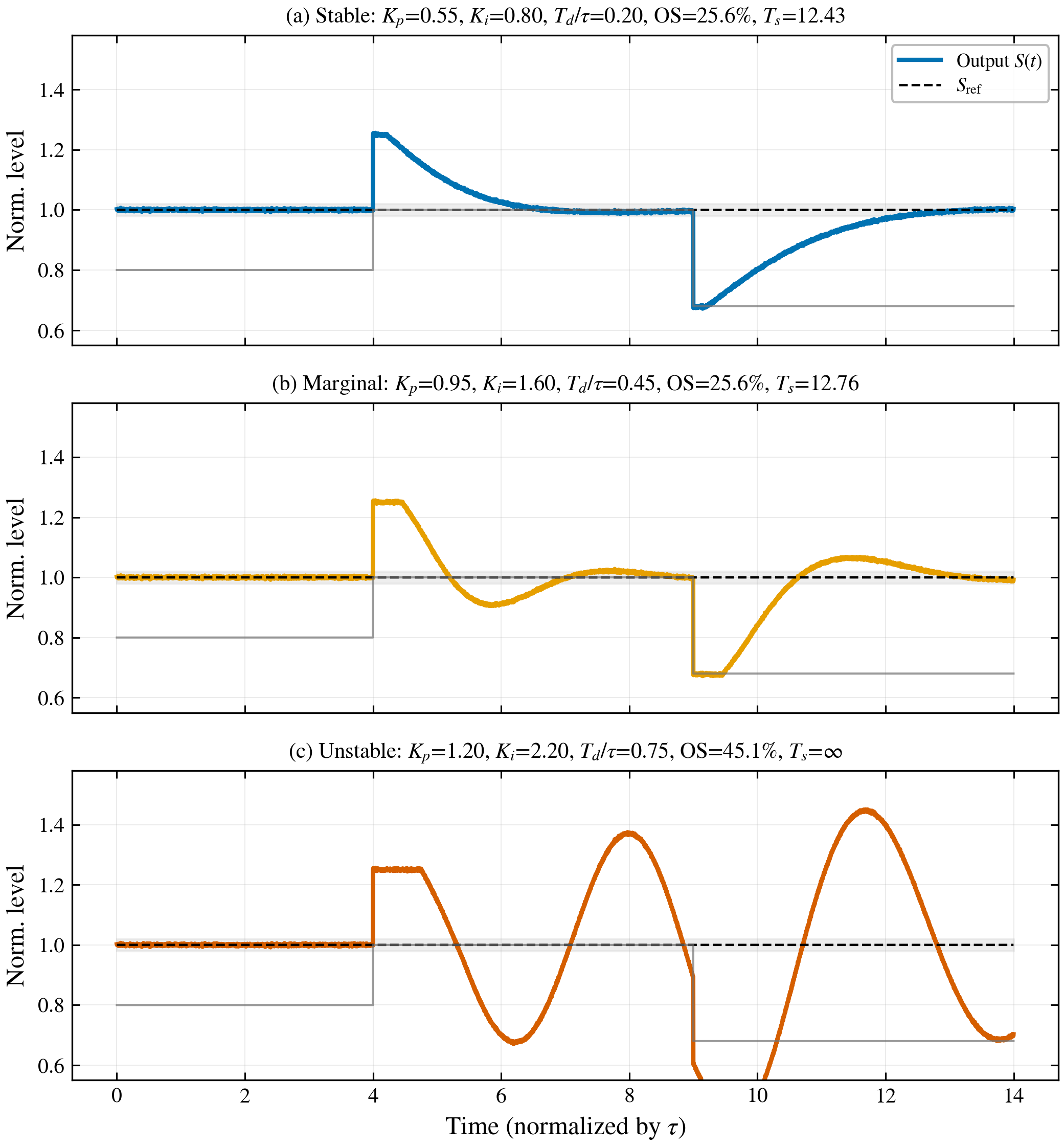}
\caption{Step-response examples of the delayed PI normalization loop.}
\label{fig:s2_step}
\end{figure}

\begin{figure}[h!]
\centering
\includegraphics[width=0.9\linewidth]{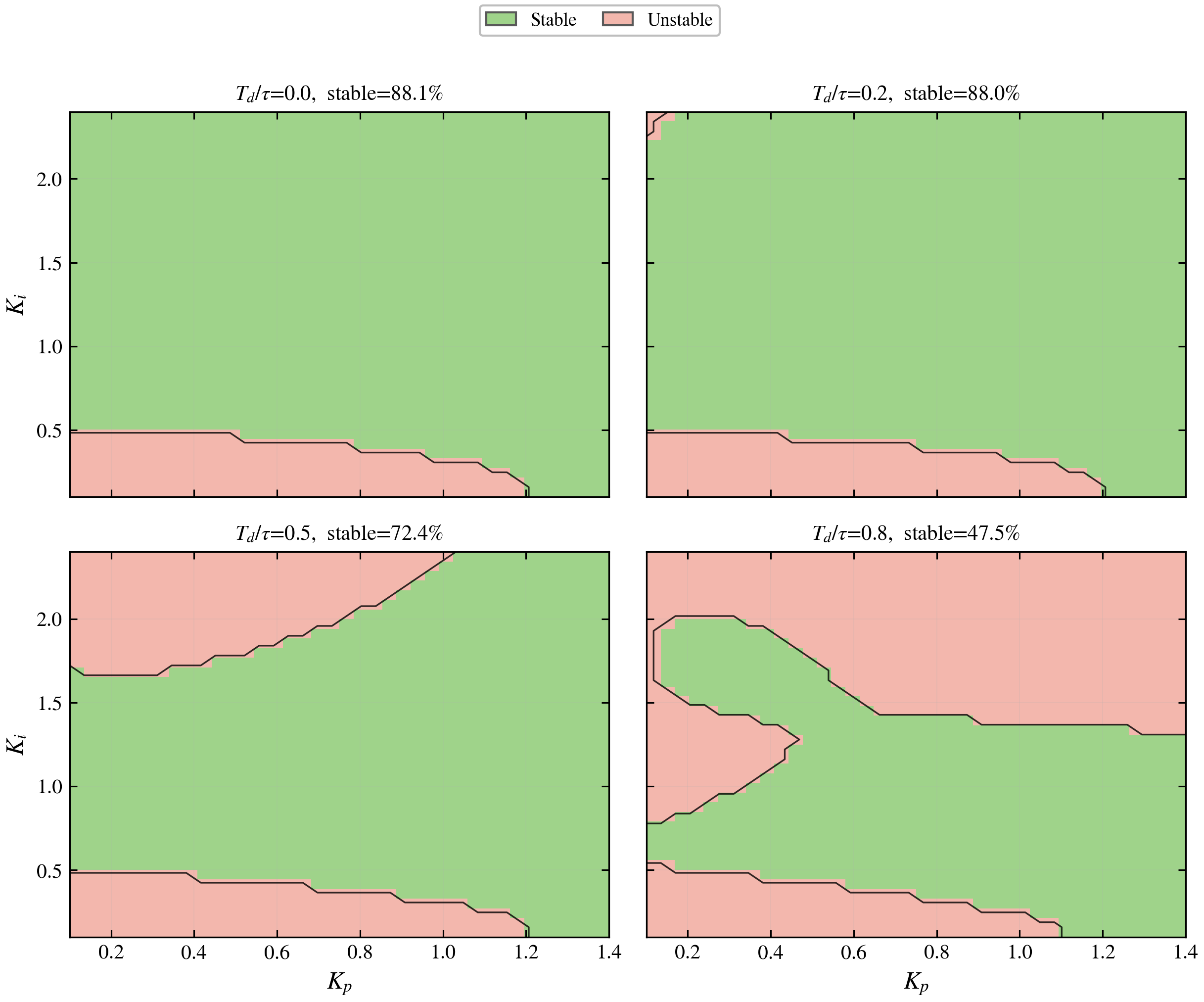}
\caption{Delay-dependent stability maps over scanned $(K_p,K_i)$ ranges.}
\label{fig:s2_map}
\end{figure}

\clearpage
\section*{S10. Reproducibility}
Scripts used for this Supplementary validation:
\begin{itemize}
\item \texttt{scripts/nonideality\_montecarlo.py}
\item \texttt{scripts/feedback\_loop\_validation.py}
\item \texttt{scripts/extract\_logit\_range\_effective.py}
\item \texttt{scripts/analyze\_softmax\_clipping\_validity.py}
\end{itemize}
Public code repository: \url{https://github.com/hyoseokp/MRR-AEF} (commit \texttt{585e695}).
Empirical extraction outputs are stored under:
\begin{itemize}
\item \texttt{paper/empirical\_L\_v3/}
\end{itemize}



\putbib[refs_new]

\end{bibunit}
\end{document}